\documentclass[11pt,reqno,tikz]{amsart}
\usepackage{amsmath,amsthm,amssymb,bm,bbm,dsfont,braket,esint,nicefrac}
\setcounter{tocdepth}{3}
\usepackage[mathscr]{eucal}
\usepackage{amsaddr}
\usepackage{upgreek}
\usepackage[left=2cm,right=2cm,top=2cm,bottom=2cm]{geometry}
\usepackage{mathtools}\mathtoolsset{centercolon}
\usepackage[nospace,noadjust]{cite}


\usepackage[shortlabels]{enumitem}
\usepackage{setspace}
\pagestyle{plain}
\usepackage{graphicx}
\usepackage{verbatim}
\usepackage{float}
\usepackage{placeins}
\usepackage{array}
\usepackage{booktabs}
\usepackage{multicol,multirow,tabularx}
\usepackage{threeparttable}
\usepackage[update,prepend]{epstopdf}
\usepackage{amsfonts,amssymb,dsfont,xfrac,comment}
\usepackage[abs]{overpic}


\makeatletter
\def\adl@drawiv#1#2#3{%
	\hskip0
	\tabcolsep
	\xleaders#3{#2 0\@tempdimb #1{1}#2 0.5\@tempdimb}%
	#2\z@ plus1fil minus1fil\relax
	\hskip0\tabcolsep}
\newcommand{\cdashlinelr}[1]{%
	\noalign{\vskip\aboverulesep
		\global\let\@dashdrawstore\adl@draw
		\global\let\adl@draw\adl@drawiv}
	\cdashline{#1}
	\noalign{\global\let\adl@draw\@dashdrawstore
		\vskip\belowrulesep}}
\makeatother

\usepackage[usenames,dvipsnames]{color}
\usepackage{colortbl}
\usepackage{arydshln}
\usepackage[hidelinks,linktocpage=true]{hyperref}
\hypersetup{
	unicode=false,          
	pdftoolbar=true,        
	pdfmenubar=true,        
	pdffitwindow=false,     
	pdfstartview={FitH},    
	pdftitle={My title},    
	pdfauthor={Author},     
	pdfsubject={Subject},   
	pdfcreator={Creator},   
	pdfproducer={Producer}, 
	pdfkeywords={keyword1} {key2} {key3}, 
	pdfnewwindow=true,      
	colorlinks=true,        
	linkcolor=Red,          
	citecolor=ForestGreen,  
	filecolor=Magenta,      
	urlcolor=BlueViolet,    
}
\usepackage{doi}
\usepackage{url}
\usepackage{caption, subcaption}
\usepackage{enumitem}
\usepackage{shadethm}
\usepackage{tikz,pgf}
\usepackage{witharrows}

\usepackage{cleveref}

\captionsetup[subfigure]{subrefformat=simple,labelformat=simple}

\newcommand{\Renyi}{{}R\'{e}nyi{ }}

\makeatletter
\ifx\@NODS\undefined%

\let\mathbb=\mathds
\else%
\fi
\makeatother



\def\d{{\text {\rm d}}}

\def\E{\mathsf{E}}
\def\Z{\mathsf{Z}}



\DeclareMathOperator{\Tr}{Tr}

\DeclareMathOperator{\I}{\mathbf{1}}

\DeclareMathOperator{\e}{\mathrm{e}}

\def\X{\mathsf{X}}

\def\A{\mathsf{A}}
\def\B{\mathsf{B}}
\def\C{\mathsf{C}}

\def\R{\mathsf{R}}

\newcommand{\proj}[1]{\left\{#1\right\}}

\newcommand{\be}{{\mathbf e}}

\newcommand{\pl}{\hspace{.1cm}}

\def\0{{\mathbf{0}}}
\def\1{{\mathbf{1}}}
\def\2{{\mathbf{2}}}
\def\3{{\mathbf{3}}}
\def\4{{\mathbf{4}}}
\def\5{{\mathbf{5}}}
\def\6{{\mathbf{6}}}

\def\7{{\mathbf{7}}}
\def\8{{\mathbf{8}}}
\def\9{{\mathbf{9}}}


\def\be{\begin{equation}}
\def\ee{\end{equation}}
\def\bea{\begin{eqnarray}}
\def\eea{\end{eqnarray}}

\def\eps{\varepsilon}




\newcommand{\id}{\operatorname{id}}

\theoremstyle{plain}
\newshadetheorem{theo}{Theorem}
\newshadetheorem{prop}{Proposition}[section]
\newshadetheorem{lemm}{Lemma}[section]

\newshadetheorem{theorem}{Theorem}[section]

\theoremstyle{definition}
\newtheorem{defn}{Definition} 

\theoremstyle{remark}
\newtheorem{remark}{Remark}[section]



\makeatletter
\newcommand{\opnorm}{\@ifstar\@opnorms\@opnorm}
\newcommand{\@opnorms}[1]{%
	$\left|\mkern-1.5mu\left|\mkern-1.5mu\left|
	#1
	\right|\mkern-1.5mu\right|\mkern-1.5mu\right|$
}
\newcommand{\@opnorm}[2][]{%
	\mathopen{#1|\mkern-1.5mu#1|\mkern-1.5mu#1|}
	#2
	\mathclose{#1|\mkern-1.5mu#1|\mkern-1.5mu#1|}
}
\makeatother

%
%


\usepackage{tikz}
\usetikzlibrary{arrows.meta} 
\tikzset{>={Latex[length=4,width=4]}} 
\usetikzlibrary{calc}

\colorlet{mylightblue}{blue!5!white}
\colorlet{mydarkblue}{blue!30!black}
\colorlet{myblue}{blue!50!black}
\colorlet{myred}{red!50!black}
\colorlet{mydarkred}{red!30!black}
\colorlet{mydarkgreen}{green!30!black}

\newcommand{\sh}{\kern-0.08em$^\textbf{\#}$\hspace{-3pt}}
\renewcommand{\b}{\kern-0.06em$\flat$}

\begin{document}

\let\origmaketitle\maketitle
\def\maketitle{
	\begingroup
	\def\uppercasenonmath##1{} 
	\let\MakeUppercase\relax 
	\origmaketitle
	\endgroup
}

\title{\bfseries \Large{ 
Sharp estimates of quantum covering problems \\via a novel trace inequality
}}

\author{ \normalsize 
{Hao-Chung Cheng}$^{1\text{--}5}$,
{Li Gao}$^6$,
{Christoph Hirche}$^7$,
{Hao-Wei Huang}$^8$,
and
{Po-Chieh Liu}$^{1,2}$
}
\address{\small  	
$^1$Department of Electrical Engineering and Graduate Institute of Communication Engineering,\\ National Taiwan University, Taipei 106, Taiwan (R.O.C.)\\
$^2$Department of Mathematics, National Taiwan University\\
$^3$Center for Quantum Science and Engineering,  National Taiwan University\\
$^4$Hon Hai (Foxconn) Quantum Computing Center, New Taipei City 236, Taiwan (R.O.C.)\\
$^5$Physics Division, National Center for Theoretical Sciences, Taipei 10617, Taiwan (R.O.C.)\\
$^6$School of Mathematics and Statistics, Wuhan University, Wuhan, 430072, China\\
$^7$Institute for Information Processing (tnt/L3S), Leibniz Universit\"at Hannover, Germany\\
$^8$Department of Mathematics, National Tsing Hua University, Hsinchu 300, Taiwan (R.O.C.)
}


\date{\today}

\begin{abstract}
In this paper, we prove a novel trace inequality involving two operators.
As applications, we sharpen the one-shot achievability bound on the relative entropy error in a wealth of quantum covering-type problems, such as soft covering, privacy amplification, convex splitting, quantum information decoupling, and quantum channel simulation by removing some dimension-dependent factors.
Moreover, the established one-shot bounds extend to infinite-dimensional separable Hilbert spaces as well.
The proof techniques are based on the recently developed operator layer cake theorem and an operator change-of-variable argument, which are of independent interest.
\end{abstract}

\maketitle
\tableofcontents

\section{Introduction} \label{sec:introduction}

One of the main research topics in quantum information theory and mathematical physics is to provide tight error estimates to information processing tasks or a physical process.
Nonetheless, due to the noncommutative nature of quantum mechanics, many scalar inequalities do not immediately extend to the matrix setting.
Hence, finding trace inequalities or operator inequalities becomes a crucial research direction in matrix analysis and noncommutative analysis as they serve as fundamental tools for a variety of applications in mathematical physics; see, e.g.~\cite{Car09, Lieb_book_2002, Simon_book_2019, Lieb_book_2022}.

In this paper, we establish a novel trace inequality involving two positive operators $A$ and $B$:
\begin{align} \label{eq:goal}
    \Tr\left[ A \left( \log (A+B) - \log B \right) \right] &\leq
	s^s(1-s)^{1-s} \int_0^\infty\Tr\left[ \left( A (B + t \I )^{-1} \right)^{1+s}  \right]\d t
    \\
    &\leq \left(\frac{1-s}{s}\right)^{1-s} \Tr\left[ \left( B^{-\frac{s}{2(1+s)}} A B^{-\frac{s}{2(1+s)}} \right)^{1+s} \right], \quad \forall\, s\in (0,1].
\end{align}
These upper bounds naturally connect to \Renyi divergences, which are frequently used to give one-shot bounds in information theory. The second upper bound can be restated in terms of the sandwiched \Renyi divergence~\cite{MDS+13, WWY14}, 
\begin{align}
    \widetilde{D}_\alpha(\rho\|\sigma) = \frac{1}{\alpha-1}\log\Tr\left[ \left( \sigma^{\frac{1-\alpha}{2\alpha}} \rho \sigma^{\frac{1-\alpha}{2\alpha}} \right)^{\alpha} \right]
\end{align}
which gives asymptotically optimal error exponents for numerous information theoretic problems. In the one-shot setting, one strives for the tightest bound and can use the first inequality which connects to the \Renyi divergence
\begin{align}
    D_\alpha(\rho\|\sigma) = \frac{1}{\alpha-1}\log\left[(\alpha-1)\int_0^\infty \Tr\left[ \left( \rho (\sigma+t\I)^{-1} \right)^{\alpha} \right] \d t \right]\label{Eq:int-renyi}
\end{align}
This variant was introduced in the form of an integral representation in~\cite{hirche2023quantum} and conjectured to take the above form for $\alpha>1$ in~\cite{beigi2025some}, which was recently proven in~\cite{preparation2}. This divergence gives tighter bounds in the sense that, 
\begin{align}
    D_\alpha(\rho\|\sigma) \leq \widetilde{ D}_\alpha(\rho\|\sigma)\quad\forall\alpha>1, \label{Eq:compRenyi}
\end{align}
as shown in~\cite{beigi2025some,preparation2}. Note that the divergence in Equation~\eqref{Eq:int-renyi} is not additive, however it becomes equal to the sandwiched \Renyi divergence in the limit of many copies~\cite{hirche2023quantum}. 

\smallskip

In the next step, we move to applications of the above inequality in Equation~\eqref{eq:goal}. We show that it can be used to sharpen the one-shot error estimate, in terms of a quantum relative entropy criterion or the purified distance, for 
a series of \emph{quantum covering-type problems}, including 
\begin{itemize}
    \item soft covering (Section~\ref{sec:covering}), 
    \item privacy amplification against quantum side information (Section~\ref{sec:PA}),
    \item convex splitting (Section~\ref{sec:splitting}),
    \item catalytic quantum information decoupling (Section~\ref{sec:decoupling}), 
    \item and entanglement-assisted quantum channel simulation (Section~\ref{sec:simulation}).
\end{itemize}
Our bounds improve on the previous results in the literature in three precise ways: 
\begin{enumerate}
    \item First and most notably, our result removes a factor of the spectral size of $\B$ (the number of distinct eigenvalues) in all aforementioned applications.
This factor in the $n$-fold i.i.d. scenario (i.e., $A\leftarrow A^{\otimes n}$, $B\leftarrow B^{\otimes n}$) grows at most as $(n+1)^{\text{dim} \, \mathcal{H}}$, which does not affect the exponential decay rate but can be significant in the one-shot setting.
More importantly, without the dimension-dependent factor, the established error estimates via the trace inequality \eqref{eq:goal} extend to infinite-dimensional separable Hilbert space as well. This is of practical importance as one may not impose the finite-dimension assumption on a quantum eavesdropper in the application of privacy amplification, for example. 
\item Our results are stated using the recent \Renyi divergence in Equation~\eqref{Eq:int-renyi}, which improves the bounds compared to the sandwiched \Renyi divergence by virtue of Equation~\eqref{Eq:compRenyi}. 
\item Finally, a small constant factor improvement is achieved by including the additional constant $c_s=s^s(1-s)^{1-s} \leq 1$. 
\end{enumerate}

The technical ingredient for proving \eqref{eq:goal} is the layer cake theorem recently established  in \cite{preparation}.
We first express the left-hand side of \eqref{eq:goal} in terms of an integral representation via the fundamental theorem of calculus.
Then, by essentially employing only the \emph{scalar} Young inequality, we obtain the desired right-hand side.
We consider such a proof technique to be new and yield potential applications elsewhere.

This paper is organized as follows.
In Section~\ref{sec:main}, we present the proof of the key trace inequality \eqref{eq:goal}.
In Section~\ref{sec:applictaion}, we demonstrate the applications of \eqref{eq:goal} in various quantum information processing tasks.


\section{Main Result: A Novel Trace Inequality} \label{sec:main}

\begin{theo} \label{theo:sharp_one-shot}Let $A$ and $B$ be positive semi-definite trace-class operators on a infinite-dimensional separable Hilbert space. Suppose the support of $A$ is contained in support of $B$ and  $\Tr\left[ A \left( \log(A+B) - \log B \right) \right] < \infty$. Then
\begin{align}
	\Tr\left[ A \left( \log(A+B) - \log B \right) \right]
    &\leq
	c_s \int_0^\infty\Tr\left[ \left( A (B + t \I )^{-1} \right)^{1+s}  \right]\d t
    \\
    &\leq\frac{c_s}{s}\Tr\left[ \left( B^{-\frac{s}{2(1+s)}} A B^{-\frac{s}{2(1+s)}} \right)^{1+s} \right], \quad \forall\, s\in (0,1],
    \label{eq:sharp_one-shot}
\end{align}
where $c_s=s^s(1-s)^{1-s} \leq 1$ for all $s\in[0,1]$. 
\end{theo}
\begin{proof} \label{proof:sharp_one-shot}
We first prove our claim for finite-dimensional Hilbert spaces, and then employ the finite-rank approximations to extend our results to infinite dimensions \cite[\S III.C]{Mos23}.
Note that the support of $A$ must be contained in that of $B$; otherwise, the finiteness hypothesis of $\Tr\left[ A \left( \log(A+B) - \log B \right) \right]$ would be violated.

After confining the space to the positive support of $B$, we may suppose $B>0$.
By the recently established operator layer cake theorem given in Theorem~\ref{theo:Dlog_formula} below with $X\leftarrow A+B$, $Y\leftarrow B$ and the fundamental theorem of calculus, we have
\begin{align*}
	\log (A+B) - \log B
	&= \int_{0}^1 \mathrm{D}\log\left[ B + \beta A \right] (A) \, \d \beta
	\\
&= \int_0^1\int_{0}^\infty  \left\{ u (B+\beta A)< A   \right\}  \, \d u\,\d\beta 
    \\
&\overset{\textnormal{(a)}}{=} \int_0^1\int_{0}^{\nicefrac{1}{\beta}}  \left\{ u (B+\beta A)< A   \right\}  \, \d u\,\d\beta 
    \\
&\overset{\textnormal{(b)}}{=} \int_0^1\int_{0}^\infty  \left\{ A > \gamma B  \right\} \frac{1}{(1+\beta\gamma)^2} \, \d \gamma\,\d\beta 
	\\
&\overset{\textnormal{(c)}}{=} \int_0^\infty \left\{ A > \gamma B \right\} \frac{1}{\gamma + 1} \, \d \gamma.
\end{align*}
Here, in (a), we have, for $u\ge \nicefrac{1}{\beta}$,
\[ u (B+\beta A)\ge  \frac{1}{\beta} (B+\beta A) > A\pl. \]
In (b), we used the change of variable $\gamma=\frac{u}{1-u\beta}=\frac{1}{\beta}(\frac{1}{1-\mu\beta}-1), u\in [0,\nicefrac{1}{\beta}]$. 
In (c), we calculate: 
\begin{align*}&\int_{0}^1\frac{1}{(1+\beta\gamma)^2 }\,\d\beta =-\frac{1}{\gamma}\frac{1}{(1+\beta\gamma)}\Big\vert_{\beta=0}^{\beta=1}=-\frac{1}{\gamma(1+\gamma)}+\frac{1}{\gamma}= \frac{1}{1+\gamma} .
\end{align*}

By Young's inequality,
\[
\gamma+1=(1-s)\cdot\frac{\gamma}{1-s}+s\cdot\frac{1}{s}\geq \left(\frac{\gamma}{1-s}\right)^{1-s}\left(\frac{1}{s}\right)^s,
\]
which translates to
\[
\frac{1}{\gamma+1}\leq c_s\gamma^{s-1}.
\]

As a result,
\begin{align}
    \int_0^{\infty} \left\{ A > \gamma B \right\} \frac{1}{\gamma+1} \, \d \gamma 
    &\leq c_s\int_0^{\infty} \left\{ A > \gamma B \right\} \gamma^{s-1} \, \d \gamma\\
    &=c_s\int_0^{\infty} \frac{1}{B+t\I}\left(A\frac{1}{B+t\I}\right)^{s}\, \d t.
\end{align}
The equality comes from the change of variables (see Theorem~\ref{theo:change-of-variable} below).

Hence,
\begin{align*}
    \Tr\left[ A \left( \log(A+B) - \log B \right) \right]
    &\leq c_s\Tr\left[ A \int_0^{\infty} \frac{1}{B+t\I}\left(A\frac{1}{B+t\I}\right)^{s}\, \d t \right]\\
    &=c_s \int_0^\infty\Tr\left[ \left( A (B + t \I )^{-1} \right)^{1+s}  \right]\d t\\
    &=c_s \int_0^\infty\Tr\left[ \left( (B + t \I )^{-\frac{1}{2}} A (B + t \I )^{-\frac{1}{2}} \right)^{1+s}  \right]\d t\\
    &\leq \frac{c_s}{s}\Tr\left[ \left(B^{-\frac{s}{2(1+s)}}AB^{-\frac{s}{2(1+s)}} \right)^{1+s} \right].
\end{align*}
The last inequality comes from the Araki--Lieb--Thirring inequality (see e.g., \cite[Proposition~3.10]{beigi2025some} but for $\alpha = 1+s>1$).

We now extend \eqref{eq:sharp_one-shot} to infinite-dimensional Hilbert spaces.
Let $(P_n)_{n\in\mathds{N}}$ be a sequence of finite rank spectral projections of $B$ such that $P_{n-1} \leq P_n$ and $P_n \nearrow \I$ in the strong operator topology.
Denote by $A_n = P_n A P_n$, $B_n = P_n B P_n$, and define
$D(A \Vert B) := \Tr\left[ A (\log A - \log B )] +\Tr[ B - A\right]$ as the Lindblad extension of relative entropy to positive semi-definite operators.
We start with the left-hand side:
\begin{align}
&\Tr\left[ A \left( \log (A+B) - \log B \right)\right]
\notag
\\
&= \Tr\left[ (A+B) \left( \log (A+B) - \log B \right) \right] - \Tr\left[ B \left( \log (A+B) - B \log B \right) \right]
\\
&= D(A+B\Vert B) + D(B\Vert A+B)
\\
&= \lim_{n\to\infty} \left\{ D\left( A_n+B_n \Vert B_n \right) 
+ D( B_n \Vert A_n+B_n ) \right\},
\end{align}
where we used Lemma~\ref{lemm:finit-rank} for approximating $D(\cdot\Vert\cdot)$ in the last line. Note that here the first equality is well-defined as we never run into ``$\infty-\infty$", because $D(B\Vert A+B)$ is always finite as $B\le A+B$.

On the other hand, for any integer $n\in\mathds{N}$,
we have shown 
\begin{align*}
    D\left( A_n+B_n \Vert B_n \right) 
    + D\left( B_n \Vert A_n+B_n \right) 
    &=
    \Tr\left[ A_n \left( \log(A_n+B_n) - \log B_n \right) \right]
    \\
    &\leq 
    c_s \Tr\left[ \left( (B_n + t \I )^{-\frac{1}{2}} A_n (B_n + t \I )^{-\frac{1}{2}} \right)^{1+s}  \right]
    \\
    &\leq \frac{c_s}{s}\Tr\left[ \left(B_n^{-\frac{s}{2(1+s)}} A_n B_n ^{-\frac{s}{2(1+s)}} \right)^{1+s} \right]
    \\
    &=: \frac{c_s}{s} \widetilde{Q}_{1+s}(A_n\Vert B_n), \quad \forall\, s\in (0,1].
\end{align*}
By applying Lemma~\ref{lemm:finit-rank-integral} for approximating the intermediate term $\Tr[ ( (B + t \I )^{-\nicefrac{1}{2}} A (B + t \I )^{-\nicefrac{1}{2}} )^{1+s}  ]$ and 
Lemma~\ref{lemm:finit-rank} again for approximating $\widetilde{Q}_{1+s}(\cdot\Vert\cdot)$, 
we conclude the proof.
\end{proof}

\begin{theo}[Operator layer cake {\cite[Theorem B.1]{preparation}}] \label{theo:Dlog_formula}
For any positive definite operator $X$ and any positive semi-definite operator $Y$ on a finite-dimensional Hilbert space,
the following representation holds:
\begin{align} \label{eq:Dlog_formula}
\mathrm{D} \log[X](Y) = \int_{0}^{\infty} \{ uX < Y \} \d u,
\end{align}
where $ \mathrm{D} \log[X](Y) $ is the directional derivative of the operator logarithm at $X$ with direction $Y$,
and $ \{ uX < Y \} \equiv \{ Y - uX > 0 \} $ denotes the projection onto the positive part of $Y-uX$.
\end{theo}

\begin{theo}[Operator change of variables {\cite[Theorem~C.1]{preparation}}]  \label{theo:change-of-variable}
Let $A$ and $B$ be finite-dimensional positive semi-definite operators satisfying $r := \|A B^{-1} \|_{\infty} < \infty$.
Then, for any Lebesgue-integrable function $h$ on $[0,r]$,
\begin{align} \label{eq:change-of-variable}
	\int_0^r \proj{ A > \gamma B } h(\gamma) \, \d \gamma
    &= \int_0^\infty \frac{1}{B+t\I} A \frac{1}{B+t\I} h\left( A \frac{1}{B+t\I} \right) \d t.
\end{align}
\end{theo}

\begin{remark}
The operators \( A(B + t\I)^{-1} \) and \( (B + t\I)^{-1}A \) are diagonalizable and have the same spectrum as \( (B + t\I)^{-1/2} A (B + t\I)^{-1/2} \) for all $t\geq 0$, since they are all similar. This implies that
\[
\frac{1}{B+t\I} h\left(A\frac{1}{B+t\I}\right)
= \frac{1}{\sqrt{B+t\I}} h\left(\frac{1}{\sqrt{B+t\I}} A \frac{1}{\sqrt{B+t\I}}\right)\frac{1}{\sqrt{B+t\I}}
= h\left(\frac{1}{B+t\I} A\right)\frac{1}{B+t\I}.
\]
Hence, each integrand in the right-hand side of \eqref{eq:change-of-variable} is self-adjoint, i.e.,
\begin{align}
    \frac{1}{\sqrt{B+t\I}} \underbrace{\frac{1}{\sqrt{B+t\I}} A  \frac{1}{\sqrt{B+t\I}} h\left(\frac{1}{\sqrt{B+t\I}} A \frac{1}{\sqrt{B+t\I}}\right)}_{ \text{self-adjoint} } \frac{1}{\sqrt{B+t\I}}.
\end{align}
\end{remark}

\section{Applications} \label{sec:applictaion}

In this section, we demonstrate how the established trace inequality in Theorem~\ref{theo:sharp_one-shot} sharpens the existing one-shot achievability bounds in various quantum information-theoretic tasks including classical-quantum soft covering (Section~\ref{sec:covering}),
privacy amplification against quantum side information (Section~\ref{sec:PA}),
convex splitting (Section~\ref{sec:splitting}),
quantum information decoupling (Section~\ref{sec:decoupling}),
as well as quantum channel simulation (Section~\ref{sec:simulation}).
We will express the error estimates in terms of the integral \Renyi divergence \cite{hirche2023quantum, beigi2025some}
\begin{align}
    D_{\alpha}(\rho\Vert\sigma)
    \coloneq \frac{1}{\alpha-1}\log (\alpha-1) \int_0^\infty \Tr\left[ \left( B + t\I\right)^{-\nicefrac{1}{2}} A \left( B + t\I\right)^{-\nicefrac{1}{2}}  \right]\d t
\end{align}
or the sandwiched \Renyi divergence \cite{MDS+13, WWY14}:
\begin{align}
    \widetilde{D}_{\alpha} (\rho\Vert\sigma) 
    \coloneq \frac{1}{\alpha-1}\log \Tr\left[ \left( \sigma^{-\frac{1}{2\alpha}} \rho  \sigma^{-\frac{1}{2\alpha}} \right)^{\alpha} \right], \quad \alpha > 1
\end{align}
where $\rho$ and $\sigma$ are positive semi-definite trace-class operators with $\Tr[\rho] = 1$.
There, the error criterion is either under the purified distance 
\begin{align}
    P(\rho, \sigma) := \sqrt{  1 - \e^{- \widetilde{D}_{1/2}(\rho\Vert\sigma)}  },
\end{align}
or the quantum relative entropy \cite{Ume56}:
\begin{align}
D(\rho\Vert\sigma) = \lim_{\alpha \searrow 1 } \widetilde{D}_{\alpha} (\rho\Vert\sigma) 
= \Tr\left[ \rho ( \log \rho - \log \sigma ) \right].
\end{align}

\subsection{Soft Covering} \label{sec:covering}


\begin{defn}[Classical-quantum soft covering with non-uniform randomness] \label{defn:covering}
	Let $\rho_{\X\B} = \sum_{x\in\X} p_{\X}(x) |x\rangle \langle x|_{\X} \otimes \rho_{\B}^x$ be a classical-quantum state, where $p_{\X}$ is a probability distribution on a finite alphabet $\X$, and each $\rho_{\B}^x$ is a density operator (i.e.~a positive semi-definite operator with unit trace), and the marginal state on system $\B$ is $\rho_{\B} = \sum_{x\in\X} p_{\X} (x) \rho_{\B}^x$.
    Let $p_{\mathsf{M}}$ be a probability distribution on an alphabet $\mathsf{M}$.
	\begin{enumerate}[1.]
		\item Alice has classical registers $\mathsf{M}$ and $\X$.
		
		\item Alice samples from the set $\mathsf{M}$ according distribution $p_{\mathsf{M}}$.
		
		\item For each sample $m\in\mathsf{M}$, Alice encodes it to a codeword $x(m)$ in $\X$.
		
		\item Alice queries the classical-quantum channel $x\mapsto \rho_{\B}^x$ with the codeword $x(m)$.
	\end{enumerate}
	
	An $(M, \eps)$-resolvability code is a codebook $\{x(m)\}_{m\in\mathsf{M}}$ satisfying $|\mathsf{M}| = M$ such that the codebook-induced state $\mathds{E}_{m \sim p_{\mathsf{M}}}[\rho_{\B}^{x(m)}]$ is at least $\varepsilon$-close to the target state $\rho_{\B}$ in terms of relative entropy, i.e.~
	\begin{align} \notag
		 D\left(\mathds{E}_{m \sim p_{\mathsf{M}}} [\rho_{\B}^{x(m)}] \Vert \rho_{\B} \right)
		\leq \varepsilon.
	\end{align}
\end{defn}

We adopt the random coding as follows.
For each $m\in\mathsf{M}$, Alice chooses the codeword $x(m)$ according to the input distribution $p_{\X}$ pairwise independently, i.e., the random codeword $x(m)$ is independent of $x(\bar{m})$ for $m\neq \bar{m}$.
Channel resolvability via random coding is called \emph{soft covering}; see \cite{Hay15, CG22, SGC22b, HCG24}.

\begin{prop} \label{prop:covering}
For any classical-quantum state $\rho_{\X\B} = \sum_{x\in\X} p_{\X}(x) |x\rangle \langle x|_{\X} \otimes \rho_{\B}^x$ and distribution $p_{\mathsf{M}}$ given in Definition~\ref{defn:covering}, the random coding error satisfies
\begin{align} \label{eq:covering}
\mathds{E}_{x(m)\sim p_{\X}} D\left(\mathds{E}_{m \sim p_{\mathsf{M}}} [\rho_{\B}^{x(m)}] \Vert \rho_{\B} \right)
\leq \frac{c_{\alpha-1}}{\alpha-1} \e^{ - (\alpha-1) \left[ H_{\alpha}(\mathsf{M})_{p} - {D}_{\alpha}\left( \rho_{\X\B} \Vert \rho_{\X} \otimes \rho_{\B} \right) \right]  }, \quad \forall\,\alpha \in (1,2],
\end{align}
where $H_{\alpha}(\mathsf{M})_{p} := \frac{1}{1-\alpha}\log \sum_m p_{\mathsf{M}}(m)^{\alpha}$ is the R\'enyi entropy.
\end{prop}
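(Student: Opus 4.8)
The plan is to bound the left-hand side of \eqref{eq:covering} directly. Write $\bar\rho_\B := \mathds{E}_{m\sim p_\M}[\rho_\B^{x(m)}] = \sum_m p_\M(m)\,\rho_\B^{x(m)}$ for the (random) codebook-induced state, so that $\mathds{E}_{x(m)\sim p_\X}[\bar\rho_\B] = \rho_\B$; set $\alpha = 1+s$ with $s\in(0,1]$. First I would unpack the target: since $\rho_{\X\B}(\rho_\X\otimes\rho_\B + t\I)^{-1} = \sum_{x}|x\rangle\!\langle x|_\X\otimes\, p_\X(x)\rho_\B^x(p_\X(x)\rho_\B + t\I)^{-1}$, a change of variables $t\mapsto p_\X(x)t$ in \eqref{Eq:int-renyi} gives
\[
\tfrac1{\alpha-1}e^{(\alpha-1)D_\alpha(\rho_{\X\B}\|\rho_\X\otimes\rho_\B)} = \sum_x p_\X(x)\int_0^\infty\Tr\big[(\rho_\B^x(\rho_\B+t\I)^{-1})^{\alpha}\big]\,\d t,
\]
while $e^{-(\alpha-1)H_\alpha(\M)_p} = \sum_m p_\M(m)^{\alpha}$; hence the right-hand side of \eqref{eq:covering} equals $c_s\big(\sum_m p_\M(m)^{1+s}\big)\sum_x p_\X(x)\int_0^\infty\Tr[(\rho_\B^x(\rho_\B+t\I)^{-1})^{1+s}]\,\d t$, and it suffices to bound $\mathds{E}[D(\bar\rho_\B\|\rho_\B)]$ by this quantity.

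Next I would split $D(\bar\rho_\B\|\rho_\B) = \sum_m p_\M(m)\Tr[\rho_\B^{x(m)}(\log\bar\rho_\B - \log\rho_\B)]$ and treat each summand using the leave-one-out reference $B_m := \bar\rho_\B^{(m)} + p_\M(m)\rho_\B$, where $\bar\rho_\B^{(m)} := \sum_{m'\neq m}p_\M(m')\rho_\B^{x(m')}$. Then $A_m := p_\M(m)\rho_\B^{x(m)}$ obeys $A_m + B_m = \bar\rho_\B + p_\M(m)\rho_\B \ge \bar\rho_\B$, so operator monotonicity of $\log$ gives
\begin{align*}
\Tr[\rho_\B^{x(m)}(\log\bar\rho_\B-\log\rho_\B)]
&\le \tfrac1{p_\M(m)}\Tr[A_m(\log(A_m+B_m)-\log B_m)] + \Tr[\rho_\B^{x(m)}(\log B_m-\log\rho_\B)].
\end{align*}
Conditioning on $x(m)$ and using that the codewords are \emph{pairwise} independent, $\mathds{E}[B_m\mid x(m)] = (1-p_\M(m))\rho_\B + p_\M(m)\rho_\B = \rho_\B$; operator concavity of $\log$ then yields $\mathds{E}[\log B_m\mid x(m)] \le \log\rho_\B$, so the second term has nonpositive expectation. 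To the first term I would apply the new trace inequality \eqref{eq:sharp_one-shot} to the pair $(A_m,B_m)$, getting $\Tr[A_m(\log(A_m+B_m)-\log B_m)] \le c_s\int_0^\infty\Tr[(A_m(B_m+t\I)^{-1})^{1+s}]\,\d t$, i.e. $\tfrac1{p_\M(m)}(\,\cdots) \le c_s\,p_\M(m)^{s}\int_0^\infty\Tr[(\rho_\B^{x(m)}(B_m+t\I)^{-1})^{1+s}]\,\d t$. Summing over $m$ with weight $p_\M(m)$ and taking expectations,
\[
\mathds{E}[D(\bar\rho_\B\|\rho_\B)] \le c_s\sum_m p_\M(m)^{1+s}\,\mathds{E}\!\left[\int_0^\infty\Tr\!\big[(\rho_\B^{x(m)}(B_m+t\I)^{-1})^{1+s}\big]\,\d t\right].
\]

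The main obstacle, and the final step, is to show that the inner expectation is at most (essentially) $\mathds{E}_{x\sim p_\X}[\int_0^\infty\Tr[(\rho_\B^x(\rho_\B+t\I)^{-1})^{1+s}]\,\d t]$ — that is, that one may replace the \emph{random} reference $B_m$ (which equals $\rho_\B$ only in conditional expectation) by $\rho_\B$ inside the integral \Renyi{} expression; combined with the identities from the first paragraph this closes the proof. I would attack it by conditioning on $x(m)$, expanding $B_m+t\I$ in its summands $\{p_\M(m')\rho_\B^{x(m')}\}_{m'\neq m}$ and $p_\M(m)\rho_\B$, and invoking a non-commutative moment estimate in the spirit of classical channel-resolvability bounds: the operator subadditivity/concavity of $X\mapsto X^{\alpha-1}$ — valid precisely because $\alpha-1\in(0,1]$, which is exactly where the restriction $\alpha\le 2$ is needed — together with pairwise independence should absorb the fluctuations of $B_m$ about its mean $\rho_\B$ and leave the product-state \Renyi{} quantity. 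This is also where the improvement over the literature sits: earlier one-shot achievability proofs used the pinching inequality at this point and thereby paid a factor of the spectral size $|\spec(\rho_\B)|$, whereas \eqref{eq:sharp_one-shot} lets one handle the non-commutativity directly, keeping the bound dimension-free (and, since $c_s\le 1$, with a small constant gain).
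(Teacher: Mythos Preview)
Your decomposition and use of Theorem~\ref{theo:sharp_one-shot} are fine, but the proof has a genuine gap at exactly the step you flag as ``the main obstacle''. After applying the trace inequality with the \emph{random} reference $B_m$, you need
\[
\mathds{E}\!\left[\int_0^\infty\Tr\!\big[(\rho_\B^{x(m)}(B_m+t\I)^{-1})^{1+s}\big]\,\d t \,\Big|\, x(m)\right]
\;\le\;
\int_0^\infty\Tr\!\big[(\rho_\B^{x(m)}(\rho_\B+t\I)^{-1})^{1+s}\big]\,\d t,
\]
i.e.\ a Jensen inequality for the map $\sigma\mapsto\int_0^\infty\Tr[(\rho(\sigma+t\I)^{-1})^{1+s}]\,\d t$ in the \emph{concave} direction. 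But for $s\in(0,1]$ this map is convex in $\sigma$ (already in the scalar case it reduces to $\sigma\mapsto\rho^{1+s}\sigma^{-s}/s$), so Jensen points the wrong way. The vague appeal to operator concavity of $X\mapsto X^{s}$ does not help here: that concavity lives in the \emph{numerator}, whereas the randomness you must average sits in the \emph{denominator} $(B_m+t\I)^{-1}$. I do not see how to complete the argument along the lines you sketch.

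The fix --- and this is exactly what the paper does --- is to reorder the two tools: apply operator concavity of $\log$ \emph{before} the trace inequality, so that the reference becomes the deterministic $\rho_\B$ and no final averaging step is needed. Concretely, conditioning on $x(m)$ and using pairwise independence,
\[
\mathds{E}\big[\log\bar\rho_\B\,\big|\,x(m)\big]
\;\le\;\log\Big(p_\M(m)\rho_\B^{x(m)}+(1-p_\M(m))\rho_\B\Big)
\;\le\;\log\big(p_\M(m)\rho_\B^{x(m)}+\rho_\B\big),
\]
which upon tracing against $p_\M(m)\rho_\B^{x(m)}$ and summing over $m$ gives
\[
\mathds{E}\big[D(\bar\rho_\B\Vert\rho_\B)\big]\;\le\;\sum_m\mathds{E}_{x(m)}\Tr\!\Big[p_\M(m)\rho_\B^{x(m)}\big(\log(p_\M(m)\rho_\B^{x(m)}+\rho_\B)-\log\rho_\B\big)\Big].
\]
Now Theorem~\ref{theo:sharp_one-shot} applies directly with $A=p_\M(m)\rho_\B^{x(m)}$ and $B=\rho_\B$, and your first-paragraph identity finishes the proof. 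You actually perform precisely this concavity computation when bounding your ``second term'' $\Tr[\rho_\B^{x(m)}(\log B_m-\log\rho_\B)]$ --- the only change is to apply it to $\bar\rho_\B$ itself rather than to the auxiliary $B_m$, which makes the detour through a random reference (and the problematic final step) disappear.
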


Proposition~\ref{prop:covering} improves on \cite[Lemma 4]{Hay15} by a factor $c_{\alpha-1} \in [1/2,1]$ for $\alpha \in (1,2]$ and by removing the dimension-dependent factor $|\text{spec}(\mathcal{H}_{\B})|^{\alpha-1}$. This in turn improves mutual information leakage to quantum eavesdroppers via a classical-quantum wiretap channel by the same fashion; c.f.~\cite[(65)]{Hay15}.

If uniform randomness is available at Alice, i.e.,
$p_{\mathsf{M}}$ is a uniform distribution,  Definition~\ref{defn:covering} reduces to the conventional {classical-quantum channel resolvability} via uniform randomness, and the right-hand side of \eqref{eq:covering} becomes $\frac{c_{\alpha-1}}{\alpha-1} \e^{ - (\alpha-1) \left[ \log |\mathsf{M}| - \widetilde{D}_{\alpha}(\rho_{\X\B}\Vert \rho_{\X} \otimes \rho_{\B}) \right]  }$.
In the $n$-fold independent and identical setting where $\rho_{\X\B}\leftarrow \rho_{\X\B}^{\otimes n}$ and $|\mathsf{M}| = \exp(nR)$ with $R > I(\X:\B)_{\rho} = D(\rho_{\X\B}\Vert \rho_{\X}\otimes \rho_{\B})$, we remark that the (regularized) error exponent obtained in Proposition~\ref{prop:covering}, i.e.,
\begin{align}
\sup_{\alpha \in (1,2]} (\alpha-1) \left[ R - \widetilde{D}_{\alpha}(\rho_{\X\B}\Vert \rho_{\X} \otimes \rho_{\B}) \right]
\end{align}
is tight for the commuting case \cite[Theorem 3]{TPM17}.

\begin{proof}[Proof of Proposition~\ref{prop:covering}]
The first part of the proof essentially follows \cite[Lemma 4]{Hay15}.
Given each $m\in\mathsf{M}$ and the corresponding realization of a codeword $x(m) \in \X $, we first calculate the conditional expectation:
\begin{DispWithArrows}[displaystyle]
&\mathds{E}_{x(\bar{m})\mid x(m)} \left[ \log \sum_{\bar{m}\in \mathsf{M}} p_{\mathsf{M}}(\bar{m}) \rho_{\B}^{x(\bar{m})}\right]
\notag
\\
&= \mathds{E}_{x(\bar{m})\mid x(m)} \left[ \log \left( p_{\mathsf{M}}(m) \rho_{\B}^{x(m)} + \sum_{\bar{m}\neq m} p_{\mathsf{M}}(\bar{m}) \rho_{\B}^{x(\bar{m})} \right)\right]
\Arrow{\textnormal{$\log$ is operator concave}}
\notag
\\
&\leq \log \left( p_{\mathsf{M}}(m) \rho_{\B}^{x(m)} + \mathds{E}_{x(\bar{m})\mid x(m)} \left[ \sum_{\bar{m}\neq m} p_{\mathsf{M}}(\bar{m}) \rho_{\B}^{x(\bar{m})} \right]\right)
\Arrow{\textnormal{pairwise independence}}
\notag
\\
&= \log \left( p_{\mathsf{M}}(m) \rho_{\B}^{x(m)} + \sum_{\bar{m}\neq m} p_{\mathsf{M}}(\bar{m})  \rho_{\B} \right)
\Arrow{\textnormal{$\sum_{\bar{m}\neq m} p_{\mathsf{M}}(\bar{m})\leq 1$ \&} \\  \textnormal{$\log$ is operator monotone}}
\notag
\\
&\leq \log \left( p_{\mathsf{M}}(m) \rho_{\B}^{x(m)} + \rho_{\B} \right).
\notag
\end{DispWithArrows}

Using the above operator inequality, we have 
\begin{DispWithArrows}[displaystyle]
\mathds{E}_{x(m)\sim p_{\X}} D\left(\mathds{E}_{m \sim p_{\mathsf{M}}} [\rho_{\B}^{x(m)}] \Vert \rho_{\B} \right)
&= \sum_{m\in\mathsf{M}} \mathds{E}_{x(m)} \Tr \left[  
 p_{\mathsf{M}}(m)  \rho_{\B}^{x(m)} \mathds{E}_{x(\bar{m})\mid x(m)} \left( \log \sum_{\bar{m}\in\mathsf{M}} p_{\mathsf{M}}(\bar{m}) \rho_{\B}^{x(\bar{m})} - \log \rho_{\B} \right) \right]
 \notag
\\
&\leq \sum_{m\in\mathsf{M}} \mathds{E}_{x(m)} \Tr  \left[  p_{\mathsf{M}}(m)  \rho_{\B}^{x(m)} \left(  \log \left( p_{\mathsf{M}}(m) \rho_{\B}^{x(m)} + \rho_{\B} \right) - \log \rho_{\B} \right) \right]
\notag
\\
&\leq \frac{c_{\alpha-1}}{\alpha-1} \sum_{m\in\mathsf{M}} \mathds{E}_{x(m)}   p_{\mathsf{M}}(m)^{\alpha} \e^{ (\alpha-1) \widetilde{D}_{\alpha} \left( \rho_{\B}^{x(m)} \Vert \rho_{\B} \right) }
\notag
\\
&= \frac{c_{\alpha-1}}{\alpha-1} \e^{ - (\alpha-1) \left[ H_{\alpha}(\mathsf{M})_{p} - \widetilde{D}_{\alpha}\left( \rho_{\X\B} \Vert \rho_{\X} \otimes \rho_{\B} \right) \right]  }, \quad \forall\,\alpha \in (1,2],
\notag
\end{DispWithArrows}
where the second inequality follows from Theorem~\ref{theo:sharp_one-shot} with $A\leftarrow p_{\mathsf{M}}(m) \rho_{\B}^{x(m)} $ and $B\leftarrow \rho_{\B}$.
\end{proof}

\subsection{Privacy Amplification} \label{sec:PA}

\begin{defn} \label{defn:PA}
Let $\rho_{\X\E} = \sum_{x\in\X} p_{\X}(x) |x\rangle \langle x|_{\X} \otimes \rho_{\E}^x$ be a classical-quantum state.
\begin{enumerate}[1.]
    \item 
    Alice has a classical register $\X$ and the eavesdropper has a quantum register $\E$.
    Initially, they share the state $\rho_{\X\E}$.

    \item
    Alice applies a linear operation $\mathscr{R}^h( \rho_{\X\E} )$ on her system according to a hash function $h:\X\to \Z$:
    \begin{align} 
    \mathscr{R}^h( \rho_{\X\E} )
    := \sum_{x\in\X} p_{\X}(x) |h(x)\rangle \langle h(x)|_{\Z} \otimes \rho_{\E}^x
    = \sum_{z\in\Z} |z\rangle \langle z |_{\Z} \otimes \sum_{x: h(x) = z} p_{\X}(x) \rho_{\E}^x.
    \label{eq:R^h}
    \end{align}
\end{enumerate}
The aim of Alice is for her resulting state $\mathscr{R}^h( \rho_{\X\E} )$ to be independent to the quantum system $\E$ and close to uniform randomness in relative entropy:
\begin{align} \label{eq:error_PA}
D\left( \mathscr{R}^h( \rho_{\X\E} ) \Vert \nicefrac{\I}{|\mathsf{Z}|} \otimes \rho_{\E} \right)
\end{align}
with as larger $|\Z|$ as possible.
\end{defn}

As noted in \cite[Equation (9)]{Hay15_PA}, the security criterion given in \eqref{eq:error_PA} ensures the mutual information between the systems $\Z$ and $\E$ to be controlled, i.e.
\begin{align}
D\left( \mathscr{R}^h( \rho_{\X\E} ) \Vert \nicefrac{\I}{|\mathsf{Z}|} \otimes \rho_{\E} \right)
&= I(\Z : \E)_{ \mathscr{R}^h( \rho_{\X\E} ) } + D\left(\mathscr{R}^h( \rho_{\X} )\Vert \nicefrac{\I}{|\mathsf{Z}|} \right).
\end{align}

We adopt a \emph{$2$-universal random hash function} $h:\X\to \Z$ satisfying for all $x,\bar{x} \in \X$ with $x\neq \bar{x}$,
\begin{align} \label{eq:2-universal}
\Pr_{h}\left\{  h(x) = h(\bar{x}) \right\} \leq \frac{1}{|\Z|}.
\end{align}

\begin{prop} \label{prop:PA}
Following Definition~\ref{defn:PA} and using a $2$-universal random hash function, we have
\begin{align}
\mathds{E}_{h} D\left( \mathscr{R}^h( \rho_{\X\E} )  \Vert \nicefrac{\I}{|\mathsf{Z}|} \otimes \rho_{\E} \right)
&\leq
\frac{c_{\alpha-1}}{\alpha-1}  \e^{ - (\alpha-1) \left[ -\log |\Z| - {D}_{\alpha} \left( \rho_{\X\E} \Vert \I_{\X}\otimes \rho_{\E} \right)  \right] }, \quad \forall\, \alpha \in (1,2]
\end{align}
\end{prop}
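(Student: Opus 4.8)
The plan is to mirror the structure of the proof of Proposition~\ref{prop:covering}, replacing the averaging over random codewords by averaging over a $2$-universal hash family, and to invoke Theorem~\ref{theo:sharp_one-shot} at the crucial step. First I would write the relative entropy explicitly in terms of the hashed state. Since $\mathscr{R}^h(\rho_{\X\E}) = \sum_{z\in\Z} |z\rangle\langle z|_\Z \otimes \sigma_\E^{h,z}$ where $\sigma_\E^{h,z} := \sum_{x:h(x)=z} p_\X(x)\rho_\E^x$, the relative entropy decomposes over the classical register $\Z$ as
\begin{align*}
D\!\left( \mathscr{R}^h(\rho_{\X\E}) \,\Vert\, \nicefrac{\I}{|\Z|}\otimes\rho_\E \right)
= \sum_{z\in\Z} \Tr\!\left[ \sigma_\E^{h,z}\left( \log \sigma_\E^{h,z} - \log \rho_\E + \log|\Z| \right) \right].
\end{align*}
The term $\log|\Z|$ multiplied by $\sum_z \Tr[\sigma_\E^{h,z}] = 1$ contributes $\log|\Z|$, which must be absorbed appropriately; in fact it is cleaner to note $\sum_z \sigma_\E^{h,z} = \rho_\E$ is a fixed marginal and push the $\log|\Z|$ into the final exponent at the end.

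Next I would take the expectation over $h$ and condition on the value of a single index. Fix $z\in\Z$ and a particular $x$ with $h(x)=z$; split $\sigma_\E^{h,z} = p_\X(x)\rho_\E^x + \sum_{\bar x\neq x,\, h(\bar x)=z} p_\X(\bar x)\rho_\E^{\bar x}$. Using operator concavity of $\log$ to pull the conditional expectation inside, then the $2$-universality bound $\Pr_h\{h(\bar x)=z \mid h(x)=z\}\le 1/|\Z|$ to control the expected ``interference'' term by $\frac{1}{|\Z|}\sum_{\bar x\neq x} p_\X(\bar x)\rho_\E^{\bar x} \le \frac{1}{|\Z|}\rho_\E$, and finally operator monotonicity of $\log$, I get the pointwise operator bound
\begin{align*}
\mathds{E}_h\!\left[ \log \sigma_\E^{h,z} \,\big|\, h(x)=z \right]
\;\le\; \log\!\left( p_\X(x)\rho_\E^x + \nicefrac{\I}{|\Z|}\cdot\ \text{(something)} \right),
\end{align*}
and more precisely after reorganizing the sum over $x$ with the indicator $\I[h(x)=z]$ one arrives at an expression of the form $\mathds{E}_h\sum_{x} p_\X(x)\I[h(x)=z]\,\Tr[\rho_\E^x(\log(p_\X(x)\rho_\E^x + \rho_\E/|\Z| \cdot \dots) - \log\rho_\E)]$. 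The bookkeeping here — correctly matching the $1/|\Z|$ factors so that Theorem~\ref{theo:sharp_one-shot} applies with $A \leftarrow p_\X(x)\rho_\E^x$ and $B \leftarrow \rho_\E/|\Z|$ (or, after rescaling, $B\leftarrow \rho_\E$ with a compensating $|\Z|$) — is the step I expect to be the main obstacle, exactly as in the covering case but with the extra $|\Z|$-scaling that produces the $-\log|\Z|$ in the final exponent.

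Once the single-term relative-entropy contribution is bounded by $\frac{c_{\alpha-1}}{\alpha-1}$ times a trace of the form $\Tr[(B^{-s/2(1+s)}AB^{-s/2(1+s)})^{1+s}]$ via Theorem~\ref{theo:sharp_one-shot} (with $s=\alpha-1\in(0,1]$), I would recognize $\sum_x p_\X(x)^\alpha \Tr[(\rho_\E^{-1/2\alpha}\rho_\E^x\rho_\E^{-1/2\alpha})^\alpha \cdots]$ — together with the $|\Z|^{\alpha-1}$ coming from $B\leftarrow \rho_\E/|\Z|$ — as $e^{(\alpha-1)\widetilde D_\alpha(\rho_{\X\E}\Vert \I_\X\otimes\rho_\E)}$ up to the $|\Z|$ factor, using the standard identity that for classical-quantum $\rho_{\X\E}$ one has $\widetilde D_\alpha(\rho_{\X\E}\Vert \I_\X\otimes\rho_\E) = \frac{1}{\alpha-1}\log\sum_x p_\X(x)^\alpha e^{(\alpha-1)\widetilde D_\alpha(\rho_\E^x\Vert\rho_\E)}$. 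Summing over $z\in\Z$ collapses the indicator $\sum_z\I[h(x)=z]=1$, leaving the claimed bound with $\widetilde D_\alpha$; finally, invoking the comparison $D_\alpha \le \widetilde D_\alpha$ from Equation~\eqref{Eq:compRenyi} replaces $\widetilde D_\alpha$ by the integral \Renyi divergence $D_\alpha$ to obtain the stated form. I would also remark that, since Theorem~\ref{theo:sharp_one-shot} holds in separable infinite dimensions, the bound applies without any finite-dimensionality assumption on the eavesdropper's system $\E$.
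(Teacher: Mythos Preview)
Your overall architecture is exactly the paper's: decompose the relative entropy over the classical register $\Z$, rewrite the double sum $\sum_z\sum_{x:h(x)=z}$ as a single sum over $x$, use operator concavity of $\log$ together with $2$-universality and operator monotonicity to get the key pointwise bound $\mathds{E}_h[\log\sigma_\E^{h,h(x)}]\le \log(p_\X(x)\rho_\E^x+\rho_\E/|\Z|)$, and then apply Theorem~\ref{theo:sharp_one-shot} with $A\leftarrow p_\X(x)\rho_\E^x$ and $B\leftarrow \rho_\E/|\Z|$. All of this is correct and matches the paper.

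There is, however, a genuine error in your final step. You propose to land on the sandwiched bound
\[
\mathds{E}_{h} D\bigl( \mathscr{R}^h( \rho_{\X\E} )  \,\Vert\, \nicefrac{\I}{|\Z|} \otimes \rho_{\E} \bigr)
\;\le\;
\frac{c_{\alpha-1}}{\alpha-1}\,\e^{(\alpha-1)\bigl[\log|\Z|+\widetilde D_\alpha(\rho_{\X\E}\Vert\I_\X\otimes\rho_\E)\bigr]}
\]
and then ``replace $\widetilde D_\alpha$ by $D_\alpha$'' via the comparison $D_\alpha\le\widetilde D_\alpha$ of \eqref{Eq:compRenyi}. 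This goes the wrong way: since $D_\alpha\le\widetilde D_\alpha$, the right-hand side with $D_\alpha$ is \emph{smaller} than the one with $\widetilde D_\alpha$, so it is a strictly stronger inequality that cannot be deduced from the weaker one. The comparison \eqref{Eq:compRenyi} lets you go from a $D_\alpha$ bound to a $\widetilde D_\alpha$ bound, not the other way around.

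The fix is to skip the sandwiched form entirely and use the \emph{first} inequality in Theorem~\ref{theo:sharp_one-shot}. With $A=p_\X(x)\rho_\E^x$ and $B=\rho_\E/|\Z|$ this gives
\[
\Tr\!\bigl[A(\log(A+B)-\log B)\bigr]
\;\le\;
c_{\alpha-1}\int_0^\infty \Tr\!\bigl[(A(B+t\I)^{-1})^{\alpha}\bigr]\,\d t
\;=\;
\frac{c_{\alpha-1}}{\alpha-1}\,p_\X(x)^\alpha\,|\Z|^{\alpha-1}\,\e^{(\alpha-1)D_\alpha(\rho_\E^x\Vert\rho_\E)},
\]
after the change of variables $t\mapsto t/|\Z|$. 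Summing over $x$ and using the classical-quantum decomposition of the integral \Renyi divergence,
\[
\e^{(\alpha-1)D_\alpha(\rho_{\X\E}\Vert\I_\X\otimes\rho_\E)}
=\sum_{x}p_\X(x)^\alpha\,\e^{(\alpha-1)D_\alpha(\rho_\E^x\Vert\rho_\E)},
\]
yields the stated bound directly in terms of $D_\alpha$, without ever passing through $\widetilde D_\alpha$.
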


Proposition~\ref{prop:PA} improves on \cite[Theorem 1]{Hay15_PA} by a factor $c_{\alpha-1} \in [\nicefrac{1}{2},1]$ for $\alpha \in (1,2]$ and by removing the dimension-dependent factor $|\text{spec}(\mathcal{H}_{\E})|^{\alpha-1}$ at the eavesdropper.

In the $n$-fold independent and identical setting where $\rho_{\X\E}\leftarrow \rho_{\X\E}^{\otimes n}$ and $|\Z| = \exp(nR)$ with $R < H(\X\mid\E)_{\rho} := - D(\rho_{\X\E}\Vert \I_{\X}\otimes \rho_{\E})$, we remark that the (regularized) error exponent obtained in Proposition~\ref{prop:PA}, i.e.,
\begin{align}
\sup_{\alpha \in (1,2]} (\alpha-1) \left[ -\widetilde{D}_{\alpha}(\rho_{\X\E}\Vert \I_{\X} \otimes \rho_{\E}) - R \right]
\end{align}
is tight for $\left.\frac{\d}{\d s} - s \widetilde{D}_{1+s}(\rho_{\X\E}\Vert\I_{\X}\otimes \rho_{\E})\right|_{s=1} \leq R < H(\X\mid\E)_{\rho}$ \cite[Theorem 1]{KL21} (see also \cite[Theorem 1]{HT17} for the classical case).

\begin{proof}[Proof of Proposition~\ref{prop:PA}]
    The proof follows closely from that of Proposition~\ref{prop:covering}, which is also inspired by that of \cite[Theorem 1]{Hay15_PA}.
    The key difference is that we will employ Theorem~\ref{theo:sharp_one-shot} in the derivations.

    For each fixed $x\in\X$, the operator concavity of logarithm implies
    \begin{DispWithArrows}[displaystyle]
    \mathds{E}_h \log \left( \sum_{\bar{x}\colon h(\bar{x}) = h(x) } p_{\X}(\bar{x}) \rho_{\E}^{\bar{x}} \right)
    &\leq \log \left( \mathds{E}_h \left[ \sum_{\bar{x}\colon h(\bar{x}) = h(x) } p_{\X}(\bar{x}) \rho_{\E}^{\bar{x}} \right] \right)
    \notag
    \\
    &=\log \left( \mathds{E}_h \left[ p_{\X}({x}) \rho_{\E}^{{x}}  + \sum_{\substack{\bar{x}\colon h(\bar{x}) = h(x) \\ \bar{x}\neq x} } p_{\X}(\bar{x}) \rho_{\E}^{\bar{x}} \right]\right)
    \notag
    \\
    &\overset{\textnormal{(a)}}{\leq} \log \left(  p_{\X}({x}) \rho_{\E}^{{x}}  + \frac{1}{|\Z|} \sum_{\bar{x} \colon \bar{x}\neq x } p_{\X}(\bar{x}) \rho_{\E}^{\bar{x}} \right)
    \notag
    \\
    &\overset{\textnormal{(b)}}{\leq}  \log \left(  p_{\X}({x}) \rho_{\E}^{{x}}  + \frac{1}{|\Z|} \rho_{\E} \right), 
    \label{eq:PA_1}
    \end{DispWithArrows}
    where (a) is by the definition of $2$-universal hash functions in \eqref{eq:2-universal} and the operator monotonicity of logarithm, 
    and (b) is because $\sum_{\bar{x}\neq x} p_{\X}(\bar{x}) \rho_{\E}^{\bar{x}} \leq \sum_{\bar{x} \in \X} p_{\X}(\bar{x}) \rho_{\E}^{\bar{x}} = \rho_{\E}$ and again logarithm is operator monotone.

    Then, using \eqref{eq:R^h}, we calculate
    \begin{DispWithArrows}[displaystyle]
    \mathds{E}_{h} D\left( \mathscr{R}^h( \rho_{\X\E} )  \Big\Vert \frac{\I}{|\mathsf{Z}|} \otimes \rho_{\E} \right)
    &{=} \mathds{E}_h D \left( \sum_{z\in\Z} |z\rangle \langle z|_{\X} \otimes \sum_{x\colon h(x) = z} p_{\X}(x)\rho_{\E}^x \,\bigg\Vert\, \frac{\I}{|\Z|}\otimes \rho_{\E} \right)
    \\
    &\overset{\textnormal{(a)}}{=} \mathds{E}_h \sum_{z\in\Z} \Tr\left[ \sum_{x\colon h(x) = z} p_{\X}(x)\rho_{\E}^x \left( \log\left( \sum_{ \bar{x} \colon h(\bar{x}) = z} p_{\X}(\bar{x})\rho_{\E}^{\bar{x}} \right) - \log \frac{\rho_{\E}}{|\Z|} \right) \right]
    \\
    &= \mathds{E}_h \sum_{x\in\X} \Tr\left[  p_{\X}(x)\rho_{\E}^x \left( \log\left( \sum_{ \bar{x} \colon h(\bar{x}) = h(x) } p_{\X}(\bar{x})\rho_{\E}^{\bar{x}} \right) - \log \frac{\rho_{\E}}{|\Z|} \right) \right]
    \\
    &\overset{\textnormal{(b)}}{\leq} \sum_{x\in\X} \Tr\left[  p_{\X}(x)\rho_{\E}^x \left( \log \left(  p_{\X}({x}) \rho_{\E}^{{x}}  + \frac{\I}{|\Z|} \rho_{\E} \right) - \log \frac{\rho_{\E}}{|\Z|} \right) \right]
    \\
    &\overset{\textnormal{(c)}}{\leq} \frac{c_{\alpha-1}}{\alpha-1}  \sum_{x\in\X} p_{\X}(x)^{\alpha} \e^{ (\alpha-1) \left[ \widetilde{D}_{\alpha}\left( \rho_{\E}^x \Vert \rho_{\E} \right) + \log |\Z| \right]  }
    \\
    &= \frac{c_{\alpha-1}}{\alpha-1}  \sum_{x\in\X} p_{\X}(x)^{\alpha} \e^{ (\alpha-1) \left[ \widetilde{D}_{\alpha}\left( \rho_{\E}^x \Vert \rho_{\E} \right) + \log |\Z| \right]  }, 
    \\ &=\frac{c_{\alpha-1}}{\alpha-1}  \e^{ - (\alpha-1) \left[ -\log |\Z| - \widetilde{D}_{\alpha} \left( \rho_{\X\E} \Vert \I_{\X}\otimes \rho_{\E} \right)  \right] }
    \quad \forall\, \alpha \in (1,2],
    \end{DispWithArrows}
    where 
    (a) follows from the direct-sum structure of $D(\cdot\Vert \cdot)$,
    (b) follows from \eqref{eq:PA_1},
    and (c) follows from Theorem~\ref{theo:sharp_one-shot} with $A\leftarrow p_{\X}(x) \rho_{\E}^x$ and $B\leftarrow {\rho_{\E}}/{|\Z|}$.
\end{proof}

\subsection{Convex Splitting} \label{sec:splitting}

\begin{defn}[Convex splitting with non-uniform randomness] \label{defn:splitting}
Let $\rho_{\A\B}$ and $\tau_{\A}$ be quantum states satisfying $\textrm{supp}(\rho_{\A}) \subseteq \textrm{supp}(\tau_{\A})$, let $\mathsf{M} = \{1,2,\ldots, M\} =: [M]$ be a finite set, and let $p_{\mathsf{M}}$ be a probability distribution on $\mathsf{M}$.
\begin{enumerate}[1.]
    \item 
    Alice has quantum registers $\A_1, \A_2, \ldots, \A_M$, where $\A_m \simeq \A$ all initialized with state $\tau_{\A_m}$ and has a quantum register $\A$, and Bob has a quantum register $\B$.
    The initial state on system $\A\B$ is $\rho_{\A\B}$.

    \item
    Alice randomly embeds her state on $\A$ to $\A_m$ with probability $p_{\mathsf{M}}(m)$.
\end{enumerate}
    The aim of Alice is for the mixture
    \begin{align}
	\omega_{\A_1\ldots \A_M \B} = \sum_{m=1}^M p_{\mathsf{M}} (m) \cdot \rho_{\A_m \B} \otimes \tau_{\A}^{\otimes [M]\backslash \{m\}}
    \end{align}
    to be close to the product state $\otimes_{m=1}^M \tau_{\A_m} \otimes \rho_{\B}$ in relative entropy:
    \begin{align}
	\begin{split} \label{eq:D}
		&\Delta^{D}_{M}\left( \rho_{AB} \,\Vert\, \tau_{\A} \right) :=
		D\left( \omega_{\A_1\ldots \A_M \B} \, \Vert\, \tau_{\A}^{\otimes M} \otimes \rho_B \right) 
        \notag
        \\
		&= \sum_{m\in\mathsf{M}} \Tr\left[  p_{\mathsf{M}}(m) \rho_{\A_m \B} \otimes \tau_{\A}^{\otimes[M]\backslash \{m\}} \left( \log \left( \sum_{\bar{m}\in\mathsf{M}} p_{\mathsf{M}}(\bar{m}) \rho_{\A_{\bar{m}} \B} \otimes \tau_{\A}^{\otimes[M]\backslash \{\bar{m}\}} \right) - \log \tau_{\A}^{\otimes M} \otimes \rho_B \right) \right]
	\end{split}
    \end{align}
    as least integer $M$ as possible.
\end{defn}

We remark that convex splitting was proposed by Anshu \textit{et al.} \cite{ADJ17}.
A tight analysis under trace distance was studied by parts of the authors \cite{CG22b}.

\begin{prop} \label{prop:splitting}
Let $\rho_{\A\B}$ and $\tau_{\A}$ be quantum states satisfying $\textrm{supp}(\rho_{\A}) \subseteq \textrm{supp}(\tau_{\A})$.
Following Definition~\ref{defn:splitting}, we have
\begin{align}
\Delta^{D}_{M}\left( \rho_{\A\B} \,\Vert\, \tau_{\A} \right)
&\leq
\frac{c_{\alpha-1}}{\alpha-1}  \e^{ - (\alpha-1) \left[ H_{\alpha}(\mathsf{M})_p - {D}_{\alpha} \left( \rho_{\A\B} \Vert \tau_{\A}\otimes \rho_{\B} \right)  \right] }, \quad \forall\, \alpha \in (1,2]
\end{align}
\end{prop}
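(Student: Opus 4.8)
\textbf{Proof plan for Proposition~\ref{prop:splitting}.}
The strategy is to mirror the proofs of Proposition~\ref{prop:covering} and Proposition~\ref{prop:PA}: identify the ``self-copy'' term inside the logarithm, bound the remaining mixture by the target product state using operator concavity and monotonicity of $\log$, and then invoke Theorem~\ref{theo:sharp_one-shot} with a suitable choice of $A$ and $B$. First I would isolate, in the sum $\sum_{\bar m\in\mathsf{M}} p_{\mathsf{M}}(\bar m)\,\rho_{\A_{\bar m}\B}\otimes\tau_{\A}^{\otimes[M]\setminus\{\bar m\}}$, the $\bar m = m$ summand $p_{\mathsf{M}}(m)\,\rho_{\A_m\B}\otimes\tau_{\A}^{\otimes[M]\setminus\{m\}}$. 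For $\bar m\neq m$, I would use the operator inequality $\rho_{\A_{\bar m}\B}\leq \tau_{\A_{\bar m}}\otimes\rho_{\B}$ (which holds since $\Tr_{\A_{\bar m}}\rho_{\A_{\bar m}\B} = \rho_{\B}$ and $\mathrm{supp}(\rho_{\A})\subseteq\mathrm{supp}(\tau_{\A})$, so $\rho_{\A_{\bar m}\B}\leq \|\tau_{\A}^{-1/2}\rho_{\A}\tau_{\A}^{-1/2}\|_\infty\,\tau_{\A_{\bar m}}\otimes\rho_{\B}$; but actually one wants the cleaner bound, see below), tensored with $\tau_{\A}^{\otimes[M]\setminus\{m,\bar m\}}$, so that $\sum_{\bar m\neq m} p_{\mathsf{M}}(\bar m)\,\rho_{\A_{\bar m}\B}\otimes\tau_{\A}^{\otimes[M]\setminus\{\bar m\}} \leq \big(\sum_{\bar m\neq m}p_{\mathsf{M}}(\bar m)\big)\,\tau_{\A}^{\otimes M}\otimes\rho_{\B}\leq \tau_{\A}^{\otimes M}\otimes\rho_{\B}$. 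By operator monotonicity of $\log$,
\[
\log\Big(\sum_{\bar m}p_{\mathsf{M}}(\bar m)\,\rho_{\A_{\bar m}\B}\otimes\tau_{\A}^{\otimes[M]\setminus\{\bar m\}}\Big)
\leq \log\Big(p_{\mathsf{M}}(m)\,\rho_{\A_m\B}\otimes\tau_{\A}^{\otimes[M]\setminus\{m\}} + \tau_{\A}^{\otimes M}\otimes\rho_{\B}\Big).
\]

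\textbf{Applying the trace inequality.}
Plugging this into the expression for $\Delta^D_M(\rho_{\A\B}\Vert\tau_{\A})$ and using that the leftmost factor in each trace term is $p_{\mathsf{M}}(m)\,\rho_{\A_m\B}\otimes\tau_{\A}^{\otimes[M]\setminus\{m\}}$, each summand becomes
\[
\Tr\Big[A_m\big(\log(A_m+B) - \log B\big)\Big],\quad
A_m := p_{\mathsf{M}}(m)\,\rho_{\A_m\B}\otimes\tau_{\A}^{\otimes[M]\setminus\{m\}},\quad
B := \tau_{\A}^{\otimes M}\otimes\rho_{\B}.
\]
Theorem~\ref{theo:sharp_one-shot} then gives the bound $\tfrac{c_{\alpha-1}}{\alpha-1}\,\widetilde Q_{\alpha}(A_m\Vert B)$ with $\alpha = 1+s\in(1,2]$. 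The key computation is identifying $\widetilde Q_{\alpha}(A_m\Vert B)$: by the tensor-product structure, $B^{-\frac{\alpha-1}{2\alpha}}A_m B^{-\frac{\alpha-1}{2\alpha}} = p_{\mathsf{M}}(m)\cdot\big(\tau_{\A}^{-\frac{\alpha-1}{2\alpha}}\otimes\I_{\B}\cdots\big)$ acting so that the $\tau_{\A}$ factors on systems $[M]\setminus\{m\}$ cancel (since $\tau_{\A}^{-\frac{\alpha-1}{2\alpha}}\tau_{\A}\tau_{\A}^{-\frac{\alpha-1}{2\alpha}} = \tau_{\A}^{1/\alpha}$, whose $\alpha$-th power is $\tau_{\A}$ of unit trace), leaving exactly $p_{\mathsf{M}}(m)^{\alpha}\,\e^{(\alpha-1)\widetilde D_\alpha(\rho_{\A_m\B}\Vert\tau_{\A}\otimes\rho_{\B})}$. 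Summing over $m$ and recognizing $\sum_m p_{\mathsf{M}}(m)^{\alpha} = \e^{-(\alpha-1)H_\alpha(\mathsf{M})_p}$ and $\widetilde D_\alpha(\rho_{\A_m\B}\Vert\tau_{\A}\otimes\rho_{\B}) = \widetilde D_\alpha(\rho_{\A\B}\Vert\tau_{\A}\otimes\rho_{\B})$ (identical for every $m$ up to relabeling) yields $\tfrac{c_{\alpha-1}}{\alpha-1}\e^{-(\alpha-1)[H_\alpha(\mathsf{M})_p - \widetilde D_\alpha(\rho_{\A\B}\Vert\tau_{\A}\otimes\rho_{\B})]}$. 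Finally, by Equation~\eqref{Eq:compRenyi}, $\widetilde D_\alpha \geq D_\alpha$, so replacing $\widetilde D_\alpha$ with the smaller integral \Renyi divergence $D_\alpha$ only enlarges the bound, giving the stated inequality.

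\textbf{Main obstacle.}
The one genuinely delicate point is justifying $\rho_{\A_{\bar m}\B}\leq\tau_{\A_{\bar m}}\otimes\rho_{\B}$ as an operator inequality used inside $\log$. In general this is false without the trace-distance-type relaxation; what is actually true and sufficient is that, after summing over $\bar m\neq m$ with weights and taking into account the monotonicity step, one only needs $\sum_{\bar m\neq m}p_{\mathsf{M}}(\bar m)\,\rho_{\A_{\bar m}\B}\otimes\tau_{\A}^{\otimes[M]\setminus\{\bar m\}}$ to be dominated by $\tau_{\A}^{\otimes M}\otimes\rho_{\B}$ — but this still requires a genuine operator bound. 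The correct route, following \cite{ADJ17} and the proof of Proposition~\ref{prop:PA}, is to note that one does \emph{not} bound the individual $\rho_{\A_{\bar m}\B}$ but rather takes the partial trace / expectation structure into account: exactly as in Proposition~\ref{prop:covering}, one first replaces $\rho_{\A_{\bar m}\B}$ inside the sum over $\bar m\neq m$ by its average, which for convex splitting means using that the relevant term already has the product form after the appropriate embedding, so $\sum_{\bar m\neq m}p_{\mathsf{M}}(\bar m)\rho_{\A_{\bar m}\B}\otimes\tau_{\A}^{\otimes[M]\setminus\{\bar m\}}$, viewed on the full register, is majorized in the $\B$-marginal sense — and here one genuinely does use $\Tr_{\A}\rho_{\A\B}=\rho_\B$ together with $\tau_{\A}\geq 0$ to conclude $\rho_{\A_{\bar m}\B}\otimes\tau_{\A}^{\otimes[M]\setminus\{\bar m\}} \leq \tau_{\A}^{\otimes M}\otimes\rho_\B$ fails in general, so instead one must carry the sum with coefficients summing to at most $1$ and invoke operator concavity of $\log$ before monotonicity, paralleling the ``$\log$ is operator concave'' then ``pairwise independence'' then ``$\sum\leq 1$ \& operator monotone'' chain in Proposition~\ref{prop:covering}. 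Getting this chain of reductions in the right order — concavity to push the expectation inside, then the product structure of the averaged term, then monotonicity — is where care is needed; once it is set up, the remaining steps are the routine tensor-power bookkeeping described above.
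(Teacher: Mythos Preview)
Your overall architecture is right, and the bookkeeping after the key reduction (applying Theorem~\ref{theo:sharp_one-shot} with $A_m=p_{\mathsf M}(m)\rho_{\A_m\B}\otimes\tau_\A^{\otimes[M]\setminus\{m\}}$ and $B=\tau_\A^{\otimes M}\otimes\rho_\B$, then collapsing the tensor factors) matches the paper. But the step you flag as the ``main obstacle'' is a genuine gap, and your proposed fix does not close it.

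You correctly recognize that $\rho_{\A_{\bar m}\B}\le\tau_{\A_{\bar m}}\otimes\rho_\B$ is false as an operator inequality. Your fallback is to imitate the concavity step from Proposition~\ref{prop:covering}: push an expectation through $\log$ via operator concavity, then use that the averaged term has product form. The problem is that in convex splitting there is \emph{no randomness and no outer expectation}: the state $\omega_{\A_1\cdots\A_M\B}$ is a fixed convex combination, and the sum over $m$ in $\Delta^D_M$ comes from expanding the first argument of $D(\cdot\Vert\cdot)$, not from averaging over a random code. So there is nothing to which the Jensen/concavity step $\mathds E[\log X]\le\log\mathds E[X]$ can be applied. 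The chain ``concavity $\to$ independence $\to$ monotonicity'' from Proposition~\ref{prop:covering} simply has no analogue here.

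The paper's resolution is a different idea: for each $m$, introduce the CPTP map $\mathscr N^{(m)}$ that traces out all registers $\A_{\bar m}$ with $\bar m\neq m$ and re-appends fresh copies of $\tau_\A$. This map fixes $\rho_{\A_m\B}\otimes\tau_\A^{\otimes[M]\setminus\{m\}}$ and sends every other summand $\rho_{\A_{\bar m}\B}\otimes\tau_\A^{\otimes[M]\setminus\{\bar m\}}$ (for $\bar m\neq m$) exactly to $\tau_\A^{\otimes M}\otimes\rho_\B$. Applying the data-processing inequality for $D(\cdot\Vert\cdot)$ with first argument $\rho_{\A_m\B}\otimes\tau_\A^{\otimes[M]\setminus\{m\}}$ (a fixed point of $\mathscr N^{(m)}$) and second argument $\omega$ yields precisely the trace inequality
\[
\Tr\big[\rho_{\A_m\B}\otimes\tau_\A^{\otimes[M]\setminus\{m\}}\log\omega\big]
\le
\Tr\big[\rho_{\A_m\B}\otimes\tau_\A^{\otimes[M]\setminus\{m\}}\log\mathscr N^{(m)}(\omega)\big],
\]
after which operator monotonicity of $\log$ (using $\sum_{\bar m\neq m}p_{\mathsf M}(\bar m)\le 1$) gives your desired upper bound on the logarithm. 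This DPI-with-a-tailored-channel step is the missing idea.

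One further correction: your last sentence has the inequality backwards. Since $D_\alpha\le\widetilde D_\alpha$ for $\alpha>1$, replacing $\widetilde D_\alpha$ by $D_\alpha$ in $\e^{-(\alpha-1)[H_\alpha-\widetilde D_\alpha]}$ \emph{tightens} the bound, it does not enlarge it. So you cannot derive the $D_\alpha$ statement from the $\widetilde D_\alpha$ one; you must invoke the first (integral) inequality in Theorem~\ref{theo:sharp_one-shot} directly rather than passing through the sandwiched bound.
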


Proposition~\ref{prop:splitting} improves on 
\cite[Lemma 13]{LY21b} by a factor $c_{\alpha-1} \in [\nicefrac{1}{2},1]$ for $\alpha \in (1,2]$ and by removing the dimension-dependent factor $|\text{spec}(\mathcal{H}_{\A} \otimes \mathcal{H}_{\B})|^{\alpha-1}$.

\begin{proof}
First, for each $m\in [M]$, we define a completely positive trace-preserving map to trace out all the $\A_{\bar{m}}$ systems for $\bar{m}\neq m$ and append it with a state $\tau_{\A_{\bar{m}}}$, i.e.,
\begin{align}
	\mathscr{N}^{(m)} = \Tr_{ \A_{[M]\backslash \{m\} } }\left[\,\cdot\, \right] \otimes \tau_{\A}^{\otimes [M]\backslash \{{m}\}},
\end{align}
such that
\begin{align} \label{eq:splitting_trace_out}
\begin{split}
	\mathscr{N}^{(m)} \left( \rho_{\A_{\bar{m}} \B} \otimes \tau_{\A}^{\otimes [M]\backslash \{\bar{m}\}} \right)
	=
	\begin{dcases}
		\rho_{\A_{{m}} B} \otimes \tau_{\A}^{\otimes[M]\backslash \{{m}\}} & \bar{m} = m, \\
		\tau_{\A}^{\otimes M} \otimes \rho_B & \bar{m} \neq m. \\
	\end{dcases}
    \end{split}
\end{align}
Then, data-processing inequality of Umegaki's relative entropy \cite{Ume54} implies that, for each $m\in[M]$,
\begin{align}
	D\left( \rho_{\A_m \B} \otimes \tau_{\A}^{\otimes[M]\backslash \{m\}} \, \Vert\, \omega_{\A_1\ldots \A_M \B} \right)
	&\geq D\left( \mathscr{N}^{(m)} \left( \rho_{\A_m \B} \otimes \tau_{\A}^{\otimes[M]\backslash \{m\}}  \right) \, \Vert\, \mathscr{N}^{(m)} \left( \omega_{\A_1\ldots \A_M \B}  \right) \right)\\
	&= D\left( \rho_{\A_m \B} \otimes \tau_{\A}^{\otimes[M]\backslash \{m\}}  \, \Vert\, \mathscr{N}^{(m)} \left( \omega_{\A_1\ldots \A_M \B}  \right) \right),
\end{align}
which translates to
\begin{align}
	\Tr\left[ \rho_{\A_m \B} \otimes \tau_{\A}^{\otimes[M]\backslash \{m\}} \log \omega_{\A_1\ldots \A_M \B} \right] \leq
	\Tr\left[ \rho_{\A_m \B} \otimes  \tau_{\A}^{\otimes[M]\backslash \{m\}} \log  \mathscr{N}^{(m)} \left(\omega_{\A_1\ldots \A_M \B} \right) \right]. \label{eq:D1}
\end{align}

By applying \eqref{eq:D1}, we bound the first term in the bracket of \eqref{eq:D} as follows: for each $m\in[M]$,
\begin{align}
	&\Tr\left[ p_{\mathsf{M}}(m) \rho_{\A_m \B} \otimes \tau_{\A}^{\otimes[M]\backslash \{m\}} \log \omega_{\A_1\ldots \A_M \B} \right] \\
	&\leq 	\Tr\left[ p_{\mathsf{M}}(m) \rho_{\A_m \B} \otimes \tau_{\A}^{\otimes[M]\backslash \{m\}} \log   \mathscr{N}^{(m)} \left(\omega_{\A_1\ldots \A_M \B} \right) \right] \\
	&\overset{\textnormal{(a)}}{=} \Tr\left[ p_{\mathsf{M}}(m) \rho_{\A_m \B} \otimes \tau_{\A}^{\otimes[M]\backslash \{m\}} \log \left( p_{\mathsf{M}}(m) \rho_{\A_m \B} \otimes \tau_{\A}^{\otimes[M]\backslash \{m\}} + \sum_{\bar{m}\neq m} p_{\mathsf{M}}(\bar{m}) \tau_{\A}^{\otimes M} \otimes \rho_B \right) \right] \\
	&\overset{\textnormal{(b)}}{\leq} \Tr\left[ p_{\mathsf{M}}(m) \rho_{\A_m \B} \otimes \tau_{\A}^{\otimes[M]\backslash \{m\}} \log \left( p_{\mathsf{M}}(m) \rho_{\A_m \B} \otimes \tau_{\A}^{\otimes[M]\backslash \{m\}} +  \tau_{\A}^{\otimes M} \otimes \rho_B \right) \right], \label{eq:D2}
\end{align}
where we invoked \eqref{eq:splitting_trace_out} in (a) and used the operator monotonicity of logarithm in (b).
Combining \eqref{eq:D} with \eqref{eq:D2}, we have, for all $s\in[0,1]$:
\begin{DispWithArrows}[displaystyle]
	&D\left( \omega_{\A_1\ldots \A_M \B} \, \Vert \tau_{\A}^{\otimes M} \otimes \rho_B \right)
    \notag
    \\ 
    &\leq \sum_{m\in\mathsf{M}} \Tr\left[ p_{\mathsf{M}}(m) \rho_{\A_m \B} \otimes \tau_{\A}^{\otimes[M]\backslash \{m\}} \left( \log \left( p_{\mathsf{M}}(m) \rho_{\A_m \B} \otimes \tau_{\A}^{\otimes[M]\backslash \{m\}} +  \tau_{\A}^{\otimes M} \otimes \rho_B \right) - \log \tau_{\A}^{\otimes M} \otimes \rho_{\B} \right)\right]
    \notag
    \\
    &= \sum_{m\in\mathsf{M}} \Tr\left[ p_{\mathsf{M}}(m)\rho_{\A\B} \left( \log ( p_{\mathsf{M}}(m) \rho_{\A\B} + \tau_{\A}\otimes \rho_{\B}) - \log \tau_{\A}\otimes \rho_{\B} \right) \right] 
    \notag
    \\
    &\leq \frac{c_{\alpha-1}}{\alpha-1} \sum_{m\in\mathsf{M}} \mathds{E}_{x(m)}   p_{\mathsf{M}}(m)^{\alpha} \e^{ (\alpha-1) \widetilde{D}_{\alpha} \left( \rho_{\A\B} \Vert \tau_{\A}\otimes \rho_{\B} \right) }
    \notag
    \\
    &= \frac{c_{\alpha-1}}{\alpha-1}  \e^{ - (\alpha-1) \left[ H_{\alpha}(\mathsf{M})_p - \widetilde{D}_{\alpha} \left( \rho_{\A\B} \Vert \tau_{\A}\otimes \rho_{\B} \right)  \right] }, \quad \forall\, \alpha \in (1,2],
\end{DispWithArrows}
where the first inequality follows from 
\begin{align*}&\Tr\left[\rho_{\A_m \B} \otimes \tau_{\A}^{\otimes[M]\backslash \{m\}} \left( \log \left( p_{\mathsf{M}}(m) \rho_{\A_m \B} \otimes \tau_{\A}^{\otimes[M]\backslash \{m\}} +  \tau_{\A}^{\otimes M} \otimes \rho_B \right) - \log \tau_{\A}^{\otimes M} \otimes \rho_{\B} \right)\right]
\\=& \Tr\left[\rho_{\A_m \B} \otimes \tau_{\A}^{\otimes[M]\backslash \{m\}} \left( \log \left( (p_{\mathsf{M}}(m) \rho_{\A_m \B}+\tau_{\A}^{\otimes M}) \otimes \tau_{\A}^{\otimes[M]\backslash \{m\}}\right) - \log \tau_{\A}\otimes \rho_{\B} \otimes \tau_{\A}^{\otimes[M]\backslash \{m\}}\right)\right]
\\=& \Tr\left[\rho_{\A_m \B} \otimes \tau_{\A}^{\otimes[M]\backslash \{m\}} \left( \log (p_{\mathsf{M}}(m) \rho_{\A_m \B}+\tau_{\A}^{\otimes M})  - \log \tau_{\A}\otimes \rho_{\B} \right)\right]
\\=& \Tr\left[\rho_{\A_m \B}  \left( \log (p_{\mathsf{M}}(m) \rho_{\A_m \B}+\tau_{\A}^{\otimes M})  - \log \tau_{\A}\otimes \rho_{\B} \right)\right]\ 
\end{align*}
and the second inequality follows from Theorem~\ref{theo:sharp_one-shot} with $A\leftarrow p_{\mathsf{M}}(m) \rho_{\A\B} $ and $B\leftarrow \tau_{\A}\otimes \rho_{\B}$.
\end{proof}

If $p_{\mathsf{M}}$ is uniform, we then have the following achievable (regularized) error exponent for any $n$-fold product expansion: for all $R > D(\rho_{\A\B}\Vert \tau_{\A}\otimes \rho_{\B})$,
\begin{align}
	\lim_{n\to \infty} -\frac1n \log \Delta^{D}_{\mathrm{e}^{nR}}\left( \rho_{AB}^{\otimes n} \,\Vert\, \tau_{\A}^{\otimes n} \right)
	\geq
	\sup_{\alpha\in (1,2]} (\alpha-1) \left( R - \widetilde{D}_\alpha(\rho_{AB}\,\Vert\, \tau_{\A}\otimes \rho_B)  \right).
\end{align}

\subsubsection{Quantum State Redistribution} \label{sec:QSR}


\begin{defn}[Quantum State Redistribution] \label{defn:QSR}

	Let $\rho_{\R \A \C \B}$ be a pure state.
	\begin{enumerate}[1.]
		\item 
        Alice has quantum registers $\A$ and $\C$ at sender,
        Bob has a quantum register $\B$ at receiver, and $\R$ is an inaccessible reference system.
        The initial state of the protocol is $\rho_{\R \A \C \B}$.
		
		\item A resource of free entanglement, say $|\tau\rangle_{\bar{\A} \bar{\B}}$, is shared between the sender (holding register $\bar{\A}$) and the receiver (holding register $\bar{\B}$), and noiseless one-way classical communication from the sender to receiver is available.
		
		\item Alice applies a local operation on her system and the shared entanglement to obtain $\log M$ nats of classical messages.
		
		\item The sender sends the above message to the receiver via one-way noiseless classical communication.
		
		\item 
		Upon receiving the messages, the receiver applies a local operation on his shared entanglement to obtain an overall resulting state $\widehat{\rho}_{\R \A \B \C}$ and now the quantum register $\C$ is held by Bob at the receiver.
\end{enumerate}
    
An $(M, \eps)$ Quantum State Redistribution protocol for $\rho_{\R \A \C}$ with entanglement $|\tau\rangle_{\bar{\A}' \bar{\B}'}$ satisfies
		\begin{align}
			P\left( \widehat{\rho}_{\R \A \B \C} , \rho_{\R \A \B \C} \right) \leq \eps.
		\end{align}
\end{defn}

\begin{prop} \label{prop:QSR}
For any pure state $\rho_{\R\A\C\B} = |\rho\rangle\langle \rho|_{\R\A\C\B}$, there exists an $(M, \eps)$ Quantum State Redistribution protocol for $\rho_{\R\A\C\B}$ with entanglement $|\tau\rangle_{\bar{\A} \bar{\B}}^{\otimes M}$ (where $\bar{\A} \cong \bar{\B} \cong \C $) satisfying
	\begin{align} \label{eq:error_QSS}
	\eps &\leq
	\sqrt{ \frac{c_{\alpha-1}}{\alpha-1} } \cdot
    \e^{- \frac{(\alpha-1)}{2}\left[ \log M - {D}_{\alpha}(\rho_{\C\R\B} \Vert \tau_{\C} \otimes \rho_{\R\B} ) \right]  }, \quad \forall\, \alpha \in (1,2].
	\end{align}
\end{prop}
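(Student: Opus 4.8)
The plan is to reduce Quantum State Redistribution (QSR) to the convex splitting result already established in Proposition~\ref{prop:splitting}, together with a standard decoupling/purification argument and Uhlmann's theorem to translate a relative-entropy bound into a purified-distance bound. The starting observation is that QSR is, up to local isometries, equivalent to a convex-split step: the sender wants to ``split'' her part of the $\C$-system into one of $M$ entanglement registers $\bar{\A}_1,\dots,\bar{\A}_M$ so that, after classical communication of the index, the receiver can locally apply the inverse operation and recover $\C$ on his side. Concretely, one considers the state $\rho_{\C\R\B}$ obtained from the pure state $\rho_{\R\A\C\B}$, with $\tau_\C$ the maximally mixed state on $\C$ (the marginal of $|\tau\rangle_{\bar\A\bar\B}$), and applies Definition~\ref{defn:splitting} with the substitution $\rho_{\A\B} \leftarrow \rho_{\C(\R\B)}$ and $\tau_{\A}\leftarrow \tau_{\C}$, with $p_{\mathsf M}$ uniform on $[M]$ so that $H_\alpha(\mathsf M)_p = \log M$.

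First I would invoke Proposition~\ref{prop:splitting} in this instance to obtain
\begin{align}
\Delta^{D}_{M}\left( \rho_{\C\R\B} \,\Vert\, \tau_{\C} \right)
\;\leq\; \frac{c_{\alpha-1}}{\alpha-1}\, \e^{-(\alpha-1)\left[\log M - {D}_{\alpha}(\rho_{\C\R\B}\Vert \tau_{\C}\otimes \rho_{\R\B})\right]}, \quad \forall\, \alpha\in(1,2].
\end{align}
Next, by Pinsker-type reasoning one passes from relative entropy to fidelity: since $P(\omega,\xi)^2 = 1 - F(\omega,\xi)^2 \leq 1 - \e^{-D(\omega\Vert\xi)} \leq D(\omega\Vert\xi)$ for states, the convex-split state $\omega_{\bar\A_1\cdots\bar\A_M \R\B}$ is $\sqrt{\Delta^{D}_{M}}$-close in purified distance to the product state $\tau_{\C}^{\otimes M}\otimes\rho_{\R\B}$. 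Then I would purify both states — the product state is purified by $M$ copies of $|\tau\rangle_{\bar\A_m\bar\B_m}$ tensored with the purification $|\rho\rangle_{\R\A\C\B}$ of $\rho_{\R\B}$ (with $\C$ grouped appropriately on the sender side before communication), and $\omega$ is purified by the natural controlled-embedding isometry — and apply Uhlmann's theorem to produce the receiver's decoding isometry achieving $P(\widehat\rho_{\R\A\B\C},\rho_{\R\A\B\C}) \leq \sqrt{\Delta^{D}_{M}}$. Inserting the bound on $\Delta^D_M$ and taking square roots yields exactly \eqref{eq:error_QSS}, because $\sqrt{\tfrac{c_{\alpha-1}}{\alpha-1}}$ and the factor $\e^{-\frac{\alpha-1}{2}[\cdots]}$ emerge from the square root.

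The main obstacle is the \emph{protocol bookkeeping}: carefully identifying which local operations the sender and receiver perform so that the convex-split mixture really is (a purification of) the post-communication state, and verifying that classical one-way communication of $\log M$ nats suffices to index which $\bar\A_m$ was used. This is the place where one must be precise that $\tau_{\C}$ being maximally mixed makes $|\tau\rangle_{\bar\A\bar\B}$ a maximally entangled state of the right dimension, that the free entanglement $|\tau\rangle_{\bar\A\bar\B}^{\otimes M}$ supplies exactly the $M$ registers needed, and that the reference $\R$ and Bob's original register $\B$ are never touched. Once the reduction is set up correctly, the quantitative part is immediate from Proposition~\ref{prop:splitting} and Uhlmann's theorem, so I would spend the bulk of the write-up on the structural equivalence and keep the estimates to one line each.
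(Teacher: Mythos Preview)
Your approach is essentially the paper's: apply Proposition~\ref{prop:splitting} with $\rho_{\A\B}\leftarrow\rho_{\C(\R\B)}$, $\tau_{\A}\leftarrow\tau_{\C}$, uniform $p_{\mathsf M}$, convert the relative-entropy bound to purified distance via $P^2\leq D$, and invoke Uhlmann. Two bookkeeping corrections: the Uhlmann isometry acts on the \emph{sender's} registers $\A\C\bar{\A}_{[M]}\to\mathsf{M}\A\bar{\A}_{[M]}$ (the convex-split mixture lives on Bob's side $\R\B\bar{\B}_{[M]}$, so the purifying isometry is Alice's encoder; Bob's decoding is just selecting register $\bar{\B}_m$ after receiving $m$), and $\tau_{\C}$ need not be maximally mixed---the bound \eqref{eq:error_QSS} holds for whatever $|\tau\rangle_{\bar{\A}\bar{\B}}$ is chosen.
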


\begin{proof} \label{proof:QSR}
The achievability (i.e., the upper bound on $\eps$) of Quantum State Redistribution has been shown via convex splitting in Ref.~\cite{ADK+17} (see also \cite{AJW18, BCG25}).
Below, we will demonstrate how the sharpened convex splitting (Proposition~\ref{prop:splitting}) can improve the error bound using $\log M$ nats of noiseless classical communication. The idea of using convex splitting for Quantum State Redistribution is due to Anshu \textit{et al.} \cite{ADJ17}.
 
To begin the protocol, we let the sender (Alice) and the receiver (Bob) share $M$-copies of entanglement $\otimes_{m\in[M]} |\tau\rangle_{\bar{\A}_{m} \bar{\B}_{m}}$,
where Bob holds register $\bar{\B}_{m} \cong \C$  that purifies Alice's register $\bar{\A}_{m} \cong \C$, and $[M]:= \{1,2,\ldots, M\}$
We begin with the following pure state:
\begin{align} \label{eq:initial_bipartite}
|{\omega}\rangle := |\rho\rangle_{\R\A\C\B} 
\otimes_{m\in[M]} |\tau\rangle_{\bar{\A}_{m} \bar{\B}_{m}}.
\end{align}

Suppose, hopefully, by the protocol, we end up with the following pure state:
\begin{align} \label{eq:target_bipartite}
\begin{split} 
|\widehat{\omega}\rangle
&:= \frac{1}{\sqrt{M}} \sum_{m\in [M]} |m\rangle_{\mathsf{M}} 
|\rho\rangle_{\R\A\B\C_{m}} |0\rangle_{\bar{\A}_{m}}
\otimes_{\bar{m} \in [M]\setminus \{m\}} |\tau\rangle_{\bar{\A}_{\bar{m}} \bar{\B}_{\bar{m}}},
\end{split}
\end{align}
where Alice holds registers $\mathsf{M}$, $\A$, $\bar{\A}_{[M]} := \bar{\A}_{1} \bar{\A}_{2}\ldots \bar{\A}_{M}$,
Bob holds registers $\bar{\B}_{[M]} := \bar{\B}_{1} \bar{\B}_{2} \ldots \bar{\B}_{M}$, and notice that the register ${\C}_m \cong \bar{\B}_m$ for each $m\in[M]$ is held by Bob.
Alice measures her system $\mathsf{M}$, and sends the measurement outcome $m \in [M]$ to Bob via $\log M$ nats of classical communication.
At the receiver, Bob picks up the $m$-th register $\C_{m}$ to end up with $|\rho\rangle_{\R\A\B\C_{m}} \cong |\rho\rangle_{\R\A\B\C}$, which is exactly the target state we aimed for the Quantum State Splitting protocol.

Then, it remains to show that there exists a local operation protocol at Alice such that the desired state $|{\widehat{\omega}}\rangle$ in Eq.~\eqref{eq:target_bipartite} can be approximated via the Quantum State Redistribution protocol.
Note that the reduced state of the initial state $|{\omega}\rangle$ is
\begin{align}
{\omega}_{\R\B \bar{\B}_{[M]}} &= \rho_{\R\B}  \otimes_{m\in[M]} \tau_{\bar{\B}_{m}}. 
\end{align}
The convex splitting established in Proposition~\ref{prop:splitting} ensures that ${\omega}_{\R\B\bar{\B}_{[M]}}$ can be approximated  
by the following state
\begin{align} 
\begin{split}
\widehat{\omega}_{\R\B\bar{\B}_{[M]}}
&:= \frac{1}{M} \sum_{m\in [M] }
\rho_{\R\B\bar{\B}_{m}} 
\otimes_{\bar{m} \in [M]\setminus \{m\}} \tau_{\bar{\B}_{\bar{m}}},
\end{split}
\end{align}
within an error $\eps$ (in terms of purified distance) satisfying
\begin{align} \label{eq:error_QSR1}
\eps =   P\left(\widehat{\omega}_{\R\B\bar{\B}_{[M]}}, {\omega}_{\R\B \bar{\B}_{[M]}} \right)
    \leq \sqrt{ \frac{c_{\alpha-1}}{\alpha-1} } \cdot
    \e^{- \frac{(\alpha-1)}{2}\left[ \log M - {D}_{\alpha}(\rho_{\bar{B}\R\B} \Vert \tau_{\bar{B}} \otimes \rho_{\R\B} ) \right]  }, \quad \forall\, \alpha \in (1,2]
\end{align}
(by substituting registers $\A$ by $\bar{B}$ and $\B$ by $\R\B$).
Note that the sandwiched \Renyi divergence is invariant under isometry $\id_{\bar{B} \to \C}$ (since $\bar{B}\cong\C$); we can express the error bound in terms of ${D}_{\alpha}(\rho_{\C\R\B} \Vert \tau_{\C} \otimes \rho_{\R\B} )$. (Here, the register $\C$ is held by Alice or Bob is immaterial as it does not affect the divergence $\widetilde{D}_{\alpha}$.

Lastly, observe that $\widehat{\omega}_{\R\B \bar{\B}_{[M]}}$ is the reduced state of the desired pure state $|\widehat{\omega}\rangle$ given in Eq.~\eqref{eq:target_bipartite}. 
Hence, by Uhlmann's theorem (see Lemma~\ref{lemm:Uhlmann}), there exists an isometry $\mathscr{V}$ acting on register $\A\C\bar{\A}_{[M]}$ to register $\mathsf{M}\A\bar{\A}_{[M]}$ such that $P(\mathscr{V} (|{\omega}\rangle\langle \omega |), |\widehat{\omega}\rangle\langle\widehat{\omega}|)  = P(\widehat{\omega}_{\R\B\bar{\B}_{[M]}}, {\omega}_{\R\B \bar{\B}_{[M]}} )$.
Moreover, since the isometry $V$ is a local operation acting only on Alice's registers, this constitutes the Quantum State Splitting protocol with an error $\eps$.
\end{proof}

\subsection{Quantum Information Decoupling} \label{sec:decoupling}

\begin{defn}[Catalytic quantum information decoupling via removing a subsystem] \label{defn:decoupling}
Let $\rho_{\A\E}$ be a quantum state.
The protocol aims to decouple quantum information in system $\A$ from system $\E$ with assistance of a catalytic system $\bar{\A}$.
\begin{enumerate}[1.]
	\item 
	Alice holds a quantum register $\A$ and a catalytic register $\bar{\A}$, and Eve holds a quantum register $\E$.
	
	\item 
    Alice is free to choose a state $\tau_{\bar{\A}}$ in the catalytic system $\bar{\A}$.
	
	\item Alice applies a local unitary $\mathscr{U}$ on her systems $\A\bar{\A}$ to end up with systems $\A_1 \A_2$ (i.e., $|\A\bar{\A}| = |\A_1\A_2|)$, and then remove the system $\A_2$ (via partial trace).
\end{enumerate}		
	An $(M, \eps)$ catalytic quantum information decoupling protocol for $\rho_{\A\E}$ is the existence of the catalytic system $\bar{\A}$, a state $\tau_{\bar{\A}}$ on it, and a unitary $\mathscr{U}_{\A\bar{\A}\to \A_1\A_2}$ satisfying $|\A_2| \leq M$ and 
	\begin{align}
	\inf_{\omega_{\A_1}} P\left( \Tr_{\A_2}\left[ \mathscr{U}_{\A\bar{\A}\to \A_1\A_2}(\rho_{\A\E}\otimes \tau_{\bar{\A}}) \right], \omega_{\A_1} \otimes \rho_{\E} \right) \leq \eps,
	\end{align}
	where infimum is over all states $\omega_{\A_1}$.
\end{defn}


\begin{prop} \label{prop:catalystic_decoupling}
Let $\rho_{\A\E}$ be a quantum state.
Following Definition~\ref{defn:decoupling}, there exists an $(M, \eps)$ catalytic quantum information decoupling protocol for $\rho_{\A\E}$ satisfying
\begin{align}
    \eps 
    &\leq \sqrt{\frac{c_{\alpha-1}}{\alpha-1}} \e^{- (\alpha-1)\left[ \log M - \frac12 \inf_{\tau_{\A}} {D}_{\alpha}\left(\rho_{\A\E} \Vert \tau_{\A} \otimes \rho_{\E} \right) \right] }, \quad \forall\, \alpha\in(1,2].
\end{align}
\end{prop}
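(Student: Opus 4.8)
The plan is to deduce Proposition~\ref{prop:catalystic_decoupling} from the sharpened convex-splitting bound (Proposition~\ref{prop:splitting}), proceeding in close analogy with the proof of Proposition~\ref{prop:QSR}: purify the state, recast the decoupling task as an instance of convex splitting, and then use Uhlmann's theorem (Lemma~\ref{lemm:Uhlmann}) to convert the resulting closeness of reduced states into a genuine local protocol at Alice. The one new structural observation is that, for catalytic decoupling, removing a subsystem of dimension $M$ corresponds to convex splitting with $M^{2}$ copies; this is exactly what produces the factor $\tfrac12$ in front of $D_\alpha$ and, together with the passage from relative entropy to purified distance, the exponent $(\alpha-1)$ in the claimed bound.

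In detail, one would fix an arbitrary $\tau_\A$ with $\mathrm{supp}(\rho_\A)\subseteq\mathrm{supp}(\tau_\A)$, pick a purification $|\rho\rangle_{\R\A\E}$ of $\rho_{\A\E}$, set $N:=M^{2}$, and build the catalyst $\bar\A$ from (i) $N-1$ fresh copies of the $\A$-system prepared in $\tau_\A$ together with their purifying ancillas, and (ii) a pointer register $\mathsf{N}$ of dimension $N$, viewed as $\mathsf{N}_1\otimes\mathsf{N}_2$ with $|\mathsf{N}_1|=|\mathsf{N}_2|=M$. Alice's unitary $\mathscr{U}$ first prepares the uniform superposition on $\mathsf{N}$ and then applies the controlled permutation that swaps $\A$ into the $n$-th slot conditioned on $\mathsf{N}=n$; after this step the global state is pure, and tracing out the pointer, the ancillas and the reference, its marginal on the $N$ copy-registers together with $\E$ is exactly the convex-split mixture $\omega_{\A_1\cdots\A_N\E}=\tfrac1N\sum_{n}\rho_{\A_n\E}\otimes\tau_\A^{\otimes[N]\setminus\{n\}}$. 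Proposition~\ref{prop:splitting}, applied with uniform $p_{\mathsf{M}}$ on $N=M^{2}$ letters (so $H_\alpha(\mathsf{M})_p=\log M^{2}$), then gives
\[
\Delta^{D}_{M^{2}}\!\big(\rho_{\A\E}\,\Vert\,\tau_\A\big)
= D\!\big(\omega_{\A_1\cdots\A_N\E}\,\big\Vert\,\tau_\A^{\otimes N}\otimes\rho_\E\big)
\le \frac{c_{\alpha-1}}{\alpha-1}\,\e^{-(\alpha-1)\left[\log M^{2}-D_\alpha(\rho_{\A\E}\Vert\tau_\A\otimes\rho_\E)\right]},
\qquad \forall\,\alpha\in(1,2].
\]
Using the standard monotonicity $\widetilde D_{1/2}\le D$ and $1-\e^{-x}\le x$ yields $P\big(\omega_{\A_1\cdots\A_N\E},\tau_\A^{\otimes N}\otimes\rho_\E\big)\le\sqrt{\Delta^{D}_{M^{2}}(\rho_{\A\E}\Vert\tau_\A)}$. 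One then discards a subsystem $\A_2$ of dimension $M$ — absorbing the leftover dimension-$M$ part of the pointer into the pre-shared ancilla — and Uhlmann's theorem identifies the state on the kept register $\A_1$ and $\E$, up to this purified distance, with $\omega_{\A_1}\otimes\rho_\E$ for an $\omega_{\A_1}$ read off from a purification of $\tau_\A^{\otimes N}\otimes\rho_\E$. This produces an $(M,\eps)$ protocol with $\eps\le\sqrt{c_{\alpha-1}/(\alpha-1)}\;\e^{-(\alpha-1)[\log M-\frac12 D_\alpha(\rho_{\A\E}\Vert\tau_\A\otimes\rho_\E)]}$; since $\tau_\A$ was arbitrary, taking a near-optimizer of $\inf_{\tau_\A}D_\alpha(\rho_{\A\E}\Vert\tau_\A\otimes\rho_\E)$ gives the stated bound.

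I expect the delicate point to be exactly the step that shrinks the discarded subsystem from dimension $M^{2}$ (the full pointer, together with the ancillas introduced to purify $\tau_\A^{\otimes N}$) down to dimension $M$ — the decoupling-specific ``square-root'' gain. Making this rigorous requires a careful catalyst design: pre-sharing a dimension-$M$ maximally entangled ancilla and performing a teleportation-type ``folding'' of the pointer, while arguing that the purifying ancillas can be uncomputed (or absorbed) without re-coupling to $\E$. An acceptable alternative, mirroring how Proposition~\ref{prop:QSR} invokes prior achievability results, is to take the known reduction of catalytic decoupling to convex splitting from the literature — which already gives, in the setting of Definition~\ref{defn:decoupling}, $\eps\le\sqrt{\Delta^{D}_{M^{2}}(\rho_{\A\E}\Vert\tau_\A)}$ for every $\tau_\A$ — and then merely substitute the sharpened Proposition~\ref{prop:splitting}; the remaining steps, namely the relative-entropy-to-purified-distance passage and the optimization over $\tau_\A$, are routine.
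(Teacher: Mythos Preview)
Your proposal follows the same route as the paper: reduce catalytic decoupling to convex splitting via Proposition~\ref{prop:splitting}, pass to purified distance through $P\le\sqrt{D}$, and then invoke the equivalence between mixed-unitary (random-swap) decoupling and subsystem-removal decoupling from \cite[Proposition~6]{LY21a}. The paper is terser: it writes down the swap-mixture explicitly, identifies it with the convex-split state, applies Proposition~\ref{prop:splitting}, and immediately cites \cite[Proposition~6]{LY21a} for the conversion to the setting of Definition~\ref{defn:decoupling} --- precisely your ``acceptable alternative.''

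Your attempted explicit construction via a pointer register, controlled swaps, and Uhlmann's theorem is, as you yourself flag, incomplete at the key step where the discarded dimension must shrink from $M^{2}$ to $M$. That square-root gain is exactly the nontrivial content of the cited equivalence \cite[Proposition~6]{LY21a}; it does not fall out of Uhlmann's theorem applied to the convex-split marginal alone, and your sketch of ``absorbing the leftover dimension-$M$ part of the pointer'' and ``uncomputing the purifying ancillas'' does not constitute a proof. Your choice $N=M^{2}$ is the correct parametrization to land on the claimed exponent after the square root, so once you defer to \cite[Proposition~6]{LY21a} (as the paper does) the argument is complete.
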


\begin{proof}
The proof strategy follows in a similar way to that of \cite[Theorem 7]{LY21a}.
Our contribution is to employ a key technique of sharpened one-shot bound for convex splitting established in Proposition~\ref{prop:splitting}.
For completeness, we detail the proof below.

We first show that there exists an $(\sqrt{M},\eps)$ decoupling operation via random unitaries:
\begin{align}
\mathscr{R}_{\A\bar{A}}: X \mapsto \frac{1}{\sqrt{M}} \sum_{m=1}^{\sqrt{M}} \mathscr{U}_m(X),
\end{align}
where $\bar{\A} = \A_2 \A_3 \ldots \A_{\sqrt{M}}$, and each unitary $\mathscr{U}_m$ is a swap between system $\A_m$ with $\A_1 \cong \A$. 
Let the catalytic state be $\otimes_{m=2}^{\sqrt{M}} \tau_{\A_m}$.
We have
\begin{align}
\mathscr{R}_{\A\bar{\A}}\left( \rho_{\A\R} \otimes \otimes_{m=2}^{\sqrt{M}} \tau_{\A_m} \right)
= \frac{1}{M} \sum_{m=1}^{\sqrt{M}} \rho_{\A_m \R} \otimes_{\bar{m}\neq m} \tau_{\A_{\bar{m}}}.
\end{align}

Via the convex splitting established in Proposition~\ref{prop:splitting},
\begin{align}
P\left( \mathscr{R}_{\A\bar{\A}}\left( \rho_{\A\R} \otimes \otimes_{m=2}^{\sqrt{M}} \tau_{\A_m} \right), \rho_{\R} \otimes \tau_{\A}^{\otimes \sqrt{M}} \right)
&\leq \sqrt{\frac{c_{\alpha-1}}{\alpha-1}} \e^{- (\alpha-1)\left[ \frac12 \log M - \frac12 {D}_{\alpha}\left(\rho_{\A\R} \Vert \tau_{\A} \otimes \rho_{\R} \right) \right] }, \quad \forall\, \alpha\in(1,2].
\end{align}

Lastly, recall that the existence of a $(\sqrt{M},\eps)$ decoupling map via the above random unitaries is equivalent to the existence of an $({M},\eps)$ decoupling map by removing systems (Definition~\ref{defn:decoupling}) \cite[Proposition 6]{LY21a}, we conclude the proof.
\end{proof}

\subsection{Quantum Channel Simulation} \label{sec:simulation}


\begin{defn}[Entanglement-assisted quantum channel simulation]\label{defn:channel-simulation}
Let $\mathscr{N}_{\A\to \B}$ be a quantum channel.
\begin{enumerate}[1.]
	\item 
	Alice at the sender holds a quantum register $\A$, and Bob at the receiver holds a quantum register
	$\B$, and $\R$ is an inaccessible reference system.
	
	\item A resource of free entanglement is shared between Alice (holding registers $\bar{\A}$) and Bob (holding register $\bar{\B}$).
	
	\item Alice applies a local operation on her systems and sends $\log M$
	nats of classical information to the receiver.

	\item 
	Upon receiving the message, Bob applies a local operation on his own system.
\end{enumerate}		
	An $( M, \eps)$ quantum channel simulation protocol for $\mathscr{N}_{\A\to\B}$ with a \emph{fixed} pure input state $\theta_{\R\A}$ satisfies
	\begin{align}
	P\left( \widehat{\mathscr{N}}_{\A\to \B}(\theta_{\R\A}) , {\mathscr{N}}_{\A\to \B}(\theta_{\R\A}) \right) \leq \eps,
	\end{align}
	where $\widehat{\mathscr{N}}_{\A\to \B}(\theta_{\R\A})$ is the effectively resulting state from Alice's register $\A$ to Bob's register $\B$. 
	The $\log M$ denotes the classical communication costs in the channel simulation protocol.

    An $( M, \eps)$ quantum channel simulation protocol for $\mathscr{N}_{\A\to\B}$ with an \emph{arbitrary} pure input state satisfies
	\begin{align}
	\sup_{\theta_{\R\A}} P\left( \widehat{\mathscr{N}}_{\A\to \B}(\theta_{\R\A}) , {\mathscr{N}}_{\A\to \B}(\theta_{\R\A}) \right) \leq \eps,
	\end{align}
    where the supremum is taken over all pure states $\theta_{\R\A}$.
\end{defn}

\subsubsection{Channel Simulation With a Fixed Input} \label{sec:QSS}

\begin{prop} \label{prop:simulation_fixed}
Let $\mathscr{N}_{\A\to \B}$ be a quantum channel and let $\theta_{\R\A}$ be a fixed pure input state.
Following Definition~\ref{defn:channel-simulation}, there exists an $( M, \eps)$ quantum channel simulation protocol for $\mathscr{N}_{\A\to\B}$ with the input state $\theta_{\R\A}$ satisfying
	\begin{align} \label{eq:error_simulation_fixed}
	\eps &\leq
	\sqrt{ \frac{c_{\alpha-1}}{\alpha-1} } \cdot
    \e^{- \frac{(\alpha-1)}{2}\left[ \log M - \inf_{\tau_{\B}} {D}_{\alpha}(\rho_{\B\R} \Vert \tau_{\B} \otimes \rho_{\R} ) \right]  }, \quad \forall\, \alpha \in (1,2],
	\end{align}
    where $\rho_{\R\B} := \mathscr{N}_{\A\to\B}(\theta_{\R\A})$ and the infimum on the right-hand side is over all states $\tau_{\B}$.
\end{prop}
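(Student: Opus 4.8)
The plan is to reduce entanglement-assisted channel simulation with a fixed input to \emph{quantum state splitting} and then feed in the sharpened convex-splitting bound. First I would fix a Stinespring isometric dilation $V_{\A\to\B\E}$ of $\mathscr{N}_{\A\to\B}$ and set $|\rho\rangle_{\R\B\E} := (\I_{\R}\otimes V_{\A\to\B\E})|\theta\rangle_{\R\A}$, so that $|\rho\rangle_{\R\B\E}$ is pure, $\rho_{\R\B}=\mathscr{N}_{\A\to\B}(\theta_{\R\A})$, and $\rho_{\R}=\theta_{\R}$. Since Alice can prepare $|\rho\rangle_{\R\B\E}$ \emph{locally} (by applying $V_{\A\to\B\E}$), the only remaining task is for Alice to hand the register $\B$ to Bob while keeping $\E$; this is exactly quantum state splitting of the pure state $|\rho\rangle_{\R\B\E}$ with $\R$ playing the role of the reference.

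Quantum state splitting is the special case of Quantum State Redistribution (Proposition~\ref{prop:QSR}) in which Bob's initial register is trivial, so I would specialize the argument in the proof of Proposition~\ref{prop:QSR} with the identifications $\C\leftarrow\B$, $\A\leftarrow\E$, $\R\leftarrow\R$, trivial $\B$. Concretely, for any chosen state $\tau_{\B}$ with $\supp(\rho_{\B})\subseteq\supp(\tau_{\B})$, pre-share $M$ copies of a purification $|\tau\rangle_{\bar{\A}_{m}\bar{\B}_{m}}$ (with $\bar{\A}_{m}\cong\bar{\B}_{m}\cong\B$, Alice holding $\bar{\A}_{m}$ and Bob holding $\bar{\B}_{m}$), and consider the target pure state
\begin{align*}
|\widehat{\omega}\rangle := \frac{1}{\sqrt{M}}\sum_{m\in[M]} |m\rangle_{\mathsf{M}}\,|\rho\rangle_{\R\bar{\B}_{m}\E}\,|0\rangle_{\bar{\A}_{m}}\otimes_{\bar{m}\neq m}|\tau\rangle_{\bar{\A}_{\bar{m}}\bar{\B}_{\bar{m}}},
\end{align*}
where in the $m$-th slot the register carrying the $\B$-part of $|\rho\rangle$ has been relabeled to $\bar{\B}_{m}$. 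Tracing out Alice's registers, the reduced state of $|\widehat{\omega}\rangle$ on the reference together with Bob's entanglement halves is $\widehat{\omega}_{\R\bar{\B}_{[M]}}=\frac{1}{M}\sum_{m}\rho_{\R\bar{\B}_{m}}\otimes_{\bar{m}\neq m}\tau_{\bar{\B}_{\bar{m}}}$, while the initial pure state $|\rho\rangle_{\R\B\E}\otimes_{m}|\tau\rangle_{\bar{\A}_{m}\bar{\B}_{m}}$ restricts there to $\rho_{\R}\otimes_{m}\tau_{\bar{\B}_{m}}$. Applying Proposition~\ref{prop:splitting} with $\rho_{\A\B}\leftarrow\rho_{\B\R}$, $\tau_{\A}\leftarrow\tau_{\B}$ and uniform $p_{\mathsf{M}}$ (so $H_{\alpha}(\mathsf{M})_{p}=\log M$), and using isometric invariance of $D_{\alpha}$ to identify $\bar{\B}_{m}$ with $\B$, bounds the relative entropy $D(\widehat{\omega}_{\R\bar{\B}_{[M]}}\Vert\omega_{\R\bar{\B}_{[M]}})$ by $\frac{c_{\alpha-1}}{\alpha-1}\e^{-(\alpha-1)[\log M-D_{\alpha}(\rho_{\B\R}\Vert\tau_{\B}\otimes\rho_{\R})]}$; then $P(\cdot,\cdot)=\sqrt{1-\e^{-\widetilde{D}_{1/2}(\cdot\Vert\cdot)}}\leq\sqrt{\widetilde{D}_{1/2}(\cdot\Vert\cdot)}\leq\sqrt{D(\cdot\Vert\cdot)}$ upgrades this to the purified-distance bound $\sqrt{c_{\alpha-1}/(\alpha-1)}\,\e^{-\frac{\alpha-1}{2}[\log M-D_{\alpha}(\rho_{\B\R}\Vert\tau_{\B}\otimes\rho_{\R})]}$. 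By Uhlmann's theorem (Lemma~\ref{lemm:Uhlmann}) there is an isometry $\mathscr{V}$ acting only on Alice's registers ($\B\E\bar{\A}_{[M]}\to\mathsf{M}\E\bar{\A}_{[M]}$) with $P(\mathscr{V}|\omega\rangle,|\widehat{\omega}\rangle)$ equal to that bound; Alice then measures $\mathsf{M}$ and sends the outcome $m$ to Bob in $\log M$ nats, and Bob keeps $\bar{\B}_{m}$ as the output $\B$-register and discards the rest. Composing the local dilation with this protocol gives an $(M,\eps)$ channel simulation protocol whose error, by monotonicity of purified distance under the final measurement and partial traces (in particular discarding $\E$), is at most the displayed bound; optimizing over the free $\tau_{\B}$ yields \eqref{eq:error_simulation_fixed}.

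The step I expect to be the main obstacle is the register bookkeeping rather than any analytic estimate: one must verify that \emph{every} operation Alice performs (both $V_{\A\to\B\E}$ and the Uhlmann isometry $\mathscr{V}$) is genuinely local and that the sole communication is the $\log M$ classical nats, and — most importantly — that the divergence that comes out is $D_{\alpha}(\rho_{\B\R}\Vert\tau_{\B}\otimes\rho_{\R})$ with reference marginal $\rho_{\R}$ \emph{and not} $\rho_{\R\E}$; this is guaranteed precisely by purifying the channel output with an environment $\E$ held on Alice's side (rather than folding $\E$ into the reference) and never trying to transmit $\E$. The remaining ingredients — the convex-splitting bound itself (Proposition~\ref{prop:splitting}, ultimately Theorem~\ref{theo:sharp_one-shot}), the relative-entropy-to-purified-distance conversion, and the stability of the error under Uhlmann's isometry followed by measurement and partial trace — are routine.
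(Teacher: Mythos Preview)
Your proposal is correct and follows essentially the same route as the paper: Stinespring-dilate $\mathscr{N}$ so Alice holds $\E\B$, reduce to quantum state splitting (the $\B$-void case of Proposition~\ref{prop:QSR}) with the identifications $\R\leftarrow\R$, $\A\leftarrow\E$, $\C\leftarrow\B$, invoke the convex-splitting bound, then trace out $\E$ and optimize over $\tau_{\B}$. The only cosmetic difference is that the paper cites Proposition~\ref{prop:QSR} directly for the purified-distance bound, whereas you unwind its proof and make the $D\to P$ conversion explicit via $P\leq\sqrt{\widetilde{D}_{1/2}}\leq\sqrt{D}$; the content is the same.
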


Proposition~\ref{prop:simulation_fixed} improves on \cite[Proposition 13]{LY21b} by a factor $c_{\alpha-1} \in [1/2,1]$ for $\alpha \in (1,2]$ and by removing the dimension-dependent factor $|\text{spec}(\mathcal{H}_{\B})|^{\alpha-1}$.

\begin{proof}
Let $\mathscr{U}_{\A\to\B\E}$ be a Stinespring dilation of $\mathscr{N}_{\A\to\B}$.
Alice first simulates a local isometry $\mathscr{U}_{\A\to\B\E}$ at her side to obtain the state
\begin{align}
\rho_{\R\E\B} :=
\left(\mathscr{U}_{\A\to\B\E} \otimes \id_{\R \bar{\R}} \right) \theta_{\A\R}.
\end{align}
Next, we apply the Quantum State Splitting for $\rho_{\R\E\B}$ given in Section~\ref{sec:QSR} with registers 
$\R\leftarrow \R$ at the reference system, 
$\A \leftarrow \E$ at Alice, 
and $\C \leftarrow \B$ at Bob (i.e., a Quantum State Redistribution with the register $\B$ being void), to send the channel output system $\B$ to Bob via $\log M$ nats of classical communication and $M$-copies of $|\tau\rangle_{\bar{\A}\bar{\B}}$, where $\bar{B}\cong \B$.

Let the overall resulting state be $\mathscr{U}_{\A\to\B\E}(\theta_{\A\R})$. We obtain the following error bound by Proposition~\ref{prop:QSR}:
\begin{align}
    P\left( \hat{\mathscr{U}}_{\A\to\B\E}(\theta_{\A\R}), \mathscr{U}_{\A\to\B\E}(\theta_{\A\R}) \right)
    &\leq \sqrt{ \frac{c_{\alpha-1}}{\alpha-1} } \cdot
    \e^{- \frac{(\alpha-1)}{2}\left[ \log M - {D}_{\alpha}(\rho_{\B\R} \Vert \tau_{\B} \otimes \rho_{\R} ) \right]  }, \quad \forall\, \alpha \in (1,2].
\end{align}
Lastly, by tracing out the systems $\E$, the data-processing inequality of purified distance $P(\cdot,\cdot)$, and optimizing over all shared entangled states $|\tau\rangle_{\bar{\A}\bar{\B}}$, we conclude the proof.
\end{proof}

\subsubsection{Channel Simulation With Arbitrary Inputs} \label{sec:simulation_arbitrary}

\begin{prop} \label{prop:simulation_arbitrary}
Let $\mathscr{N}_{\A\to \B}$ be a quantum channel.
Following Definition~\ref{defn:channel-simulation}, there exists an $( M, \eps)$ quantum channel simulation protocol for $\mathscr{N}_{\A\to\B}$ with arbitrary pure input states satisfying
	\begin{align} \label{eq:error_simulation_arbitrary}
	\eps &\leq
	\sqrt{ \frac{c_{\alpha-1}}{\alpha-1} } \cdot
    \e^{- \frac{(\alpha-1)}{2}\left[ \log M - \sup_{\theta_{\R\A}} \inf_{\tau_{\B}} {D}_{\alpha}(\rho_{\B\R} \Vert \tau_{\B} \otimes \rho_{\R} ) \right]  }, \quad \forall\, \alpha \in (1,2],
	\end{align}
    where $\rho_{\R\B} := \mathscr{N}_{\A\to\B}(\theta_{\R\A})$, 
    the supremum is over all pure input states $\theta_{\R\A}$, and
    the infimum is over all states $\tau_{\B}$.
\end{prop}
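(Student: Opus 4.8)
The plan is to reduce the arbitrary-input case to the fixed-input case (Proposition~\ref{prop:simulation_fixed}) by the standard de Finetti / post-selection technique for channels. The obstacle is that Proposition~\ref{prop:simulation_fixed} produces a protocol tailored to a single input state $\theta_{\R\A}$, whereas here we must handle all pure inputs simultaneously; a naive union bound over a net of inputs would degrade the communication cost. Instead I would exploit the fact that the purified distance is monotone under the partial trace and that the simulation task for an \emph{arbitrary} input reduces to simulating the channel on the \emph{universal} (post-selection) input state.

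\emph{Step 1: Reduce to a universal input.} Fix a purification $\theta^{\mathrm{univ}}_{\R\A}$ of the maximally mixed state on $\A$, with $\R \cong \A$. I would invoke the post-selection lemma of Christandl--K\"onig--Renner: any channel $\mathscr{N}_{\A\to\B}$ run on $\theta^{\mathrm{univ}}_{\R\A}\otimes(\text{reference copies})$ (in the $n$-copy symmetric extension) dominates, up to a $\mathrm{poly}(n)$ factor, the worst-case input. In the present one-shot formulation the cleaner route is: a protocol that $\eps$-simulates $\mathscr{N}$ on the single input $\theta^{\mathrm{univ}}_{\R\A}$ with $\R \cong \A$ automatically $\eps$-simulates $\mathscr{N}$ on every pure input $\theta_{\R\A}$, because any such $\theta_{\R\A}$ is obtained from $\theta^{\mathrm{univ}}_{\R\A}$ by a local (trace-non-increasing) map on the reference side, and $P(\cdot,\cdot)$ is monotone under such maps on the reference. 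Hence $\sup_{\theta_{\R\A}} P(\widehat{\mathscr{N}}(\theta_{\R\A}),\mathscr{N}(\theta_{\R\A})) \le P(\widehat{\mathscr{N}}(\theta^{\mathrm{univ}}_{\R\A}),\mathscr{N}(\theta^{\mathrm{univ}}_{\R\A}))$ for the protocol built from the universal input.

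\emph{Step 2: Apply the fixed-input bound.} Apply Proposition~\ref{prop:simulation_fixed} with $\theta_{\R\A} \leftarrow \theta^{\mathrm{univ}}_{\R\A}$. Writing $\rho_{\R\B} := \mathscr{N}_{\A\to\B}(\theta^{\mathrm{univ}}_{\R\A})$, this yields an $(M,\eps)$ protocol with
\begin{align}
\eps \le \sqrt{\frac{c_{\alpha-1}}{\alpha-1}}\cdot \e^{-\frac{(\alpha-1)}{2}\left[\log M - \inf_{\tau_{\B}} {D}_{\alpha}(\rho_{\R\B}\Vert \tau_{\B}\otimes\rho_{\R})\right]}, \quad \forall\,\alpha\in(1,2].
\end{align}

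\emph{Step 3: Identify the exponent.} It remains to check $\inf_{\tau_{\B}}{D}_{\alpha}(\mathscr{N}(\theta^{\mathrm{univ}}_{\R\A})\Vert \tau_{\B}\otimes\rho_{\R}) = \sup_{\theta_{\R\A}}\inf_{\tau_{\B}}{D}_{\alpha}(\mathscr{N}(\theta_{\R\A})\Vert \tau_{\B}\otimes\rho_{\R})$. The inequality ``$\le$'' is trivial since the universal state is one admissible choice; the reverse ``$\ge$'' follows because the integral \Renyi divergence ${D}_\alpha$ is monotone under the reference-side maps that send $\theta^{\mathrm{univ}}_{\R\A}$ to an arbitrary $\theta_{\R\A}$ (data processing for ${D}_\alpha$, valid since it is an integral of Hockey-stick/trace functionals), together with the fact that $\inf_{\tau_{\B}}{D}_\alpha(\cdot\Vert\tau_{\B}\otimes\cdot)$ is itself monotone. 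Combining Steps 1--3 gives \eqref{eq:error_simulation_arbitrary}. The main obstacle is making Step~1 airtight in the \emph{one-shot, entanglement-assisted} setting: one must confirm that the protocol from Proposition~\ref{prop:simulation_fixed}, built for the universal input, is genuinely input-oblivious (Alice's and Bob's operations do not depend on $\theta$) so that it can be reused on every $\theta_{\R\A}$ — this is where one invokes that quantum state splitting of $\rho_{\R\E\B}$ acts only on the purifying/channel registers and leaves the input interface untouched.
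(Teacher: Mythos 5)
Your proposal has a genuine gap at Step 1, and it is not the approach the paper takes.

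You claim that a protocol which $\eps$-simulates $\mathscr{N}$ on the maximally entangled input $\theta^{\mathrm{univ}}_{\R\A}$ automatically $\eps$-simulates $\mathscr{N}$ on \emph{every} pure input, because any $\theta_{\R\A}$ is obtained from $\theta^{\mathrm{univ}}_{\R\A}$ by a local, trace-non-increasing map on the reference and $P(\cdot,\cdot)$ is monotone. Both halves of this claim fail. If $\theta_{\R\A}$ has Schmidt coefficients $\{p_i\}$, the local operator that transforms $\theta^{\mathrm{univ}}_{\R\A}$ into $\theta_{\R\A}$ has eigenvalues $\{\sqrt{d\, p_i}\}$, which exceed one whenever some $p_i > 1/d$ (e.g. a product input $|0\rangle_{\R}|0\rangle_{\A}$), so the corresponding Kraus map is not a subchannel. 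Even after renormalizing, the resulting ``filter'' is not a CPTP map, and the purified distance / fidelity is not monotone under filtering-and-renormalizing. What you are implicitly asserting is that the diamond-norm distance between $\widehat{\mathscr{N}}$ and $\mathscr{N}$ is bounded by their Choi-state distance, which is false in general — it can differ by a factor of $d_{\A}$. Restoring that factor would re-introduce exactly the dimension dependence the paper is designed to eliminate, and is the reason the post-selection/de Finetti route (which you also gesture at) is avoided here. Step~3 inherits the same problem: the ``data processing'' argument you invoke would require the reference-side map to be CPTP, and the universal input is not, in general, the maximizer of $\sup_{\theta}\inf_{\tau}D_{\alpha}(\mathscr{N}(\theta)\Vert\tau\otimes\rho_{\R})$.

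The paper's actual route is different in kind. Rather than reducing to a single universal input, it invokes a minimax identity (Lemma~\ref{lemm:minimax}, from \cite{CJT24}) to swap $\inf_{\text{protocols}}\sup_{\theta}$ into $\sup_{\theta}\inf_{\text{protocols}}$, so that Proposition~\ref{prop:simulation_fixed} may be applied \emph{per input} $\theta_{\R\A}$ and the worst input chosen afterward. A second application of Sion's minimax theorem — this time for $\sup_{\alpha}$ and $\inf_{\theta}$ — is justified by convexity of $\alpha\mapsto(\alpha-1)\inf_{\tau_{\B}}\widetilde{D}_{\alpha}(\rho_{\R\B}\Vert\rho_{\R}\otimes\tau_{\B})$ (Lemma~\ref{lemm:convexity_order}) and concavity of $\theta_{\R\A}\mapsto\inf_{\tau_{\B}}\widetilde{D}_{\alpha}$ (Lemma~\ref{lemm:concavity_state}). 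This is precisely the mechanism that sidesteps the post-selection technique and its dimension penalty, and it cannot be replaced by your universal-input reduction.
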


Proposition~\ref{prop:simulation_arbitrary} improves on \cite[Theorem 9]{LY21b} by a factor $c_{\alpha-1} \in [1/2,1]$ for $\alpha \in (1,2]$ and by removing the dimension-dependent factor $ |\text{spec}(\mathcal{H}_{\A})|^2 \cdot
|\text{spec}(\mathcal{H}_{\R})|^{\alpha-1} \cdot |\text{spec}(\mathcal{H}_{\B})|^{\alpha-1}$.

In the $n$-fold independent and identical setting where $\mathscr{N}_{\A\to\B}\leftarrow \mathscr{N}_{\A\to\B}^{\otimes n}$ and $M = \exp(nR)$ with $R > D(\rho_{\B\R}\Vert \rho_{\B} \otimes \rho_{\R})$,
the additivity property (see Lemma~\ref{lemm:sandwiched_channel_additive} below) and Proposition~\ref{prop:simulation_arbitrary} yield an achievable (regularized) error exponent:
\begin{align}
\sup_{\alpha \in (1,2]} \frac{(\alpha-1)}{2}\left[ R - \sup_{\theta_{\R\A}} \inf_{\tau_{\B}} \widetilde{D}_{\alpha}(\rho_{\B\R} \Vert \tau_{\B} \otimes \rho_{\R} ) \right],
\end{align}
which has been shown to be tight for $R < \left.\frac{\d}{\d s} s \inf_{\tau_{\B}} \widetilde{D}_{1+s}(\rho_{\B\R}\Vert\tau_{\B}\otimes \rho_{\R})\right|_{s=1}$ \cite[Theorem 11]{LY21b}.

Before commencing the proof, let us add some historical remarks.
Quantum channel simulation with arbitrary input states has been extensively studied in the literature \cite{BSS+02,BDH+14,BCR11,LY21b, BCG25}.
Preliminary methods of handling arbitrary input states rely on the so-called \emph{post-selection technique} \cite{ChristKoenRennerPostSelect}, which is also known as the \emph{de Finetti reduction}.
A recent work \cite{CJT24} proposed the idea of using the minimax identity to bypass the post-selection technique for bounding the minimal communication cost.
Below, we demonstrate that the minimax identity is also useful in the error bound by resorting to a concavity and convexity properties of the sandwiched \Renyi divergence (Lemmas~\ref{lemm:concavity_state} and \ref{lemm:convexity_order}).

\begin{proof}[Proof of Proposition~\ref{prop:simulation_arbitrary}]\label{proof:simulation_arbitrary}
By Definition~\ref{defn:channel-simulation}, we would like to show the existence of an $(M,\eps)$-quantum simulation protocol such that for all pure input states $\theta_{\R\A}$, the purified distance is at most $\eps$.
Thanks to the minimax identity to interchange the supremum between $\theta_{\R\A}$ and infimum between all protocols (Lemma~\ref{lemm:minimax}), it is sufficient to show the error bound for each input states $\theta_{\R\A}$, and then choose the worst input in the end.

For any pure input state $\theta_{\R\A}$, we apply Proposition~\ref{prop:simulation_fixed} to simulate a channel $\hat{\mathscr{N}}_{\A\to\B}$ with an error
\begin{align}
P\left( \widehat{\mathscr{N}}_{\A\to \B}(\theta_{\R\A}) , {\mathscr{N}}_{\A\to \B}(\theta_{\R\A}) \right) 
&\leq
	\sqrt{ \frac{c_{\alpha-1}}{\alpha-1} } \cdot
    \e^{- \frac{(\alpha-1)}{2}\left[ \log M - \inf_{\tau_{\B}} {D}_{\alpha}(\rho_{\B\R} \Vert \tau_{\B} \otimes \rho_{\R} ) \right]  }, \quad \forall\, \alpha \in (1,2],
	\end{align}
where $\rho_{\R\B} := \mathscr{N}_{\A\to\B}(\theta_{\R\A})$.

Then, we maximize over all pure input states $\theta_{\R\A}$ (i.e., the worst case scenario).
By invoking the convexity of the map $\alpha \mapsto (\alpha-1) \inf_{\tau_{\B}} \widetilde{D}_{\alpha} (\rho_{\B\R} \Vert \tau_{\B} \otimes \rho_{\R})$ on the convex set $\alpha \in (1,2]$ (Lemma~\ref{lemm:convexity_order}),
the concavity of the map $\theta_{\R\A} \mapsto \inf_{\tau_{\B}} \widetilde{D}_{\alpha} (\rho_{\B\R} \Vert \tau_{\B} \otimes \rho_{\R})$ (Lemma~\ref{lemm:concavity_state})\footnote{The input state $\theta_{\R\A}$ can be taken to be general mixed state, but it can be attained by pure states due to the convexity. Hence, without loss of generality, we only consider pure input states in Definition~\ref{defn:channel-simulation}},
and Sion's minimax theorem, we have
\begin{align}
\inf_{\theta_{\R\A}} \sup_{\alpha \in (1,2]} (1-\alpha) \inf_{\tau_{\B}} \widetilde{D}_{\alpha}(\rho_{\B\R} \Vert \tau_{\B} \otimes \rho_{\R} )
&=  \sup_{\alpha \in (1,2]} \inf_{\theta_{\R\A}} (1-\alpha) \inf_{\tau_{\B}} \widetilde{D}_{\alpha}(\rho_{\B\R} \Vert \tau_{\B} \otimes \rho_{\R} ),
\end{align}
which concludes the proof.
\end{proof}

\section*{Acknowledgments}
PL and HC are supported under grants 113-2119-M-007-006, 113-2119-M-001-006, NSTC 114-2124-M-002-003, NTU-113V1904-5, NTU-CC-113L891605, NTU-
113L900702, NTU-114L900702, and NTU-114L895005.
LG is supported in part by National Natural Science Foundation of China (12401163, 62171212).
CH received funding by the Deutsche Forschungsgemeinschaft (DFG, German
Research Foundation) – 550206990.


\appendix

\section{Auxiliary Lemmas} \label{sec:lemmas}

\begin{lemm}[Finite-rank approximations, relative entropy and sandwiched quasi divergence {\cite[Lemmas 3 \& 4]{Lin74}}, {\cite[Corollary 5.12]{Petz_book_1993}}, {\cite[Proposition~III.39]{Mos23}}] \label{lemm:finit-rank}
Let $A$ and $B$ be non-zero trace-class operators on an infinite-dimensional separable Hilbert space.
Then, for any $\alpha >1 $,
\begin{align}
D(A \Vert B) 
&\coloneq \Tr\left[ A (\log A - \log B) + B-A \right]
= \lim_{n\to\infty} D(P_n A P_n \Vert P_n B P_n),
\\
\widetilde{Q}_{\alpha}(A \Vert B) 
&\coloneq 
\Tr\left[ \left( \sigma^{-\frac{1}{2\alpha}} \rho  \sigma^{-\frac{1}{2\alpha}} \right)^{\alpha} \right]
= \lim_{n\to\infty} \widetilde{Q}_{\alpha}(P_n A P_n \Vert P_n B P_n)
\end{align}
where $(P_n)_{n\in\mathds{N}}$ is any sequence of projections such that $\Tr[P_n] = n$, $P_{n-1} \leq P_n$, and $P_n \nearrow \I$ in the strong operator topology.
\end{lemm}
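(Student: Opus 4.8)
The plan is to sandwich each finite-rank quantity between an upper bound coming from the data-processing inequality and a lower bound coming from lower semicontinuity, and then squeeze. First I would check the elementary fact that $P_nAP_n\to A$ and $P_nBP_n\to B$ in trace norm: writing $A=\sum_k\lambda_k|e_k\rangle\langle e_k|$ with $\lambda_k\ge 0$ and $\sum_k\lambda_k=\Tr A<\infty$, one has
\[
\|A-P_nAP_n\|_1\le \sum_k\lambda_k\,\big\|\,|e_k\rangle\langle e_k|-|P_ne_k\rangle\langle P_ne_k|\,\big\|_1,
\]
where each summand tends to $0$ because $P_n\nearrow\I$ strongly (so $P_ne_k\to e_k$) and is dominated by $2\lambda_k$; dominated convergence for series then gives the claim, and the same argument applies to $B$.

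Next, for the upper bound I would use the pinching channel $\mathcal P_n(X)=P_nXP_n+(\I-P_n)X(\I-P_n)$, which is completely positive and trace preserving. Data processing of the Lindblad extension of Umegaki's relative entropy, and of $\widetilde D_\alpha$ for $\alpha>1$ — which transfers verbatim to $\widetilde Q_\alpha$ since $\mathcal P_n$ is trace preserving — yields $D(\mathcal P_n(A)\Vert\mathcal P_n(B))\le D(A\Vert B)$ and $\widetilde Q_\alpha(\mathcal P_n(A)\Vert\mathcal P_n(B))\le\widetilde Q_\alpha(A\Vert B)$. Since $\mathcal P_n(A)$ and $\mathcal P_n(B)$ are block diagonal with respect to $\I=P_n\oplus(\I-P_n)$, both functionals split additively over the two blocks, and each block contribution is nonnegative (Klein's inequality for $D$, and nonnegativity of the trace of a positive power for $\widetilde Q_\alpha$), so discarding the $(\I-P_n)$-block gives $D(P_nAP_n\Vert P_nBP_n)\le D(A\Vert B)$ and $\widetilde Q_\alpha(P_nAP_n\Vert P_nBP_n)\le\widetilde Q_\alpha(A\Vert B)$. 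Running the same pinching argument inside the algebra $P_n\mathcal B(\mathcal H)P_n$ with the coarser projection $P_{n-1}\le P_n$ moreover shows both sequences are nondecreasing in $n$, so the limits exist.

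For the matching lower bound I would invoke lower semicontinuity of $D(\cdot\Vert\cdot)$ and, for $\alpha>1$, of $\widetilde Q_\alpha(\cdot\Vert\cdot)$ under trace-norm convergence; together with the first step this gives $\liminf_n D(P_nAP_n\Vert P_nBP_n)\ge D(A\Vert B)$ and $\liminf_n\widetilde Q_\alpha(P_nAP_n\Vert P_nBP_n)\ge\widetilde Q_\alpha(A\Vert B)$. Combined with the upper bounds, both limits equal their infinite-dimensional counterparts, which is the assertion.

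The step I expect to be the main obstacle is the lower-semicontinuity input, where I would rely on the cited work of Lindblad, Petz, and Mosonyi; closely related is the bookkeeping that keeps every quantity well-defined, in particular checking that the support condition $\supp A\subseteq\supp B$ — which is exactly what makes $D(A\Vert B)$ and $\widetilde Q_\alpha(A\Vert B)$ finite and excludes an ``$\infty-\infty$'' in the Lindblad extension — is inherited by the truncations as $\supp(P_nAP_n)\subseteq\supp(P_nBP_n)$. This follows because $P_nBP_nv=0$ forces $B^{1/2}P_nv=0$, hence $P_nv\in\ker B\subseteq\ker A$, hence $P_nAP_nv=0$; if instead $\supp A\not\subseteq\supp B$ then both sides of the $\widetilde Q_\alpha$ identity are $+\infty$ for $\alpha>1$ and the identity is read in that sense.
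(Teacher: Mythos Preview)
The paper does not prove this lemma: it is stated in the appendix with attributions to Lindblad, Petz, and Mosonyi in the lemma heading, and no proof environment follows. Your proposal is therefore not being compared against a proof in the paper but against the cited literature, and what you have written is precisely the standard sandwich argument one extracts from those references: the pinching map $\mathcal P_n$ plus data processing and block-additivity give the upper bound and the monotonicity in $n$, while trace-norm convergence $P_nAP_n\to A$, $P_nBP_n\to B$ feeds into lower semicontinuity of $D$ and of $\widetilde Q_\alpha$ (for $\alpha>1$) to give the matching lower bound. The support-inheritance check $\supp(P_nAP_n)\subseteq\supp(P_nBP_n)$ is handled correctly, and your remark that both sides are read as $+\infty$ when $\supp A\not\subseteq\supp B$ is the right way to dispose of that edge case. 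The one place to be slightly careful is that data processing for $\widetilde Q_\alpha$ with unnormalised positive operators is really the Frank--Lieb joint convexity statement rather than a consequence of DPI for $\widetilde D_\alpha$ on states; your parenthetical ``transfers verbatim to $\widetilde Q_\alpha$ since $\mathcal P_n$ is trace preserving'' is fine for this pinching map but it is cleaner to cite joint convexity directly.
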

[Go back to \hyperref[proof:sharp_one-shot]{\textit{Proof of Theorem~\ref{theo:sharp_one-shot}}}]

\medskip

\begin{lemm}[Finite-rank approximations, integral quasi divergence] \label{lemm:finit-rank-integral}
Let $A$ and $B$ be non-zero trace-class operators on an infinite-dimensional separable Hilbert space.

Then, for any $\alpha >1 $,
\begin{align}
{Q}_{\alpha}(A \Vert B) 
&\coloneq (\alpha-1) \int_0^\infty \Tr\left[ \left( \left( B + t\I\right)^{-\nicefrac{1}{2}} A \left( B + t\I\right)^{-\nicefrac{1}{2}} \right)^{\alpha} \right]\d t
= \lim_{n\to\infty} {Q}_{\alpha}(P_n A P_n \Vert P_n B P_n)
\end{align}
where $(P_n)_{n\in\mathds{N}}$ is a sequence of projections of $B$ such that $\Tr[P_n] = n$, $P_{n-1} \leq P_n$, and $P_n \nearrow \I$ in the strong operator topology.
\end{lemm}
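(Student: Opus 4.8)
The plan is to turn the statement into a pointwise-in-$t$ convergence fact for the integrands and then apply the dominated convergence theorem, with Fatou's lemma covering the case of an infinite integral. Write $A_n=P_nAP_n$, $B_n=P_nBP_n$, and for $t>0$ set $X_t:=(B+t\I)^{-1/2}A(B+t\I)^{-1/2}$, $f_n(t):=\Tr[((B_n+t\I)^{-1/2}A_n(B_n+t\I)^{-1/2})^{\alpha}]$, and $f(t):=\Tr[X_t^{\alpha}]$, so that $(\alpha-1)\int_0^\infty f_n\,\d t=Q_\alpha(A_n\Vert B_n)$ and $(\alpha-1)\int_0^\infty f\,\d t=Q_\alpha(A\Vert B)$. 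Since $B+t\I\ge t\I>0$ for every $t>0$, the operator $X_t$ is positive and trace-class with $\Vert X_t\Vert_1\le t^{-1}\Vert A\Vert_1$ (no support hypothesis on $A,B$ is needed for $t>0$), so all these quantities make sense in $[0,\infty]$.

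The first substantive step is an algebraic simplification that crucially uses that each $P_n$ is a spectral projection of $B$: then $P_n$ commutes with $B$, hence with $(B+t\I)^{-1/2}$; moreover $(B_n+t\I)^{-1/2}$ acts as $t^{-1/2}\I$ on $(\I-P_n)\mathcal H$ and agrees with $(B+t\I)^{-1/2}$ on $P_n\mathcal H$, while $A_n$ annihilates $(\I-P_n)\mathcal H$. Combining these observations I would show $(B_n+t\I)^{-1/2}A_n(B_n+t\I)^{-1/2}=P_nX_tP_n$, so that $f_n(t)=\Tr[(P_nX_tP_n)^{\alpha}]$ is just the $\alpha$-th Schatten functional of the compression of $X_t$ to $\operatorname{ran}P_n$.

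With this identity, pointwise convergence and a uniform pointwise bound both follow from standard compression facts. Since $P_n\nearrow\I$ in the strong operator topology and $X_t$ is trace-class, $P_nX_tP_n\to X_t$ in trace norm; as $\Vert\cdot\Vert_\alpha\le\Vert\cdot\Vert_1$ for $\alpha\ge1$ and $Y\mapsto\Tr[Y^\alpha]=\Vert Y\Vert_\alpha^\alpha$ is norm-continuous on positive trace-class operators, this gives $f_n(t)\to f(t)$ for each $t>0$. For the bound, the Courant--Fischer min--max principle shows that the $k$-th eigenvalue of the compression $P_nX_tP_n$ is at most that of $X_t$; since $x\mapsto x^\alpha$ is increasing on $[0,\infty)$, summing over $k$ yields $f_n(t)\le f(t)$ for all $t>0$ and all $n$.

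It then remains to assemble the pieces. If $Q_\alpha(A\Vert B)<\infty$ then $f\in L^1(0,\infty)$ dominates the whole sequence $(f_n)$, so the dominated convergence theorem gives $\int_0^\infty f_n\,\d t\to\int_0^\infty f\,\d t$, i.e.\ $Q_\alpha(A_n\Vert B_n)\to Q_\alpha(A\Vert B)$; if $Q_\alpha(A\Vert B)=\infty$, then Fatou's lemma applied to the nonnegative sequence $(f_n)$ with $f_n\to f$ already forces $\liminf_n\int_0^\infty f_n\,\d t\ge\int_0^\infty f\,\d t=\infty$, so the convergence holds in $[0,\infty]$. I expect the only real care to be needed in the algebraic identity $(B_n+t\I)^{-1/2}A_n(B_n+t\I)^{-1/2}=P_nX_tP_n$ — specifically, in justifying the commutations between $P_n$ and the functional calculus of $B$ and in book-keeping the action of $(B_n+t\I)^{-1/2}$ on $(\I-P_n)\mathcal H$ — after which the argument reduces to textbook compression monotonicity of Schatten norms together with dominated convergence.
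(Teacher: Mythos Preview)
Your proof is correct and follows essentially the same strategy as the paper's: both reduce to the identity $(B_n+t\I)^{-1/2}A_n(B_n+t\I)^{-1/2}=P_nX_tP_n$ via the commutation of the spectral projection $P_n$ with the functional calculus of $B$, establish $f_n\le f$ and $f_n\to f$ pointwise, and then pass to the limit under the integral. The only differences are in the standard tools invoked --- the paper obtains $f_n\le f$ from contractivity of the Schatten $(1{+}s)$-norm under the pinching channel $X\mapsto P_nXP_n+P_n^\perp XP_n^\perp$ rather than from Courant--Fischer, and proves pointwise convergence via the explicit estimate $|f_n(t)^{1/\alpha}-f(t)^{1/\alpha}|\le t^{-1}\Vert P_nAP_n-A\Vert_1$ rather than trace-norm continuity --- and your Fatou argument for the case $Q_\alpha(A\Vert B)=\infty$ is an addition not present in the paper, which assumes finiteness at the outset.
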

[Go back to \hyperref[proof:sharp_one-shot]{\textit{Proof of Theorem~\ref{theo:sharp_one-shot}}}]

\begin{proof}[Proof of Lemma~\ref{lemm:finit-rank-integral}]
Fix $s>0$. Assume
\begin{align}
    \int_0^\infty \Tr\left[ \left( (B+t\I)^{-\nicefrac{1}{2}} A (B+t\I)^{-\nicefrac{1}{2}} \right)^{1+s} \right] \, \d t < \infty.
\end{align}
Denote 
\begin{align}
    f(t)
    &= \int_0^\infty \Tr\left[ \left( (B+t\I)^{-\nicefrac{1}{2}} A (B+t\I)^{-\nicefrac{1}{2}} \right)^{1+s} \right] \, \d t,
    \\
    f_n(t)
    &= \int_0^\infty \Tr\left[ \left( (B+t\I)^{-\nicefrac{1}{2}} P_n A P_n (B+t\I)^{-\nicefrac{1}{2}} \right)^{1+s} \right] \, \d t
    \\
    &= \int_0^\infty \Tr\left[ \left( (P_n B P_n +t\I)^{-\nicefrac{1}{2}} P_n A P_n ( P_n B P_n +t\I)^{-\nicefrac{1}{2}} \right)^{1+s} \right] \, \d t.
\end{align}
where $P_n$ is the spectral projection for $B$ and $P_n \nearrow \I$.
Since $t\mapsto f(t)$ is decreasing, we have $f(t) < \infty$ for all $t>0$.

It suffices to show $f_n \leq f$ and $f_n \to f$ pointwise.
Then, by Monotone Convergence Theorem,
\begin{align}
    \int_0^\infty f_n(t) \, \d t
    \nearrow \int_0^\infty f(t) \, \d t.
\end{align}

We first show $f(t) \geq f_n (t)$ for all $t>0$.
Define the channel $\Phi_n (X) = P_n^\perp X P_n^{\perp} + P_n X P_n$.
Note that $\Phi_n$ is unital and trace-preserving.
Hence, the Schatten $p$-norm is contractive under $\Phi_n$, i.e.,
\begin{align}
\left\| \Phi_n(X) \right\| \leq \|X\|_p, \quad \forall\, p\geq 1.
\end{align}
Then,
\begin{align}
    f_n(t)
    &= \Tr\left[ \left(  \frac{P_n A P_n}{ P_n B P_n + t \I } \right)^{1+s} \right]
    \\
    &= \Tr\left[ \left( P_n (B+t\I)^{-\nicefrac{1}{2}} A (B+t\I)^{-\nicefrac{1}{2}}  P_n \right)^{1+s} \right]
    \\
    &\leq \Tr\left[ \left( P_n (B+t\I)^{-\nicefrac{1}{2}} A (B+t\I)^{-\nicefrac{1}{2}}  P_n \right)^{1+s} \right]
    + \Tr\left[ \left( P_n^\perp (B+t\I)^{-\nicefrac{1}{2}} A (B+t\I)^{-\nicefrac{1}{2}}  P_n^\perp \right)^{1+s} \right]
    \\
    &= \left\| \Phi_n \left( (B+t\I)^{-\nicefrac{1}{2}} A (B+t\I)^{-\nicefrac{1}{2}} \right) \right\|_{1+s}^{1+s}
    \\
    &\leq 
    \left\| (B+t\I)^{-\nicefrac{1}{2}} A (B+t\I)^{-\nicefrac{1}{2}} \right\|_{1+s}^{1+s}
    \\
    &= f(t).
\end{align}

Next, by using the fact that $P_n$ is a spectral projection of $B$, we calculate
\begin{align}
f_n(t)^{\frac{1}{1+s}} - f(t)^{\frac{1}{1+s}} 
&= \left\| (B+t\I)^{-\nicefrac{1}{2}} P_n A P_n (B+t\I)^{-\nicefrac{1}{2}} \right\|_{1+s}
- \left\| (B+t\I)^{-\nicefrac{1}{2}} A (B+t\I)^{-\nicefrac{1}{2}} \right\|_{1+s}
\\
&\leq \left\| (B+t\I)^{-\nicefrac{1}{2}} \right\|_{\infty} \left\| P_n A P_n - A \right\|_{1+s} \left\| (B+t\I)^{-\nicefrac{1}{2}} \right\|_{1+s}
\\
&\leq t^{-1} \left\| P_n A P_n - A \right\|_1.
\end{align}
Then,
\begin{align}
    \left\| P_n A P_n - A \right\|_1
    &\leq \left\| P_n - A \right\|_1 
    + \left\| P_n A P_n - P_n A \right\|_1
    \\
    &\leq \left\| P_n A^{-\nicefrac{1}{2}} - A^{\nicefrac{1}{2}} \right\|_2
    \left\|A^{\nicefrac{1}{2}} \right\|_2
    + \left\| P_n A^{\nicefrac{1}{2}}  \right\|_2 \left\|  A^{-\nicefrac{1}{2}} P_n  - A^{\nicefrac{1}{2}} \right\|_2
    \\
    &= \sqrt{ \Tr\left[ P_n A P_n + A - P_n A - A P_n  \right]  \Tr\left[ A \right] }
    + \sqrt{ \Tr\left[ P_n A \right]\Tr\left[ P_n A P_n + A - P_n A - A P_n  \right]   }
    \\
    &\rightarrow 0,
\end{align}
as $\lim_{n\to\infty} \Tr\left[ P_n A P_n \right] = \lim_{n\to 0} \Tr[P_n A ] = \Tr[A]$.
Hence, $f_n \to f$ pointwise.
We complete the proof.
\end{proof}

\medskip

\begin{lemm}[Uhlmann's theorem \cite{Uhl76}] \label{lemm:Uhlmann}
	Let $\psi_{\A\B} = |\psi\rangle\langle \psi|_{\A\B}$ and $\varphi_{\A\C} = |\varphi\rangle\langle \varphi|_{\A\C}$ be two pure quantum states.
	Then, there exists an isometry $\mathscr{V}_{\B\to \C}$ satisfying
    \begin{align}
        P\left( \psi_{\A} , \varphi_{\A} \right)=
		P\left(\mathscr{V}_{\B\to C}( \psi_{\A\B} ) , \varphi_{\A\C}\right)  .
    \end{align}
		
\end{lemm}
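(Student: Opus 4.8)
The plan is to reduce the assertion to the fidelity form of Uhlmann's theorem and prove the latter via the canonical (vectorized) purification. First I would unwind the purified distance: from the definition of $\widetilde{D}_{1/2}$ one has $\widetilde{D}_{1/2}(\rho\Vert\sigma) = -2\log F(\rho,\sigma)$ with $F(\rho,\sigma) := \Tr\big[\sqrt{\sqrt{\sigma}\,\rho\,\sqrt{\sigma}}\,\big] = \|\sqrt{\rho}\sqrt{\sigma}\|_1$, so $P(\rho,\sigma) = \sqrt{1 - F(\rho,\sigma)^2}$; since $x\mapsto\sqrt{1-x^2}$ is strictly decreasing on $[0,1]$, the claimed identity is equivalent to $F(\psi_\A,\varphi_\A) = F\big(\mathscr{V}_{\B\to\C}(\psi_{\A\B}),\varphi_{\A\C}\big)$ for a suitable isometry $\mathscr{V}_{\B\to\C}$. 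One direction is free: $\mathscr{V}_{\B\to\C}$ acts only on $\B$, so $\Tr_\C[\mathscr{V}_{\B\to\C}(\psi_{\A\B})] = \psi_\A$ while $\Tr_\C[\varphi_{\A\C}] = \varphi_\A$, and monotonicity of fidelity under the partial-trace channel yields $F\big(\mathscr{V}_{\B\to\C}(\psi_{\A\B}),\varphi_{\A\C}\big) \le F(\psi_\A,\varphi_\A)$ for every isometry; equivalently $P\big(\mathscr{V}_{\B\to\C}(\psi_{\A\B}),\varphi_{\A\C}\big) \ge P(\psi_\A,\varphi_\A)$. Hence the whole task is to produce one isometry attaining equality.

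For that I would use vectorization. Set $|\Omega\rangle := \sum_k |k\rangle|k\rangle$ and, for a state $\rho$ on $\mathcal{H}_\A$, let $|\sqrt{\rho}\rangle\!\rangle := (\sqrt{\rho}\otimes\I)|\Omega\rangle$ be its canonical purification. After enlarging $\mathcal{H}_\C$ if necessary (harmless, as it changes neither side of the identity) we may assume $\dim\mathcal{H}_\C \ge \dim\mathcal{H}_\B$; then, as $\mathscr{V}_{\B\to\C}$ runs over all isometries, $|\chi\rangle := \mathscr{V}_{\B\to\C}|\psi\rangle_{\A\B}$ runs over all purifications of $\psi_\A$ in $\mathcal{H}_\A\otimes\mathcal{H}_\C$, each of the form $(\I_\A\otimes W)|\sqrt{\psi_\A}\rangle\!\rangle$ for an isometry $W$; likewise write the given purification as $|\varphi\rangle_{\A\C} = (\I_\A\otimes W_0)|\sqrt{\varphi_\A}\rangle\!\rangle$. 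Using the vectorization identity $\langle\Omega|(P\otimes Q)|\Omega\rangle = \Tr[P\,Q^{\mathsf{T}}]$ one gets, in one line, $\langle\chi|\varphi\rangle = \Tr\big[\sqrt{\psi_\A}\sqrt{\varphi_\A}\,(W^\dagger W_0)^{\mathsf{T}}\big]$; choosing $W = W_0 U^{\dagger}$ makes $W$ a genuine isometry and $(W^\dagger W_0)^{\mathsf{T}} = U^{\mathsf{T}}$ an arbitrary unitary. By the variational formula $\max_{U\ \mathrm{unitary}}\big|\Tr[XU]\big| = \|X\|_1$ (attained at $U = V^\dagger$ for the polar decomposition $X = V|X|$), the maximum of $\big|\langle\chi|\varphi\rangle\big|$ over admissible $\mathscr{V}_{\B\to\C}$ equals $\|\sqrt{\psi_\A}\sqrt{\varphi_\A}\|_1 = F(\psi_\A,\varphi_\A)$, and for pure states $F\big(\mathscr{V}_{\B\to\C}(\psi_{\A\B}),\varphi_{\A\C}\big) = \big|\langle\chi|\varphi\rangle\big|$. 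Combined with the easy upper bound this forces equality at the optimal $\mathscr{V}_{\B\to\C}$, proving the lemma.

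The computational ingredients --- the transpose conventions in the vectorization identity, the check that $W^\dagger W_0$ can be steered to any unitary while keeping $W$ an isometry, and the trace-norm variational identity --- are all routine. The only point I would flag is dimensional: the statement tacitly assumes an isometry $\mathcal{H}_\B \to \mathcal{H}_\C$ exists, which requires $\dim\mathcal{H}_\C \ge \dim\mathcal{H}_\B$; in every use of this lemma in the paper the registers are arranged so this holds (or $\mathcal{H}_\C$ can be padded without affecting anything), so I would state this normalization up front rather than carve out a separate case. If one wanted the dimension-free version of Uhlmann's theorem, one would simply replace ``isometry'' by ``partial isometry followed by an embedding'' --- the fidelity computation above is untouched.
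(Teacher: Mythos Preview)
Your argument is correct and is essentially the standard proof of Uhlmann's theorem via canonical purifications and the polar-decomposition variational formula for the trace norm; the dimensional caveat you flag is also accurate. Note, however, that the paper does not actually prove this lemma: it is stated in the appendix with a citation to \cite{Uhl76} and simply invoked as a known result, so there is no ``paper's own proof'' to compare against.
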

[Go back to \hyperref[proof:QSR]{\textit{Proof of Proposition~\ref{prop:QSR}}}]

\medskip

\begin{lemm}[A Minimax identity {\cite{CJT24}}] \label{lemm:minimax}
Let $\mathfrak{P}_{\A\to\B}^{(M)}$ be the set of all quantum simulation protocols for the channel in Definition~\ref{defn:channel-simulation}, i.e.,~all entanglement-assisted local operations and $\log M$ nats of one-way classical communication.
Then, for any quantum channel $\mathscr{N}_{\A\to\B}$,
\begin{align}
\inf_{\hat{\mathscr{N}}_{\A\to\B} \in \mathfrak{P}_{\A\to\B}^{(M)}} \sup_{\theta_{\R\A}} P\left( \widehat{\mathscr{N}}_{\A\to \B}(\theta_{\R\A}) , {\mathscr{N}}_{\A\to \B}(\theta_{\R\A}) \right)
= \sup_{\theta_{\R\A}} \inf_{\hat{\mathscr{N}}_{\A\to\B} \in \mathfrak{P}_{\A\to\B}^{(M)}}  P\left( \widehat{\mathscr{N}}_{\A\to \B}(\theta_{\R\A}) , {\mathscr{N}}_{\A\to \B}(\theta_{\R\A}) \right).
\end{align}
\end{lemm}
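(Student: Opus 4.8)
The inequality ``$\ge$'' (that is, $\sup_{\theta_{\R\A}}\inf_{\widehat{\mathscr{N}}}P\le\inf_{\widehat{\mathscr{N}}}\sup_{\theta_{\R\A}}P$) is the elementary max--min inequality and requires nothing; the whole content is the reverse bound, which I would obtain from a minimax theorem of Sion type after organizing the two optimizations into convex domains. The plan is as follows.

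First I would check that the protocol family $\mathfrak{P}_{\A\to\B}^{(M)}$ is convex: given $\widehat{\mathscr{N}}_0,\widehat{\mathscr{N}}_1\in\mathfrak{P}_{\A\to\B}^{(M)}$, the two parties use shared randomness (freely available from the shared entanglement), flip a $\lambda$--biased coin, and run $\widehat{\mathscr{N}}_0$ or $\widehat{\mathscr{N}}_1$ accordingly; this uses no extra classical communication and implements precisely the channel $\lambda\widehat{\mathscr{N}}_0+(1-\lambda)\widehat{\mathscr{N}}_1$, so, identifying a protocol with the channel $\A\to\B$ it induces, $\mathfrak{P}_{\A\to\B}^{(M)}$ is a convex --- and, after passing to the closure, compact --- subset of the finite--dimensional space of Hermiticity--preserving maps. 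On the input side it suffices to let $\theta_{\R\A}$ range over all states with $\dim\mathcal{H}_{\R}=\dim\mathcal{H}_{\A}$ (any pure state with a larger reference being equivalent under a local isometry on $\R$, which leaves $P$, $\mathscr{N}$, and $\widehat{\mathscr{N}}$ invariant); this set is compact and convex, with the pure states as its extreme points.

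Next I would examine the structural hypotheses for $f(\widehat{\mathscr{N}},\theta_{\R\A}):=P\big(\widehat{\mathscr{N}}(\theta_{\R\A}),\mathscr{N}(\theta_{\R\A})\big)$, which is jointly continuous. Writing $P(\rho,\sigma)=\sqrt{1-F(\rho,\sigma)^2}$ with $F$ the jointly concave Uhlmann fidelity and using that $x\mapsto\sqrt{1-x^2}$ is strictly decreasing on $[0,1]$, the map $\widehat{\mathscr{N}}\mapsto F(\widehat{\mathscr{N}}(\theta_{\R\A}),\mathscr{N}(\theta_{\R\A}))$ is concave, hence quasi--concave, so $\widehat{\mathscr{N}}\mapsto f(\widehat{\mathscr{N}},\theta_{\R\A})$ has convex sublevel sets; in particular each good--protocol set $\{\widehat{\mathscr{N}}\in\mathfrak{P}_{\A\to\B}^{(M)}:f(\widehat{\mathscr{N}},\theta_{\R\A})\le\delta\}$ is convex. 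The delicate point --- and what I expect to be the main obstacle --- is the behaviour of $f$ in the input variable: the analogous computation shows $\theta_{\R\A}\mapsto f(\widehat{\mathscr{N}},\theta_{\R\A})$ is again quasi--convex rather than quasi--concave, so Sion's theorem does not apply to $f$ naively, and one must instead turn this convexity in $\theta_{\R\A}$ to advantage --- it forces the worst--case input onto the extreme (pure) points and makes the convex good--protocol set for a mixture of inputs contain the intersection of those for the extreme inputs. Combining this with the Uhlmann/Stinespring variational description of the purified distance (Lemma~\ref{lemm:Uhlmann}), which allows one to compare fidelities coordinate by coordinate once the finitely many relevant inputs have been flagged, together with Helly's theorem in the finite--dimensional protocol space and compactness, one produces a single protocol whose worst--case error is at most $\sup_{\theta_{\R\A}}\inf_{\widehat{\mathscr{N}}}P(\widehat{\mathscr{N}}(\theta_{\R\A}),\mathscr{N}(\theta_{\R\A}))$, yielding ``$\le$''. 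The hard part is carrying out this last reduction --- bridging a separate bound for each input to one universal protocol --- without inflating the communication budget $\log M$, and at that step I would follow the argument of \cite{CJT24}.
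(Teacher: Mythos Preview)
The paper does not supply its own proof of this lemma; it is stated in the appendix as an auxiliary result imported from \cite{CJT24}, with no argument given. So there is no in-paper proof to compare your sketch against, and in that narrow sense your proposal already goes further than the paper does.

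On the substance of your sketch: you have correctly isolated the structural obstruction. The protocol set $\mathfrak{P}_{\A\to\B}^{(M)}$ is convex (via shared randomness extracted from the free entanglement) and, after closure, compact in a finite-dimensional affine space; and $\widehat{\mathscr{N}}\mapsto P(\widehat{\mathscr{N}}(\theta),\mathscr{N}(\theta))$ is quasi-convex by concavity of the fidelity, so the good-protocol sets are convex and compact. The difficulty you flag is real: $\theta\mapsto P(\widehat{\mathscr{N}}(\theta),\mathscr{N}(\theta))$ is quasi-convex rather than quasi-concave, so Sion's theorem does not apply to $f$ as written.

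Your proposed resolution, however, is not carried out. Helly's theorem would reduce the problem to showing that any $d{+}1$ good-protocol sets $\{\widehat{\mathscr{N}}:f(\widehat{\mathscr{N}},\theta_i)\le\delta\}$ have a common point, but you give no argument for this finite-intersection property; knowing that each $\theta_i$ admits \emph{some} $\delta$-good protocol does not by itself produce one protocol good for all $d{+}1$ inputs simultaneously. The appeal to Uhlmann's theorem to ``compare fidelities coordinate by coordinate once the finitely many relevant inputs have been flagged'' is where the actual mechanism would have to live, and it is left as a phrase rather than an argument---indeed you close by saying you would follow \cite{CJT24} at precisely this step. As written, then, the proposal correctly diagnoses the obstacle and sets up the convex-geometric framework, but the crucial bridging step (from per-input good protocols to a single universal one at the same communication cost $\log M$) is deferred back to the cited reference, so it is not an independent proof.
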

[Go back to \hyperref[proof:simulation_arbitrary]{\textit{Proof of Proposition~\ref{prop:simulation_arbitrary}}}]
\medskip

\begin{lemm}[Additivity {\cite[Lemma 5]{GW14}}] \label{lemm:sandwiched_channel_additive}
Let $\mathscr{N}: \A \to \B$ be a quantum channel.
For any integer $n$, let $\rho_{\R^n\B^n} := \mathscr{N}_{\A\to\B}^{\otimes n}(\theta_{\R^n\A^n})$ for any pure state $\theta_{\R^n\A^n}$.
Then, for all $\alpha > 1$,
\begin{align}
\sup_{\theta_{\R^n\A^n}} \inf_{\sigma_{\B^n}} \widetilde{D}_{\alpha}\left(\rho_{\R^n\B^n} \Vert \rho_{\R^n} \otimes \sigma_{\B^n} \right)
= n \cdot \sup_{\theta_{\R\A}} \inf_{\sigma_{\B}} \widetilde{D}_{\alpha}\left(\rho_{\R\B} \Vert \rho_{\R} \otimes \sigma_{\B} \right).
\end{align}
\end{lemm}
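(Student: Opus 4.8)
The plan is to prove the two inequalities separately. The direction ``$\geq$'' (superadditivity) is elementary: fix a pure state $\theta_{\R\A}$ together with a state $\sigma_{\B}$ that is optimal (or $\delta$-optimal) for the single-letter problem, set $\rho_{\R\B} := \mathscr{N}_{\A\to\B}(\theta_{\R\A})$, and test the $n$-letter quantity with the tensor-power ansatz $\theta_{\R^n\A^n} \leftarrow \theta_{\R\A}^{\otimes n}$ and $\sigma_{\B^n} \leftarrow \sigma_{\B}^{\otimes n}$. Then $\mathscr{N}_{\A\to\B}^{\otimes n}(\theta_{\R\A}^{\otimes n}) = \rho_{\R\B}^{\otimes n}$ and $\rho_{\R^n} = \rho_{\R}^{\otimes n}$, so by additivity of the sandwiched \Renyi divergence on product states,
\[
\inf_{\sigma_{\B^n}} \widetilde{D}_{\alpha}\left( \rho_{\R\B}^{\otimes n} \, \Vert \, \rho_{\R}^{\otimes n} \otimes \sigma_{\B^n} \right)
\leq \widetilde{D}_{\alpha}\left( \rho_{\R\B}^{\otimes n} \, \Vert \, \rho_{\R}^{\otimes n} \otimes \sigma_{\B}^{\otimes n} \right)
= n\, \widetilde{D}_{\alpha}\left( \rho_{\R\B} \, \Vert \, \rho_{\R} \otimes \sigma_{\B} \right);
\]
taking the supremum over $\theta_{\R\A}$ on the left and the infimum over $\sigma_{\B}$ on the right gives ``$\geq$''. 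In finite dimensions both optimizers exist by compactness and continuity of $\widetilde{D}_{\alpha}$; if an infinite-dimensional version is wanted, one reduces to the finite-dimensional case by finite-rank approximation exactly as in the proof of Theorem~\ref{theo:sharp_one-shot}.

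For the direction ``$\leq$'' (subadditivity) I would follow Gupta and Wilde \cite{GW14}. The crucial step is an operator-space representation of the single-letter quantity: with $p = \alpha$ (and its conjugate exponent), one shows that for \emph{every} channel $\mathscr{M}_{\A\to\B}$,
\[
\exp\left( \tfrac{\alpha-1}{\alpha}\, \sup_{\theta_{\R\A}} \inf_{\sigma_{\B}} \widetilde{D}_{\alpha}\left( \mathscr{M}_{\A\to\B}(\theta_{\R\A}) \, \Vert \, \theta_{\R} \otimes \sigma_{\B} \right) \right)
= \left\| \widehat{\mathscr{M}} \right\|_{\mathrm{cb},\, 1 \to p},
\]
where $\widehat{\mathscr{M}}$ is the map naturally associated with $\mathscr{M}$ between suitably weighted (``sandwiched'') Schatten spaces and $\|\cdot\|_{\mathrm{cb}}$ is the completely bounded norm, i.e.\ the supremum of the norm of $\mathrm{id}_{k}\otimes\widehat{\mathscr{M}}$ over all ancilla dimensions $k$. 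The maximization over the reference system $\R$ is exactly this ``complete'' supremum over ancillas, and a minimax argument (in the spirit of Lemma~\ref{lemm:minimax}) justifies interchanging it with the infimum over $\sigma_{\B}$. Applying this identity once with $\mathscr{M}\leftarrow\mathscr{N}^{\otimes n}$ and once with $\mathscr{M}\leftarrow\mathscr{N}$, the lemma reduces to multiplicativity of the completely bounded norm,
\[
\left\| \widehat{\mathscr{N}}^{\,\otimes n} \right\|_{\mathrm{cb},\, 1 \to p} = \left\| \widehat{\mathscr{N}} \right\|_{\mathrm{cb},\, 1 \to p}^{\,n},
\]
which is the theorem of Devetak, Junge, King, and Ruskai on multiplicativity of completely bounded $p$-norms.

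The main obstacle is precisely this last step. The inequality $\|\widehat{\mathscr{N}}^{\otimes n}\| \leq \|\widehat{\mathscr{N}}\|^{n}$ is the nontrivial half of completely-bounded-norm multiplicativity (the reverse being an easy tensor test), and it is what obstructs a short self-contained argument. An alternative that avoids operator-space language is induction on $n$: write $\mathscr{N}^{\otimes(n+1)} = \mathscr{N}_{\A_{n+1}\to\B_{n+1}} \circ \mathscr{N}^{\otimes n}_{\A^{n}\to\B^{n}}$, apply the single-letter bound to the final channel use with reference $\R^{n+1}\B^{n}$ (using that the marginal on $\R^{n+1}\B^{n}$ is unaffected by that last use) and the induction hypothesis to the first $n$ uses with reference $\R^{n+1}\A_{n+1}$, followed by data processing under $\Tr_{\A_{n+1}}$. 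To recombine the two resulting bounds into $(n+1)$ times the single-letter value one still needs a chain-rule inequality controlling the cost of replacing the marginal $\rho_{X}$ by some $\sigma_{X}$ in the second argument of $\widetilde{D}_{\alpha}\left( \rho_{XY} \, \Vert \, \sigma_{X}\otimes\tau_{Y} \right)$ — an equality for Umegaki's relative entropy (the limit $\alpha\to 1$), but for $\alpha>1$ not available in the exact additive form required here. Supplying such an inequality is itself the crux, so the completely-bounded-norm route of \cite{GW14} remains the cleanest and is the one I would present.
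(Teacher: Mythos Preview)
The paper does not give its own proof of this lemma; it is simply quoted from \cite[Lemma 5]{GW14}. Your outline of the ``$\leq$'' (subadditive) direction via the identification with the completely bounded $1\to p$ norm and the Devetak--Junge--King--Ruskai multiplicativity theorem is exactly the argument of \cite{GW14}, and that part is fine.

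Your argument for ``$\geq$'', however, has the inequality running the wrong way. Plugging the product $\sigma_{\B}^{\otimes n}$ into the infimum yields only the \emph{upper} bound
\[
\inf_{\sigma_{\B^n}} \widetilde{D}_{\alpha}\!\left( \rho_{\R\B}^{\otimes n}\,\Vert\,\rho_{\R}^{\otimes n}\otimes\sigma_{\B^n}\right)
\;\leq\;
n\,\inf_{\sigma_{\B}} \widetilde{D}_{\alpha}\!\left( \rho_{\R\B}\,\Vert\,\rho_{\R}\otimes\sigma_{\B}\right),
\]
whereas what you need in order to conclude $\text{LHS}\geq\text{RHS}$ from the product test $\theta_{\R^n\A^n}\leftarrow\theta_{\R\A}^{\otimes n}$ is the \emph{reverse} inequality. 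As written, your chain gives only ``(something $\leq$ LHS) $\leq$ RHS'', which proves neither direction. The missing step is that for a tensor-power first argument the minimizing $\sigma_{\B^n}$ is itself a tensor power; this is the additivity of the state-level quantity $\inf_{\sigma_{\B}}\widetilde{D}_{\alpha}(\rho_{\R\B}\Vert\rho_{\R}\otimes\sigma_{\B})$ on product states, which follows from the explicit (Sibson-type) closed form for the optimal $\sigma_{\B}$, or equivalently from the amalgamated-norm expression used in the proof of Lemma~\ref{lemm:convexity_order}. Alternatively, once the completely-bounded-norm identification is in place, ``$\geq$'' is the trivial half $\|\widehat{\mathscr{N}}^{\,\otimes n}\|_{\mathrm{cb}}\geq\|\widehat{\mathscr{N}}\|_{\mathrm{cb}}^{\,n}$ obtained by testing with product ancillas---so you may as well handle both directions through that route, as \cite{GW14} does.
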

\noindent [Go back to \hyperref[proof:simulation_arbitrary]{\textit{Proof of Proposition~\ref{prop:simulation_arbitrary}}}]
\medskip

\begin{lemm}[Concavity in state {\cite[Lemma 4]{GW14}}] \label{lemm:concavity_state}
Let $\mathscr{N}: \A \to \B$ be a quantum channel and let $\rho_{\R\B} := \mathscr{N}_{\A\to\B}(\theta_{\R\A})$ for any state $\theta_{\R\A}$.
Then, the map
\begin{align}
\theta_{\R\A} &\mapsto \inf_{\sigma_{\B}} \widetilde{D}_{\alpha} (\rho_{\R\B} \Vert \rho_{\R} \otimes \sigma_{\B})
\end{align}
is concave for all states on $\mathcal{H}_{\R}\otimes \mathcal{H}_{\A}$.
\end{lemm}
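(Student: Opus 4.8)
\noindent\textit{Proof proposal.} The plan is to establish the concavity by passing to a Stinespring dilation, realising a convex combination of channel inputs as the input marginal of a \emph{pure} state carried by an auxiliary flag register, and then dephasing that flag. First I would fix a Stinespring dilation $\mathscr{N}_{\A\to\B}=\Tr_\E\circ\,\mathscr{U}_{\A\to\B\E}$. If $\theta_{\R\A}$ purifies a channel input $\rho\equiv\theta_\A$, then $\psi_{\R\B\E}:=\mathscr{U}\theta_{\R\A}\mathscr{U}^\dagger$ (with $\mathscr{U}$ acting as $\mathscr{U}_{\A\to\B\E}\otimes\id_\R$) is pure, $\rho_{\R\B}=\Tr_\E\psi_{\R\B\E}$, and $\rho_\R=\theta_\R$ is a genuine marginal. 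Since $\widetilde{D}_\alpha$ is invariant under an isometry applied to the reference system, the value $\inf_{\sigma_\B}\widetilde{D}_\alpha(\rho_{\R\B}\Vert\rho_\R\otimes\sigma_\B)$ depends on $\theta_{\R\A}$ only through $\rho$, and thus defines a function $g(\rho)$ on the convex set of density operators on $\mathcal{H}_\A$; it suffices to show that $g$ is concave, which is the form of the statement invoked in the applications (all of whose inputs are pure).

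Given inputs $\rho^0,\rho^1$ and $\lambda\in[0,1]$, set $\rho^\lambda=\lambda\rho^0+(1-\lambda)\rho^1$, fix purifications $|\phi^i\rangle_{\R\A}$, and form the coherent mixture $|\phi^\lambda\rangle_{\X\R\A}:=\sqrt{\lambda}\,|0\rangle_\X|\phi^0\rangle_{\R\A}+\sqrt{1-\lambda}\,|1\rangle_\X|\phi^1\rangle_{\R\A}$, a purification of $\rho^\lambda$ with reference system $\X\R$, so that $g(\rho^\lambda)=\inf_{\sigma_\B}\widetilde{D}_\alpha(\rho_{\X\R\B}\Vert\rho_{\X\R}\otimes\sigma_\B)$ for the outputs $\rho_{\X\R\B},\rho_{\X\R}$ of $\id_\X\otimes\mathscr{N}$. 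Dephasing $\X$ in the $\{|0\rangle,|1\rangle\}$ basis is a quantum channel that sends $\rho_{\X\R\B}$ to $\bigoplus_i p_i\,\rho^i_{\R\B}$ and $\rho_{\X\R}\otimes\sigma_\B$ to $\bigoplus_i p_i(\phi^i_\R\otimes\sigma_\B)$, where $(p_0,p_1)=(\lambda,1-\lambda)$ and $\rho^i_{\R\B}=\Tr_\E(\mathscr{U}|\phi^i\rangle\langle\phi^i|\mathscr{U}^\dagger)$; hence, by data processing, for every $\sigma_\B$,
\[
\widetilde{D}_\alpha\Big(\textstyle\bigoplus_i p_i\,\rho^i_{\R\B}\,\Big\Vert\,\bigoplus_i p_i(\phi^i_\R\otimes\sigma_\B)\Big)\ \le\ \widetilde{D}_\alpha\big(\rho_{\X\R\B}\Vert\rho_{\X\R}\otimes\sigma_\B\big),
\]
and taking $\inf_{\sigma_\B}$ on both sides gives $g(\rho^\lambda)\ \ge\ \inf_{\sigma_\B}\widetilde{D}_\alpha\big(\bigoplus_i p_i\rho^i_{\R\B}\,\Vert\,\bigoplus_i p_i(\phi^i_\R\otimes\sigma_\B)\big)$.

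It remains to bound this block-diagonal quantity below by $\sum_i p_i g(\rho^i)$. From the block structure of $\sigma^{\frac{1-\alpha}{2\alpha}}\rho\,\sigma^{\frac{1-\alpha}{2\alpha}}$ one reads off $\widetilde{Q}_\alpha\big(\bigoplus_i p_i A_i\Vert\bigoplus_i p_i B_i\big)=\sum_i p_i\,\widetilde{Q}_\alpha(A_i\Vert B_i)$ for normalised $A_i,B_i$. Using that (for $\alpha>1$) both $x\mapsto\tfrac1{\alpha-1}\log x$ and $x\mapsto\e^{(\alpha-1)x}$ are increasing and commute with $\inf$, and that an infimum of a sum dominates the sum of infima,
\begin{align*}
\inf_{\sigma_\B}\widetilde{D}_\alpha\Big(\textstyle\bigoplus_i p_i\rho^i_{\R\B}\,\Big\Vert\,\bigoplus_i p_i(\phi^i_\R\otimes\sigma_\B)\Big)
&=\frac1{\alpha-1}\log\inf_{\sigma_\B}\sum_i p_i\,\widetilde{Q}_\alpha\big(\rho^i_{\R\B}\Vert\phi^i_\R\otimes\sigma_\B\big)\\
&\ge\ \frac1{\alpha-1}\log\sum_i p_i\,\e^{(\alpha-1)g(\rho^i)}\\
&\ge\ \sum_i p_i\,g(\rho^i),
\end{align*}
the last inequality being Jensen's inequality for the convex function $\e^{(\alpha-1)\,\cdot}$. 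Combining with the previous display yields $g(\rho^\lambda)\ge\lambda\,g(\rho^0)+(1-\lambda)\,g(\rho^1)$, i.e.\ concavity.

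The main obstacle, and the reason one cannot simply mix the states $\theta^i_{\R\A}$ and invoke data processing directly, is that the second argument $\rho_\R\otimes\sigma_\B$ has an $\R$-marginal that \emph{varies with the input}; the coherent-flag construction is precisely what realises $\rho^\lambda$ as the input marginal of a pure state whose reduced state on the enlarged reference $\X\R$ dephases to $\bigoplus_i p_i\phi^i_\R$, matching the reference operator in the divergence. The residual points — restricting to $\sigma_\B>0$ so that the support/finiteness hypotheses hold, and attaining (or approximating by a limit) the infimum over $\sigma_\B$ — are routine.
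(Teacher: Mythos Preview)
The paper does not supply its own proof of this lemma; it merely cites \cite[Lemma~4]{GW14}. Your argument is correct and is the standard one for this result: reparametrize by the channel input $\rho=\theta_\A$ (using that for pure $\theta_{\R\A}$ the value depends only on $\theta_\A$ via isometric invariance on the reference), purify the mixture $\rho^\lambda$ with a coherent flag $\X$, dephase $\X$ and invoke data processing for $\widetilde{D}_\alpha$, then use the block identity $\widetilde{Q}_\alpha\big(\bigoplus_i p_iA_i\,\Vert\,\bigoplus_i p_iB_i\big)=\sum_ip_i\,\widetilde{Q}_\alpha(A_i\Vert B_i)$ together with $\inf\sum\ge\sum\inf$ and Jensen. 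This is essentially the proof in the cited reference, so there is nothing materially different to compare.

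One caveat worth recording: the lemma as literally written asks for concavity over \emph{all} states $\theta_{\R\A}$, whereas what you (and \cite{GW14}) establish is concavity of $g$ in the channel input $\rho_\A$, i.e., over pure $\theta_{\R\A}$ modulo reference isometries. That is exactly what the Sion minimax step in Proposition~\ref{prop:simulation_arbitrary} needs, with the convex optimization variable taken to be $\rho_\A$ rather than $\theta_{\R\A}$; your explicit reduction to $g(\rho_\A)$ is therefore the right move.
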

\noindent [Go back to \hyperref[proof:simulation_arbitrary]{\textit{Proof of Proposition~\ref{prop:simulation_arbitrary}}}]
\medskip

\begin{lemm}[Convexity in order] \label{lemm:convexity_order}
Let $\rho_{\A\B}$ be a state.
Then, the map
\begin{align}
\alpha &\mapsto (\alpha-1) \inf_{\sigma_{\B}} \widetilde{D}_{\alpha} (\rho_{\A\B} \Vert \rho_{\A} \otimes \sigma_{\B})
\end{align}
is convex for $\alpha > 1$.
\end{lemm}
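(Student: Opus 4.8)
The plan is to pass from $\widetilde D_\alpha$ to the sandwiched $\widetilde Q$-quantity and then to run a complex-interpolation argument that interpolates \emph{both} the order parameter and the auxiliary state at once. Since $\alpha-1>0$ on $(1,\infty)$,
\[
(\alpha-1)\inf_{\sigma_\B}\widetilde D_\alpha(\rho_{\A\B}\Vert\rho_\A\otimes\sigma_\B)=\inf_{\sigma_\B}\log\widetilde Q_\alpha(\rho_{\A\B}\Vert\rho_\A\otimes\sigma_\B)=\log q(\alpha),\qquad q(\alpha):=\inf_{\sigma_\B}\widetilde Q_\alpha(\rho_{\A\B}\Vert\rho_\A\otimes\sigma_\B),
\]
so it suffices to show that $\alpha\mapsto\log q(\alpha)$ is convex on $(1,\infty)$. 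The map $q$ is continuous in $\alpha$ on $(1,\infty)$ by a routine compactness argument (restricting to $\sigma_\B$ supported on $\operatorname{supp}\rho_\B$, near-optimal $\sigma_\B$ stay in a compact set over compact $\alpha$-intervals), so it is enough to establish midpoint convexity: $q(\bar\alpha)\le q(\alpha_0)^{1/2}q(\alpha_1)^{1/2}$ for $\bar\alpha:=\tfrac12(\alpha_0+\alpha_1)$ and $\alpha_0,\alpha_1\in(1,\infty)$. Concretely, fixing near-minimizers $\sigma_0,\sigma_1$ for $\widetilde Q_{\alpha_0}(\rho_{\A\B}\Vert\rho_\A\otimes\,\cdot\,)$ and $\widetilde Q_{\alpha_1}(\rho_{\A\B}\Vert\rho_\A\otimes\,\cdot\,)$, I must exhibit a state $\sigma_{1/2}$ with
\[
\widetilde Q_{\bar\alpha}\!\left(\rho_{\A\B}\Vert\rho_\A\otimes\sigma_{1/2}\right)\ \le\ \widetilde Q_{\alpha_0}\!\left(\rho_{\A\B}\Vert\rho_\A\otimes\sigma_0\right)^{1/2}\widetilde Q_{\alpha_1}\!\left(\rho_{\A\B}\Vert\rho_\A\otimes\sigma_1\right)^{1/2}.
\]

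To build $\sigma_{1/2}$ I would imitate, and then extend, the classical Stein--Hirschman three-line proof that for a \emph{fixed} second argument $\sigma$ the map $\alpha\mapsto\log\widetilde Q_\alpha(\rho\Vert\sigma)=(\alpha-1)\widetilde D_\alpha(\rho\Vert\sigma)$ is convex. There one considers the analytic operator family $z\mapsto(\rho_\A\otimes\sigma)^{\frac{1-z}{2z}}\rho_{\A\B}(\rho_\A\otimes\sigma)^{\frac{1-z}{2z}}$ on the strip $\{\alpha_0\le\operatorname{Re}z\le\alpha_1\}$, bounds the relevant Schatten functional of it on the two boundary lines by $\widetilde Q_{\alpha_0}$ and $\widetilde Q_{\alpha_1}$, and reads off the value on $\operatorname{Re}z=\bar\alpha$ via the three-line lemma. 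My modification is to replace $\rho_\A\otimes\sigma$ throughout by $\rho_\A\otimes\sigma(z)$, where $\sigma(z)$ is an analytic interpolation of $\sigma_0$ and $\sigma_1$ chosen so that, restricted to the real axis, $\sigma(\bar\alpha)=:\sigma_{1/2}$ is a positive unit-trace operator and $\sigma(\alpha_j)$ reproduces the correct powers of $\sigma_j$. The three-line lemma applied to this enlarged analytic family then yields exactly the displayed midpoint inequality; letting $\sigma_0,\sigma_1$ run to the respective infima (and, if needed, regularizing $\rho_\A\otimes\sigma(z)$ to be invertible and using finite-rank approximations as in the proof of Theorem~\ref{theo:sharp_one-shot}) finishes the argument. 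Note that taking $\sigma_0=\sigma_1=\sigma$ recovers the fixed-$\sigma$ convexity, so nothing external about $\widetilde D_\alpha$ is really needed. An essentially equivalent packaging would combine a variational representation of $\widetilde Q_\alpha$ (in the spirit of Frank--Lieb) with Sion's minimax theorem and the joint convexity of $\sigma_\B\mapsto\widetilde Q_\alpha(\rho_{\A\B}\Vert\rho_\A\otimes\sigma_\B)^{1/\alpha}$, collapsing the order-optimization and the $\inf_{\sigma_\B}$ into a single saddle-point over convex sets.

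I expect the delicate point to be the \emph{variable-exponent} nature of the interpolation: the sandwiching power $\tfrac{1-\alpha}{2\alpha}$, and hence the Schatten index, depends nonlinearly on $\alpha$, so one must set up the analytic family and the interpolation parameter $\theta$ so that $\theta$ is mapped \emph{affinely} onto $\operatorname{Re}z=\alpha$ (this is what produces convexity in $\alpha$ rather than in $1/\alpha$) while \emph{simultaneously} ensuring that $\sigma(z)$ is a genuine density operator on the real axis and that its boundary values glue to the prescribed powers of $\sigma_0$ and $\sigma_1$. Reconciling these competing requirements — together with the accompanying bookkeeping with supports and finite-rank truncations needed to make all operator powers and the infima legitimate — is the real content; once the correct analytic family is in hand, convexity of $\log q$, and hence of $\alpha\mapsto(\alpha-1)\inf_{\sigma_\B}\widetilde D_\alpha(\rho_{\A\B}\Vert\rho_\A\otimes\sigma_\B)$, drops out of the three-line lemma.
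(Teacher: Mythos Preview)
Your reduction to proving log-convexity of $q(\alpha)=\inf_{\sigma_\B}\widetilde Q_\alpha(\rho_{\A\B}\Vert\rho_\A\otimes\sigma_\B)$ is fine, and the instinct that this is a three-line/complex-interpolation statement is correct. The gap is precisely where you say the ``delicate point'' lies, and you do not actually close it: you need an analytic family $\sigma(z)$ on the strip that (i) matches the prescribed near-minimizers on the two boundary lines, (ii) gives a genuine state at the interior real point, and (iii) keeps the full sandwiched operator family analytic with uniformly bounded boundary norms. For noncommuting $\sigma_0,\sigma_1$ there is no evident construction with all three properties, and without it the three-line bound does not collapse to the midpoint inequality you want. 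Your alternative packaging (Frank--Lieb variational formula plus Sion) is likewise only a gesture; it is not clear which saddle point delivers convexity in $\alpha$ as opposed to $1/\alpha$.

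The paper avoids interpolating the minimizers altogether. It first absorbs the $\inf_{\sigma_\B}$ into a norm via the Sibson-type identity
\[
\inf_{\sigma_\B}\widetilde D_\alpha(\rho_{\A\B}\Vert\rho_\A\otimes\sigma_\B)=\frac{\alpha}{\alpha-1}\log\bigl\|\rho_\A^{-1/(2\alpha')}\rho_{\A\B}\,\rho_\A^{-1/(2\alpha')}\bigr\|_{S_1(\B,S_\alpha(\A))},
\]
and uses that the amalgamated $S_1(\B,S_\alpha(\A))$ norms form a complex interpolation scale (Junge--Parcet). One then runs Riesz--Thorin on a \emph{single} analytic family $F(z)$ built only from powers of $\rho_\A$ and $\rho_{\A\B}$; no $\sigma(z)$ ever appears because the optimal $\sigma_\B$ at each order is already encoded in the norm. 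After reparametrizing the interpolation exponent as $y$ with $1/\alpha=(1-y)/\alpha_0+y/\alpha_1$, the resulting inequality unwinds to convexity affine in $\alpha=(1-\theta)\alpha_0+\theta\alpha_1$. The missing idea in your plan is therefore: do not interpolate the states, interpolate the norms.
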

\noindent [Go back to \hyperref[proof:simulation_arbitrary]{\textit{Proof of Proposition~\ref{prop:simulation_arbitrary}}}]

\begin{proof}[Proof of Lemma~\ref{lemm:convexity_order}] \label{proof:convexity_order}
The case of system $\A$ being classical has been shown in \cite[Theorem 11]{CGH19}. In the following, we adopt a similar proof technique via complex interpolation theory.

For all $\alpha\geq 1$, we let $\alpha' = \frac{\alpha}{\alpha-1}$ be its H\"older conjugate.
Fix $\alpha = (1-\theta) \alpha_0 + \theta \alpha_1$, $\theta \in [0,1]$, and $\alpha_0, \alpha_1 \geq 1$.
By the definition of the sandwiched \Renyi divergence, we write
\begin{align}
\inf_{\sigma_{\B}} \widetilde{D}_{\alpha} \left( \rho_{\A\B} \Vert \rho_{\A} \otimes \sigma_{\B} \right)
&= \frac{\alpha}{\alpha-1} \log \left\| \left( \rho_{\A}^{- \frac{1}{2\alpha'}} \otimes \sigma_{\B}^{- \frac{1}{2\alpha'}} \right) \rho_{\A\B} \left( \rho_{\A}^{- \frac{1}{2\alpha'}} \otimes \sigma_{\B}^{- \frac{1}{2\alpha'}} \right) \right\|_{\alpha}
\\
&=: \frac{\alpha}{\alpha-1} \log \left\|  \rho_{\A}^{- \frac{1}{2\alpha'}} \rho_{\A\B} \rho_{\A}^{- \frac{1}{2\alpha'}} \right\|_{S_1(\B, S_\alpha(\A))}, 
\end{align}
by using the notation of the amalgamated norm \cite{JP10}.
The desired convexity is then equivalent to 
\begin{align}
\left\|  \rho_{\A}^{- \frac{1}{2\alpha'}} \rho_{\A\B} \rho_{\A}^{- \frac{1}{2\alpha'}} \right\|_{S_1(\B, S_\alpha(\A))}
\leq 
\left\|  \rho_{\A}^{- \frac{1}{2\alpha_0'}} \rho_{\A\B} \rho_{\A}^{- \frac{1}{2\alpha_0'}} \right\|_{S_1(\B, S_{\alpha_0}(\A))}^{\frac{\alpha_0(1-\theta)}{\alpha}}
\left\|  \rho_{\A}^{- \frac{1}{2\alpha_1'}} \rho_{\A\B} \rho_{\A}^{- \frac{1}{2\alpha_1'}} \right\|_{S_1(\B, S_{\alpha_1}(\A))}^{\frac{\alpha_1\theta}{\alpha}}.
\end{align}

Denote $y = \frac{\alpha_1 \theta}{\alpha}$ and $1-y = \frac{\alpha_0(1-\theta)}{\alpha}$ such that $\frac{1}{\alpha} = \frac{1-y}{\alpha_0} + \frac{y}{\alpha_1}$.
We consider an analytic family of operators:
\begin{align}
    F: z \mapsto \mathtt{N}^y \mathtt{M}^{1-y} \rho_{\A}^{-\frac12(1-\frac{1-z}{\alpha_0}-\frac{z}{\alpha_1})} \rho_{\A\B} \rho_{\A}^{-\frac12(1-\frac{1-z}{\alpha_0}-\frac{z}{\alpha_1})},
\end{align}
where
\begin{align}
    \mathtt{M}
    &:= \left\| \rho_{\A}^{-\frac{1}{2\alpha_0'}} \rho_{\A\B} \rho_{\A}^{-\frac{1}{2\alpha_0'}} \right\|_{S_1(\B,S_{\alpha_0}(\A))}^{-1},
    \\
    \mathtt{N}
    &:= \left\| \rho_{\A}^{-\frac{1}{2\alpha_1'}} \rho_{\A\B} \rho_{\A}^{-\frac{1}{2\alpha_1'}} \right\|_{S_1(\B,S_{\alpha_1}(\A))}^{-1}.
\end{align}
We bound the boundary the map $F$ as follows:
\begin{align}
\left\|F(\mathrm{i}t)\right\|_{S_1(\B,S_{\alpha_0}(\A))} &= \left\| \mathtt{M}^{-\mathrm{i}t} \mathtt{N}^{\mathrm{i}t} \mathtt{N}^{-1} \rho_{\A}^{-\frac{1}{2\alpha_0}} \rho_{\A\B} \rho_{\A}^{-\frac{1}{2\alpha_0}} \right\|_{S_1(\B,S_{\alpha_0}(\A))}
\leq 1,
\\
\left\|F(1+\mathrm{i}t)\right\|_{S_1(\B,S_{\alpha_1}(\A))} &= \left\| \mathtt{M}^{\mathrm{i}t} \mathtt{N}^{-\mathrm{i}t} \mathtt{M}^{-1} \rho_{\A}^{-\frac{1}{2\alpha_1}} \rho_{\A\B} \rho_{\A}^{-\frac{1}{2\alpha_1}} \right\|_{S_1(\B,S_{\alpha_1}(\A))}
\leq 1.
\end{align}

By interpolation (Lemma~\ref{lemm:interpolation}) and the fact that  the amalgamated norm forms an interpolation spaces~\cite{JP10},
\begin{align}
    1 
    &\geq \left\|F(y)\right\|_{S_1(\B,S_{\alpha}(\A))}
    \\
    &= \left\| \mathtt{N}^y \mathtt{M}^{1-y} \rho_{\A}^{-\frac{1}{2\alpha'}} \rho_{\A\B} \rho_{\A}^{-\frac{1}{2\alpha'}} \right\|_{S_1(\A,S_{\alpha}(\B))}
    \\
    &= \mathtt{N}^y \mathtt{M}^{1-y} \left\|  \rho_{\A}^{-\frac{1}{2\alpha'}} \rho_{\A\B} \rho_{\A}^{-\frac{1}{2\alpha'}} \right\|_{S_1(\A,S_{\alpha}(\B))},
\end{align}
which translates to the desired convexity:
\begin{align}
\left\|  \rho_{\A}^{-\frac{1}{2\alpha'}} \rho_{\A\B} \rho_{\A}^{-\frac{1}{2\alpha'}} \right\|_{S_1(\A,S_{\alpha}(\B))}
\leq \mathtt{M}^{y-1} \mathtt{N}^{-y} .
\end{align}
\end{proof}

\begin{lemm}[Riesz--Thorin interpolation theorem  {\cite{BL76}}] \label{lemm:interpolation} Let $(X_0, X_1)$ and $(Y_0, Y_1)$ be two compatible couples of Banach spaces and let $(X_0, X_1)_\theta$ and $(Y_0, Y_1)_\theta$ be the corresponding complex interpolation space of exponent $\theta\in [0,1]$. Suppose $T:X_0 + X_1 \to Y_0 + Y_1$, is a linear operator bounded from $X_j$ to $Y_j$, $j=0, 1$. Then $T$ is bounded from $(X_0, X_1)_\theta$ to $(Y_0, Y_1)_\theta$, and moreover,
\[ \|T:(X_0, X_1)_\theta\to (Y_0, Y_1)_\theta\|_{}\leq \|T:X_0\to Y_0\|_{}^{1-\theta }\|T:X_1\to Y_1\|_{}^{\theta}\ ,\  \theta\in [0,1].\]
\end{lemm}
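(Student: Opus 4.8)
The plan is to run the classical complex-interpolation argument in its abstract \emph{Calder\'{o}n} form, which is essentially the only natural route at this level of generality. First I would recall the construction of the functor for a compatible couple $(X_0,X_1)$ (both spaces continuously embedded in a common Hausdorff topological vector space, so that $X_0+X_1$ and $X_0\cap X_1$ are Banach spaces): $(X_0,X_1)_\theta$ is the image under evaluation at $\theta$ of the Calder\'{o}n space $\mathcal{F}(X_0,X_1)$ of $(X_0+X_1)$-valued functions $f$ that are bounded and continuous on the closed strip $\overline{S}=\{z:0\le\Re z\le 1\}$, analytic in its interior, and with $t\mapsto f(j+\mathrm{i}t)$ a bounded continuous $X_j$-valued map for $j=0,1$; the norm is $\|x\|_\theta=\inf\{\max_{j}\sup_{t}\|f(j+\mathrm{i}t)\|_{X_j}:f(\theta)=x\}$, and likewise for $(Y_0,Y_1)_\theta$. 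The entire ``three lines'' mechanism is already packaged into this definition, so no separate Hadamard lemma is invoked.

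Write $M_j=\|T:X_j\to Y_j\|$, fix $\epsilon>0$, and set $M_j^\epsilon:=M_j+\epsilon>0$ (letting $\epsilon\downarrow0$ at the end also covers the degenerate case $M_j=0$). Fix $x\in(X_0,X_1)_\theta$; by homogeneity it suffices to treat $\|x\|_\theta<1$, so I would choose $f\in\mathcal{F}(X_0,X_1)$ with $f(\theta)=x$ and $\|f(j+\mathrm{i}t)\|_{X_j}<1$ for all $t$ and $j$. The two substantive steps are then: (i) check that $Tf\in\mathcal{F}(Y_0,Y_1)$---since $T$ is bounded on each endpoint space it extends to a bounded operator $X_0+X_1\to Y_0+Y_1$, so $z\mapsto T(f(z))$ inherits analyticity in the interior, boundedness and continuity on $\overline{S}$, with $\|T(f(j+\mathrm{i}t))\|_{Y_j}\le M_j\|f(j+\mathrm{i}t)\|_{X_j}\le M_j^\epsilon$; and (ii) introduce the scalar weight $w(z):=(M_0^\epsilon)^{z-1}(M_1^\epsilon)^{-z}$, which is entire, bounded, and non-vanishing on $\overline{S}$, and put $h(z):=w(z)\,T(f(z))\in\mathcal{F}(Y_0,Y_1)$. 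Because $|(M_j^\epsilon)^{\mathrm{i}t}|=1$, the boundary estimates collapse to $\|h(\mathrm{i}t)\|_{Y_0}\le(M_0^\epsilon)^{-1}M_0^\epsilon=1$ and $\|h(1+\mathrm{i}t)\|_{Y_1}\le 1$, so $\|h\|_{\mathcal{F}(Y_0,Y_1)}\le 1$ and hence $\|h(\theta)\|_{(Y_0,Y_1)_\theta}\le 1$. Since $w(\theta)=(M_0^\epsilon)^{\theta-1}(M_1^\epsilon)^{-\theta}$, i.e.\ $Tx=(M_0^\epsilon)^{1-\theta}(M_1^\epsilon)^{\theta}h(\theta)$, this gives $\|Tx\|_{(Y_0,Y_1)_\theta}\le(M_0+\epsilon)^{1-\theta}(M_1+\epsilon)^{\theta}$; letting $\epsilon\downarrow0$ and rescaling in $x$ yields $\|T:(X_0,X_1)_\theta\to(Y_0,Y_1)_\theta\|\le M_0^{1-\theta}M_1^{\theta}$.

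The only step I expect to require genuine care is (i): one must verify that $T\circ f$ really lies in the Calder\'{o}n space---bounded on the whole strip and, on each boundary line, continuous as a $Y_j$-valued map---which is routine but uses crucially that $T$ is simultaneously bounded on both endpoint spaces, hence on their sum. Everything analytic is otherwise absorbed into the definition of the interpolation norm, and the weight $w$ is the only genuine device. For the concrete couple $(L^{p_0},L^{p_1})$ this specialises precisely to Thorin's three-lines proof of the classical Riesz--Thorin theorem, which one could instead cite directly from \cite{BL76}.
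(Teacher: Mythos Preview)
The paper does not supply its own proof of this lemma: it is stated as a classical result with a citation to \cite{BL76} and used as a black box in the proof of Lemma~\ref{lemm:convexity_order}. Your argument is the standard Calder\'on complex-interpolation proof---precisely the one found in \cite{BL76}---and it is correct as written, including the $\epsilon$-regularisation to handle the degenerate case $M_j=0$ and the verification that $T\circ f$ inherits membership in $\mathcal{F}(Y_0,Y_1)$ from the boundedness of $T$ on $X_0+X_1$.
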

\noindent [Go back to \hyperref[proof:convexity_order]{\textit{Proof of Lemma~\ref{lemm:convexity_order}}}]


{\larger
\bibliographystyle{IEEEtran}
\bibliography{reference3.bib, operator3.bib}
}

\end{document}